\documentclass[11pt,twoside,letterpaper]{article}
\usepackage{hyperref}
\usepackage{amsmath}
\usepackage{amssymb}
\usepackage{amsfonts}
\usepackage{amsthm}
\usepackage{comment}
\usepackage{color}
\usepackage{subfig}
\usepackage{graphicx}

\def\cM{{\mathcal {M}}}

\def\cK{{\mathcal {K}}}

\def\p{\partial}

\def\R{{ \mathbb{R}}}

\def\N{{\mathbb{N}}}
\def\Z{{\mathbb{Z}}}

\def\cR{{\mathcal R}}

\def\cB{{\cal B}}

\def\cM{{\cal M}}

\def\br{{\bf r}}

\newtheorem{theorem}{Theorem}[section]
\newtheorem{lemma}[theorem]{Lemma}
\newtheorem{proposition}[theorem]{Proposition}

\newtheorem{corollary}[theorem]{Corollary}
\newtheorem{remark}[theorem]{Remark}

\theoremstyle{definition}
\newtheorem{definition}[theorem]{Definition}
\newtheorem{example}[theorem]{Example}

\newtheorem{assumption}[theorem]{Assumption}

\newtheorem*{remark*}{Remark}

\numberwithin{figure}{section} \numberwithin{equation}{section}

\title{Epidemic outbreaks in structured host populations}

\author{
Horst R. Thieme\thanks{({\tt hthieme@asu.edu})}
\\
School of Mathematical and Statistical Sciences
\\
Arizona State University,
Tempe, AZ 85287-1804, USA
}
\date{April 28, 2025}

\begin{document}

\maketitle


\begin{abstract}
For a heterogeneous host population,
the basic reproduction number of an infectious
disease, $\cR_0$, is defined as the spectral radius
of the next generation operator (NGO).
The threshold properties of the
basic reproduction number are typically
established by imposing conditions that make $\cR_0$  an
eigenvalue of the NGO associated with a positive
eigenvector and a positive eigenfunctional
(eigenvector of the dual of the NGO). More general results can
be obtained by imposing conditions that
associate $\cR_0$ just with a positive eigenfunctional.
The next generation operator is conveniently
expressed by a measure kernel or a Feller kernel which enables the use of analytic
rather than functional analytic methods.
\end{abstract}
{\bf Keywords:} force of infection,
tight measure kernels, topologically irreducible
kernels, dominated kernels,  solid cones

\medskip

\noindent
{\bf MSC Classification:} 92D30, 92D25,  28A25, 45H05, 45M99, 47J05, 47N20


\section{Introduction}
\label{sec:intro}

The emergence of Covid-19 has reignited the interest
in epidemic models of Kermack--McKen\-drick type.
See  \cite{BCCR, BoDI1, BoDI2, DeGrMa, DeMa, Diek20, DiIn, DiOt, dIM, dMI, DuMa, Eik, Gom, KrSc, LiMa, LiMaWe, LuSt-free, LuSt-bol, PoThar, SaVH, Tka1, Tka2} and the references therein. The host populations typically are heterogenous,
and it is of interest how the spread of the infection is affected by the structure of the host
population. It is one of the mathematical consequences
of host heterogeneity that the
basic reproduction number, $\cR_0$, is formulated
in a functional analytic way, as the spectral
radius of an appropriate positive bounded linear operator
on the   function  space for the force of infection.
In order to explore the threshold character of $\cR_0$,
the existing
literature typically uses that, under appropriate
assumptions, the spectral radius
is an eigenvalue of the next generation operator
associated with a positive eigenvector and a positive
eigenfunctional (eigenvector of the dual opeator).
Persistence theory \cite{SmTh, Thi24}
teaches the lesson
that it can be sufficient if the spectral
radius is just associated with a positive eigenfunctional  \cite[Thm.7]{Kar} \cite[App.2.6]{Sch59}.

It is one of the aims of this article to show how
the existence of an eigenfunctional  plays out for epidemic models in structured host populations, in
particular in getting away with less compactness
assumptions and, with the exception of one
fundamental theorem (Section \ref{subsec:Krein}), the use of real rather than
functional analysis.

\subsection{Scent of the model}
\label{subsec:model-intro}

The structure of the host population is described
by a nonempty set $\Omega$ of structural characteristics of the hosts, $x$, also called traits, which do not
change over time. Only such static traits are considered in this paper.
One of the first population structures that have been considered
appears to be associated with the spatial spread of the
epidemic, and the trait of the host is spatial location. Some of the respective models have been formulated
so generally that they apply to general structures
\cite{Die78, Thi77}.

Let us assume that there is a function
$s: \R_+ \times \Omega \to (0,1]$
such that $s (t,x) $ is the probability that
a typical host with trait $x$ that is susceptible at
time 0 (the begin of the epidemic) will still  be susceptible at time $t>0$ \cite{BoDI1}. In an epidemic scenario,
where infected hosts do not become susceptible,
$s(\cdot,x)$ is a decreasing function of $t$
and $s(0,x) =1$. Define
\begin{equation}
J(t,x) = - \ln s(t,x).
\end{equation}
Then $J(\cdot,x)$ is increasing, $J(0,x)=0$, and
\begin{equation}
\label{eq:incid-intro}
s(t,x) = e^{-J(t,x)}.
\end{equation}
$J$ is called the {\it cumulative force of infection}, cFoI,
up to time $t$. Under appropriate assumptions,
\begin{equation}
\label{eq:cum-force-intro}
J(t,x ) = \int_0^t I(r,x) dr,
\end{equation}
resulting in the differential equation
\begin{equation}
\label{eq:incidence-intro}
\p_t s(t,x) = - s(t,x) I(t,x),
\end{equation}
with a non-negative function $I: \R_+ \times \Omega
\to \R$. $I(t,x)$ is called the {\it force of infection}, FoI, exerted on
susceptible hosts of trait $x$  at time $t$ by all presently infected hosts. Cf. \cite{BuCo, DiHeBr, IaMi, Inabook, Mar}.

Some more equations will connect  $s$ and $I$
(Section \ref{sec:model}),
which we do not present here because they
are quite complex.
In the spirit of Kendrick and McKermack
\cite{KeMcK1, KeMcK2, KeMcK3},
we will assume that the infectivity of an
infected host does not only depend on its trait
but also on its infection age (a special form of
class age \cite{IaMi, Thi77}, the time that
has passed since the moment of being infected
\cite{BCCR, BoDI1, DiHeBr, Inabook, Ina23, LYMar, MaRu, Mar}). Differently from \cite{BoDI1}, we do not
assume that there is a prescribed prehistory
of the epidemic, but similarly to \cite{Die78, Ina23, RaRa, Thi77} we assume the scenario
that the disease is introduced into the population
at some time $t_0$ which is normalized to $t_0=0$.
This seems to be more appropriate for emerging diseases like Covid-19.

By its interpretation, it is suggestive
that the {cFoI} $J(t,x)$ is an increasing function of $t$. If is bounded, we can define the final size of
the cFoI,
\begin{equation}
\label{eq:final-size-intro}
w(x) = \lim_{t \to \infty} J(t,x), \qquad x \in \Omega.
\end{equation}
By (\ref{eq:incid-intro}),
\begin{equation}
\label{eq:final-size-susc-intro}
s_\infty(x) := \lim_{t\to \infty}s(t,x) = e^{-w(x)},
\qquad
x \in \Omega.
\end{equation}

\subsection{Overview}
\label{subsec:overv}
The final size of the cFoI, $w$ in (\ref{eq:final-size-intro}) satisfies
a Hammerstein equation with a measure kernel.
See Section \ref{sec:basic}, where we also introduce
the concept of a next generation operator (NG0)
and of the basic reproduction number $\cR_0$ as its
spectral radius.
In Section \ref{subsec:preview}, we
give a preview  of threshold results for $\cR_0$,
which are discussed in Section \ref{subsec:disc-intro}.
We also give the elementary proof of a fundamental threshold theorem.
In Section \ref{sec:model}, we derive integral equations of Volterra Hammerstein form for the FoI and for the cFoI.
In Section \ref{sec:existence}, following
\cite{Thi77}, we show the
existence of minimal solutions to these equations.
Minimal solutions are unique by their nature, and we argue that they are the epidemiologically relevant solutions.
In Section \ref{sec:final-size}, again following
\cite{Thi77}, we derive
 Hammerstein equations with measure kernels for the final size
of the cFoI and prove the existence of minimal
solutions.
In  Section \ref{sec:intermed-thres} and \ref{sec:semi-sep},
we show threshold theorems
for dominated and semi-separable measure kernels,
and in Section \ref{sec:metric}
 for Feller kernels in  case that $\Omega$
 is a metric space.

\section{Some basic concepts}
\label{sec:basic}

In this section, we get to see the equation for the final size of the cFoI, $w$ in (\ref{eq:final-size-intro}), and
we introduce
the concept of a next generation operator (NGO)
and of the basis reproduction number $\cR_0$ as its
spectral radius.
\subsection{The final size of the epidemic and measure kernels}
\label{subsec:final-size-intro}

Whereas  the equation for the cumulative force of infection $J$ is a quite complicated Volterra
Hammerstein equation,
the equation for its final size $w$ boils down to a Hammerstein equation
\begin{equation}
\label{eq:final-size-eq-intro}
w(x) - w^\circ(x) = \int_\Omega f(w(\xi)) \kappa( d \xi,x)=: F(w)(x),  \qquad x \in \Omega.
\end{equation}
Cf. \cite[(31)]{Ina23}. While $w$ is the final size of the cFoI exerted by all infected hosts,
$w^\circ$ is the final size of the cFoI exerted
by the hosts that were infected at the beginning
of the epidemic.
Here,
\begin{equation}
\label{eq:skellam-intro}
f(y) = 1 - e^{-y}, \qquad y \in \R_+.
\end{equation}
$\Omega$ is a measurable space with a $\sigma$-algebra
$\cB$, e.g., an open or closed subset of $\R^n$.
Further, $\kappa: \cB \times \Omega \to \R_+$
is a measure kernel. Let us explain.

Let $\cM(\Omega)$ be the vector space of
 real-valued measures on $\cB$ and $\cM_+(\Omega)$
the order cone of nonnegative measures.
See \cite{Thi19}\cite[Chap.33]{Thi24} for a population oriented
introduction.

Further, let $M^b(\Omega)$ be
 the Banach space of bounded measurable functions with
the supremum norm, $\|\cdot\|_\infty$,  and $M^b_+(\Omega)$ be
the closed order cone of nonnegative functions.
The final sizes $w$ and $w^\circ$ are elements of
$M^b_+(\Omega)$.

Now, $\kappa$ is a {\it measure kernel}
\cite{Thi19}\cite[Sec.13.2]{Thi24}  if
\begin{equation}
\label{eq:kernel-meas-intro}
 \kappa(\cdot, x) \in \cM_+(\Omega), \quad x \in \Omega,
\qquad
 \kappa( \omega, \cdot) \in M^b_+(\Omega), \quad
\omega \in \cB.
\end{equation}
Measure kernels are considered for a uniform
treatment of the initial-value problem version
of the model in \cite{BoDI1} and of some special cases of \cite[Sec.3]{Thi77}. Notice that
both the frameworks in \cite{BoDI1} and in \cite{Thi77} allow the consideration of continuous and discrete population structure.
We will show (Section \ref{sec:final-size}) that the final size $w$ defined by (\ref{eq:final-size-intro}) is the minimal
solution of (\ref{eq:final-size-eq-intro}),
given as the pointwise limit $w = \lim_{n \to
 \infty} w_n$ of the recursion
\begin{equation}
\label{eq:recursion-intro}
w_{n} = F(w_{n-1}) + w^\circ, \quad  n \in \N,
\qquad
w_0 = w^\circ.
\end{equation}
By induction, $(w_n)$ is a increasing sequence
of functions in $M^b_+(\Omega)$.
Cf. \cite[Sec.3]{Thi77}.


\subsection{Next generation operator
and basic reproduction number}

The event of no infection is represented by
the zero function.
The linearization of $F$ in (\ref{eq:final-size-eq-intro}) at the zero function
is the positive bounded linear operator $K: M^b(\Omega) \to M^b(\Omega)$ given by
\begin{equation}
\label{eq:lin-approx-intro}
(Kg)(x) = \int_\Omega g(\xi) \kappa(d \xi,x),
\qquad
x \in \Omega, \quad g \in M^b(\Omega).
\end{equation}
More precisely,
$K$ is the order-derivative of $F$
\cite{Thi19}\cite[Sec.14.2]{Thi24}:
By the mean value theorem applied to $f$ in
(\ref{eq:skellam-intro}),
since $f'$ is decreasing,
\begin{equation}
\label{eq:skellam-deriv}
f(y) \ge y f'(y) = y e^{-y}, \qquad y \in \R_+,
\end{equation}
\begin{equation}
\label{eq:order-der}
f'\big(\|g\|_\infty \big)\, Kg \le F(g) \le Kg,
\quad
g \in M^b_+(\Omega), \qquad f'(0) =1.
\end{equation}
\begin{remark}
\label{re:final-size-lin}
Even more precisely, by (\ref{eq:skellam-deriv}),
\[
F(g) (x) \ge \int_\Omega f'(g(\xi))\, g(\xi) \,\kappa(d \xi,x), \qquad x \in \Omega , \quad g \in M^b_+(\Omega).
\]
By (\ref{eq:final-size-susc-intro}), if $w$ is the final
size of the cFoI,
\[
F(w)(x) \ge \int_\Omega w(\xi) s_\infty(\xi) \kappa(d\xi,x),
\qquad x \in \Omega.
\]
\end{remark}
In our epidemiological context, $K$ plays the role of a {\em next generation operator}
(with generation to be understood in an epidemiological sense). See Section \ref{subsec:measure-kernels} for more details.

The {\em basic reproduction number of the epidemic}, $\cR_0$, is defined
as the spectral radius of $K$, which is given
by the Gelfand formula
\begin{equation}
\label{eq:spec-rad-Gel-intro}
\cR_0 = \br (K) = \inf_{n \in \N} \|K^n\|^{1/n}
=
\lim_{n\to \infty} \|K^n\|^{1/n}.
\end{equation}
Here $K^n$ is the $n$-fold composition (iterate)
of $K$ with itself and $\|K^n\|$ its operator norm.
See \cite[A.3.2]{Thi24} for a proof of the
last equality that also holds if $K$ is not linear
but only homogeneous.

 We also call $\cR_0$
the spectral radius of the measure kernel $\kappa$,
\begin{equation}
\label{eq:specrad-kernel}
\cR_0 = \br (\kappa).
\end{equation}
$\cR_0$ will turn out to be an epidemic threshold
parameter for the type of models we will consider. Cum grano salis, epidemic outbreaks can occur
if $\cR_0 >1$, and do not occur if $\cR_0 <1$.


\section{Preview of threshold results}
\label{subsec:preview}

In the following, we will give a glimpse of
the epidemic threshold character of the
basic reproduction number, $\cR_0$.

\subsection{A fundamental result from positive operator
theory}
\label{subsec:Krein}

It follows from the definition of a measure kernel,
(\ref{eq:kernel-meas-intro}),
that
$u: \Omega \to \R_+$,
\begin{equation}
\label{eq:order-unit}
u(x) =  \kappa(\Omega,x), \quad x \in \Omega,
\end{equation}
is an element of $M^b_+(\Omega)$.
The linear operator $K$ on $M^b(\Omega)$
given by (\ref{eq:lin-approx-intro})
is uniformly $u$-bounded, i.e.,
\begin{equation}
\label{eq:order-bounded}
Kg \le \|g\|_\infty u, \qquad g \in M^b_+(\Omega).
\end{equation}
Further, the constant function $u_1(x) =1$
is an interior point of $M^b_+(\Omega)$,
\begin{equation}
\label{eq:solid}
|g(x)| \le \|g\|_\infty u_1(x), \qquad x \in \Omega, \quad g \in M^b(\Omega).
\end{equation}
After these preparations, the following theorem
is a special case of more abstract ones for
positive bounded linear operators on
ordered Banach spaces with a solid normal cone
{\cite[Thm.7]{Kar} \cite{KrRu} \cite[App.2.6]{Sch59}\cite[Cor.11.17]{Thi24}}.

\begin{theorem}
\label{re:Krein}
If $\cR_0 >0$, there exists  a
 bounded linear positive eigenfunctional $\theta: M^b(\Omega) \to \R$ of $K$ (and of $\kappa$) such that $\theta (K g)= \cR_0 \, \theta(g) $ for all $g \in M^b(\Omega)$, $\theta(g) > 0$ if $g \in M^b_+(\Omega)$ and $g \ge \delta u$
 for some $\delta >0$.
\end{theorem}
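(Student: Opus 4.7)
The plan is to obtain $\theta$ as a Krein--Rutman type eigenfunctional via a resolvent construction carried out in the weak-$*$ topology of the dual space. The setup is already in place: the cone $M^b_+(\Omega)$ is normal, since $0 \le f \le g$ forces $\|f\|_\infty \le \|g\|_\infty$, and solid with $u_1 \equiv 1$ as interior point by (\ref{eq:solid}). The crucial dual consequence is that every positive linear $\theta$ on $M^b(\Omega)$ attains its norm at $u_1$: from $\pm g \le \|g\|_\infty u_1$ pointwise one gets $|\theta(g)| \le \|g\|_\infty \theta(u_1)$, so $\|\theta\| = \theta(u_1)$.

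First I would fix a positive functional $\theta_0$ in the dual cone $C^* := \{\theta \in (M^b(\Omega))^* : \theta \ge 0\}$ and, for each $\lambda > \cR_0$, form the Neumann resolvent
\[
\theta_\lambda := (\lambda I - K^*)^{-1}\theta_0 = \sum_{n=0}^\infty \lambda^{-n-1}(K^*)^n \theta_0 \in C^*,
\]
convergent in operator norm. Normalize $\phi_\lambda := \theta_\lambda/\theta_\lambda(u_1)$, so that $\phi_\lambda(u_1) = 1 = \|\phi_\lambda\|$. By Banach--Alaoglu and the weak-$*$ closedness of $C^*$, extract a weak-$*$ cluster point $\theta \in C^*$ as $\lambda \downarrow \cR_0$; weak-$*$ convergence evaluated at $u_1$ gives $\theta(u_1) = 1$, so in particular $\theta \ne 0$. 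Rewriting the resolvent identity as
\[
K^* \phi_\lambda \;=\; \lambda\, \phi_\lambda \;-\; \theta_\lambda(u_1)^{-1}\theta_0
\]
and passing to the weak-$*$ limit yields the eigenfunctional equation $\theta(Kg) = \cR_0 \theta(g)$ for every $g \in M^b(\Omega)$, provided one knows that $\theta_\lambda(u_1) \to \infty$ along the chosen net.

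Establishing this divergence is the main obstacle, and it is the one place where abstract positive operator theory is genuinely needed; this is why Section~\ref{subsec:Krein} is flagged as the single fundamental theorem of the paper. The underlying fact is that $\cR_0 \in \sigma(K^*)$, a classical Krein--Rutman statement for positive operators on Banach spaces with a normal cone. Concretely, $\|(\lambda I-K)^{-1}\|$ blows up as $\lambda \downarrow \cR_0$, and since positive operators on ordered Banach spaces attain their norm along the cone, one can choose $\theta_0$ so that $\|\theta_\lambda\| = \theta_\lambda(u_1) \to \infty$. Alternatively, if $\theta_\lambda(u_1)$ remained bounded along some sequence $\lambda_n \downarrow \cR_0$, a weak-$*$ cluster point would produce a positive $\psi \in C^*$ solving $(\cR_0 I-K^*)\psi = \theta_0$, whence iterating $\psi = \cR_0^{-1}(\theta_0 + K^*\psi)$ together with $\cR_0 \in \sigma(K^*)$ yields a contradiction.

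Strict positivity closes the argument without further work: by (\ref{eq:order-unit}), $Ku_1 = u$, so evaluating the eigenfunctional relation at $g = u_1$ gives $\theta(u) = \cR_0\theta(u_1) = \cR_0 > 0$; hence for any $g \ge \delta u$ with $\delta > 0$, positivity of $\theta$ yields $\theta(g) \ge \delta\theta(u) = \delta\cR_0 > 0$.
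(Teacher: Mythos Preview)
The paper does not prove Theorem~\ref{re:Krein}; it invokes it as a special case of abstract results for positive operators on ordered Banach spaces with a solid normal cone, citing \cite[Thm.7]{Kar}, \cite{KrRu}, \cite[App.2.6]{Sch59}, and \cite[Cor.11.17]{Thi24}. The only argument the paper supplies is the brief verification after the statement that $\theta(u)>0$, carried out by contradiction from (\ref{eq:order-bounded}).

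Your proposal goes further and sketches the resolvent/Banach--Alaoglu construction that underlies those cited references, so in that sense you are filling in what the paper deliberately outsources. The skeleton is right, and you correctly isolate the genuine analytic input: $\cR_0\in\sigma(K)$ for a positive operator on a space with normal generating cone. Two loose ends are worth flagging. First, the passage from $\|(\lambda I-K)^{-1}\|\to\infty$ to ``one can choose $\theta_0$ so that $\theta_\lambda(u_1)\to\infty$'' is not automatic: the resolvent norm equals $\|R_\lambda u_1\|_\infty$ here, but a fixed positive functional $\theta_0$ need not detect where $R_\lambda u_1$ blows up, so this step still requires an argument (or a different organization of the limit). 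Second, your ``alternative'' route is incomplete as written: exhibiting a single $\psi$ with $(\cR_0 I-K^*)\psi=\theta_0$ for one particular $\theta_0$ does not contradict $\cR_0\in\sigma(K^*)$, since membership in the spectrum only denies bounded invertibility, not solvability for one right-hand side. Your final computation $\theta(u)=\theta(Ku_1)=\cR_0\,\theta(u_1)=\cR_0>0$ is a clean replacement for the paper's contradiction argument.
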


The inequality $\theta(u) > 0$ follows from (\ref{eq:order-bounded}).
Otherwise, since $\theta$ is increasing and linear, \[
\cR_0 \theta(g) = \theta (Kg) \le \|g\|_\infty \, \theta (u) =0, \qquad g \in M^b_+(\Omega),
\]
and $\theta$ would be the zero functional.

The functional $\theta$ can rightfully be called
also an eigenfunctional of the measure kernel
$\kappa$ because it is characterized by
\begin{equation}
\label{eq:eigenfunc-kappa}
\theta (\kappa(\omega, \cdot))= \cR_0 \, \theta(\chi_\omega), \qquad \omega \in \cB,
\end{equation}
where $\chi_\omega$ is the indicator or characteristic
function of the set $\omega$, $\chi_\omega(\xi) =1 $
if $\xi \in \omega$ and $\chi_\omega(\xi) =0$ if $\xi
\in \Omega \setminus \omega$.

Generalizations of Theorem \ref{re:Krein}
to increasing bounded homogeneous  operators
can be found in \cite{Thi16} \cite[Chap.11]{Thi24}.


\subsection{A general but weak threshold result}
\label{subsec:general-thres}

Persistence theory
teaches that eigenfunctionals of the NGO are as important tools as eigenvectors \cite{SmTh, Thi24}.

\begin{theorem}
\label{re:epi-occur}
Let $w \in M^b_+(\Omega)$ be the minimal solution of $w = F(w) + w^\circ$
 which is uniquely
determined by $w^\circ \in M^b_+(\Omega)$.

\begin{itemize}

\item[(a)] Let  $\cR_0 <1$. Then there exists
a constant $c>0$ (independent of $w$ and $w^\circ$)
such that $\|w\|_\infty \le c \|w^\circ\|_\infty$.

\item[(b)] Let $\cR_0 > 1$   and $\theta$ be the eigenfunctional of $K$ from Theorem \ref{re:Krein}. If $\theta(w) > 0$ (in particular if $\theta(w^\circ) >0$),
then $\|w - w^\circ\|_\infty \ge \ln \cR_0$ and $ \inf_{\Omega} s_\infty \le 1/\cR_0$ for $s_\infty$ in (\ref{eq:final-size-susc-intro}).
\end{itemize}
\end{theorem}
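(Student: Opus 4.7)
The central analytic tool for both parts is the sandwich bound (\ref{eq:order-der}),
\[
e^{-\|g\|_\infty}\,Kg \;\le\; F(g) \;\le\; Kg, \qquad g \in M^b_+(\Omega),
\]
which pins the nonlinear $F$ between two scalar multiples of the linear NGO $K$. For part (a) the upper half of the sandwich gives $w \le Kw + w^\circ$, i.e., $(I-K)w \le w^\circ$. Since $\br(K) = \cR_0 < 1$, the Gelfand formula (\ref{eq:spec-rad-Gel-intro}) makes the Neumann series $\sum_{n=0}^\infty K^n$ converge in operator norm to a bounded inverse $(I-K)^{-1}$, which is positive (hence order-preserving) because every $K^n$ is positive. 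Applying this inverse to $(I-K)w \le w^\circ$ yields $w \le (I-K)^{-1} w^\circ$, so taking sup norms gives the statement with $c := \|(I-K)^{-1}\|$.

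For part (b), I would set $v := w - w^\circ = F(w) \in M^b_+(\Omega)$ and use the eigenfunctional $\theta$ from Theorem \ref{re:Krein}. First, I would show $\theta(v) > 0$ using the lower sandwich bound and $\theta\circ K = \cR_0 \theta$:
\[
\theta(v) \;=\; \theta(F(w)) \;\ge\; e^{-\|w\|_\infty}\,\theta(Kw) \;=\; e^{-\|w\|_\infty}\cR_0\,\theta(w) \;>\; 0,
\]
which uses the hypothesis $\theta(w) > 0$. Next, because $w = v + w^\circ \ge v$ and $f$ is increasing, $v = F(w) \ge F(v) \ge e^{-\|v\|_\infty} Kv$. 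Applying $\theta$ and dividing by the strictly positive $\theta(v)$ gives $e^{-\|v\|_\infty}\cR_0 \le 1$, i.e., $\|w - w^\circ\|_\infty = \|v\|_\infty \ge \ln\cR_0$. Since $w \ge v \ge 0$, also $\|w\|_\infty \ge \ln\cR_0$, and therefore $\inf_\Omega s_\infty = e^{-\|w\|_\infty} \le 1/\cR_0$.

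The delicate point I anticipate is the choice of anchor in part (b): the lower sandwich bound has to be applied to the excess $v$ rather than to $w$. Anchoring on $w$ produces an inequality mixing $\|w\|_\infty$ with $\theta(w)$ and $\theta(v) = \theta(w) - \theta(w^\circ)$, so the eigenfunctional values cannot be cancelled to produce the clean threshold $\ln\cR_0$. The monotonicity observation $v \ge F(v)$, which makes $v$ a subsolution of the fixed-point equation with vanishing source, is exactly what lets the cancellation go through.
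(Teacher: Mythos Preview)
Your proof is correct and follows essentially the same route as the paper's. For part~(a) you work directly with the minimal solution $w$ via $(I-K)w \le w^\circ$ and invert using the positive Neumann series, whereas the paper carries the inequality through the defining recursion $w_{n+1} \le K w_n + w^\circ$ before passing to the limit; both arrive at $c = \|\sum_{j\ge 0} K^j\|$. For part~(b) your argument is identical to the paper's (the paper writes $\tilde w$ for your $v$), including the two-step use of the lower sandwich bound---first on $w$ to get $\theta(v)>0$, then on $v$ itself to cancel $\theta(v)$ and extract $\|v\|_\infty \ge \ln\cR_0$---and your commentary on why anchoring on $v$ rather than $w$ is essential matches the paper's reasoning exactly.
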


Notice that there are no irreducibility or
compactness assumptions for $K$. As a trade-off,
there is not so much information as in other
threshold results (see \cite{BoDI1, Ina23} and
the Sections (\ref{subsec:dominated-intro}) to
(\ref{subsec:metric-intro}). But one
definitely sees that it makes a difference
whether $\cR_0 <1$ or $\cR_0 >1$.

This result is  fundamental for our approach and the proof is
easy (once we take Theorem \ref{re:Krein} for
granted) so that we give it right here.

\begin{proof} (a) By (\ref{eq:order-der}) and
(\ref{eq:recursion-intro}),
\[
w_{n+1} \le K w_n + w^\circ, \quad n \in \N.
\]
By induction,
\[
w_n \le \sum_{j=0}^n K^j w^\circ, \quad n \in \N.
\]
Since $\cR_0 <1$,  $K_\infty =\sum_{j=0}^\infty K^j$
exists with the series converging even in operator norm
and is a bounded linear operator on $M^b(\Omega)$,
\[
w_n(x) \le (K_\infty w^\circ)(x), \quad n \in \N,
\quad x \in \Omega.
\]
Taking pointwise limits, $w = \lim_{n\to \infty} w_n$,
\[
w \le K_\infty w^\circ
\quad \hbox{ and } \quad
\|w\|_\infty \le \|K_\infty\| \, \|w^\circ\|_\infty.
\]

(b) Assume $\theta(w) > 0$. Let $\tilde w = w - w^\circ$.
By  (\ref{eq:order-der}),
\[
\tilde w = F(w) \ge f'(\|w\|_\infty) K(w).
\]
Since $\theta$
is an eigenfunctional of $K$ associated with $\cR_0$ and $\theta $ is increasing,
\[
\theta (\tilde w) \ge f'(\|w\|_\infty)\theta K(w)= e^{-\|w\|_\infty} \cR_0 \, \theta(w) > 0.
\]
Since $F$ is increasing, again by (\ref{eq:order-der}),
\[
\tilde w \ge F(\tilde w)\ge f'(\|\tilde w\|_\infty) K(\tilde w)
\]
and, by applying $\theta$,
\[
\theta(\tilde w) \ge f'(\|\tilde w\|_\infty) \cR_0 \,
\theta (\tilde w).
\]
Since $\theta(\tilde w) > 0$,
$ f'(\|\tilde w\|_\infty)\le 1/\cR_0$
and $\|\tilde w\|_\infty
\ge \ln \cR_0.$
By (\ref{eq:final-size-susc-intro}),
\[
\inf_{x\in \Omega} s_\infty(x)
=
e^{-\|w\|_\infty} \le f'(\|\tilde w\|_\infty)\le 1/\cR_0.
\]
Since $\theta$ is increasing (linear and nonnegative), $\theta(w) \ge \theta(w^\circ)$.
\end{proof}

\begin{corollary}
\label{re:fixpoint-estim}
If $\cR_0 > 1$, $\tilde w \in M^b_+(\Omega)$,
$\tilde w = F(\tilde w)$ and $\theta(\tilde w )
>0$, then $\|\tilde w \|_\infty \ge \ln \cR_0$.
\end{corollary}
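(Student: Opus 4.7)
The plan is to mimic the second half of the proof of Theorem \ref{re:epi-occur}(b), since the hypothesis $\theta(\tilde w) > 0$ is exactly what was needed there and here it is given outright. The fixed-point equation $\tilde w = F(\tilde w)$ plays the role of the inequality $\tilde w \ge F(\tilde w)$ used before (in fact an equality, so it is even easier), and $w^\circ$ drops out of the picture entirely.

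First, I would apply the order-derivative bound (\ref{eq:order-der}) to $\tilde w \in M^b_+(\Omega)$. Since $\tilde w = F(\tilde w)$, this yields
\[
\tilde w = F(\tilde w) \ge f'(\|\tilde w\|_\infty)\, K \tilde w = e^{-\|\tilde w\|_\infty}\, K \tilde w.
\]
Next, I would apply the positive eigenfunctional $\theta$ from Theorem \ref{re:Krein}. Since $\theta$ is linear and increasing, and $\theta(K \tilde w) = \cR_0\, \theta(\tilde w)$, this gives
\[
\theta(\tilde w) \;\ge\; e^{-\|\tilde w\|_\infty} \cR_0\, \theta(\tilde w).
\]
Finally, since $\theta(\tilde w) > 0$ by assumption, I can divide and obtain $1 \ge e^{-\|\tilde w\|_\infty}\cR_0$, i.e., $e^{-\|\tilde w\|_\infty} \le 1/\cR_0$; taking logarithms (using $\cR_0 > 1$) yields $\|\tilde w\|_\infty \ge \ln \cR_0$.

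There is no real obstacle here — Theorem \ref{re:Krein} supplies the eigenfunctional, (\ref{eq:order-der}) supplies the lower bound on $F$, and the hypothesis $\theta(\tilde w) > 0$ allows the division. The only thing to be careful about is that $f'(\|\tilde w\|_\infty) = e^{-\|\tilde w\|_\infty}$ is strictly positive, so no degenerate case arises; the argument runs verbatim.
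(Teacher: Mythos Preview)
Your proposal is correct and follows essentially the same approach as the paper: the corollary is stated without a separate proof because it is the second half of the proof of Theorem~\ref{re:epi-occur}(b) applied verbatim, with the fixed-point equation $\tilde w = F(\tilde w)$ replacing the inequality $\tilde w \ge F(\tilde w)$ and the hypothesis $\theta(\tilde w)>0$ given directly rather than deduced. Your identification of the relevant steps and the role of each hypothesis is accurate.
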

The next observation seems to be mainly the same as the one
in \cite[Thm.5.6]{BoDI1} that the basic reproduction
number associated with the situation after the epidemic outbreak is less than one.

\begin{remark}
\label{re:kernel-final-susc-intro}
Let the final sizes of the cumulative forces
of infection be described by the minimal solution
of $w = F(w) + w^\circ$.

Then  $s_\infty(x)= e^{-w(x)}$ is the  probability
that a host with trait $x$ that is susceptible at the beginning is still susceptible at the end of the epidemic,
(\ref{eq:final-size-susc-intro}).
Define the measure kernel $\kappa_\infty$ by
\[
\kappa_\infty(\omega, x) = \int_\omega s_\infty (\xi)
\kappa(d \xi, x),
\qquad
\omega \in \cB, \quad x \in \Omega.
\]
Let $\cR_\infty= \br (\kappa_\infty)$ be the spectral radius of $\kappa_\infty$ and $\theta_\infty$ be the eigenfunctional associated with $\kappa_\infty$
and $\cR_\infty$, analogously to (\ref{eq:eigenfunc-kappa}).
Then, if  $\theta_\infty(w^\circ) >0$, we have $\cR_\infty < 1$.
\end{remark}

See Remark \ref{re:kernel-final-susc} for more details.

\begin{proof}
We can assume that  $\cR_\infty > 0$,
which is the spectral radius of the operator
induced by $\kappa_\infty$ analogously to
(\ref{eq:lin-approx-intro}), and let  $\theta_\infty$ be the respective eigenfunctional of that operator associated with $\cR_\infty$
via Theorem \ref{re:Krein} and characterized by
(\ref{eq:eigenfunc-kappa}) with $\kappa_\infty$
replacing $\kappa$.
By Remark \ref{re:final-size-lin},
\[
w(x) \ge \int_\Omega w(\xi) \kappa_\infty(d\xi,x)
+ w^\circ(x).
\]
We apply the linear positive functional $\theta_\infty$ to this equation,
\[
\theta_\infty(w) \ge \cR_\infty \theta_\infty(w) + \theta_\infty(w^\circ).
\]
Since $\theta_\infty(w^\circ) >0$ by assumption,
\[
\theta_\infty(w) > \cR_\infty \theta_\infty(w),
\qquad
\theta_\infty(w) > 0,
\]
and $\cR_\infty <1 $.
\end{proof}
In the following, we explore stronger assumptions
and stronger results in the cases that the kernel
$\kappa$ is dominated or even semi-separable or that $\Omega$ is
a metric space and $\kappa$ is a tight
Feller kernel that is topologically irreducible
 \cite{Thi19}\cite[Chap.13]{Thi24}.

In particular, in Remark \ref{re:kernel-final-susc-intro},
the statement $\cR_\infty < 1$ follows under
the brute force assumption $\theta_\infty(w^\circ) >0$.
See Theorem \ref{re:eigenmeasure-intro}
and  Proposition \ref{re:theta-pos}
 for conditions under which this assumption holds,
 for instance if $\kappa$ is topologically irreducible (Definition \ref{def:Feller}) and $w^\circ $ is continuous and not the zero function.
The respective conditions are transmitted from $\kappa$
to  $\kappa_\infty$.

$\cR_\infty < 1$ may not hold if the initial
infectives have their traits in a strict subset of $\Omega$ and there is no transmission from hosts with traits in that strict
subset to hosts with traits in its complement.


\subsection{Dominated measure kernels}
\label{subsec:dominated-intro}

A measure kernel $\kappa$ is called {\em dominated}
by a measure $0\ne \nu \in \cM_+(\Omega)$
if
\begin{equation}
\label{eq:kernel-dom-intro}
\kappa(\omega,x) \le \nu(\omega), \qquad \omega \in\cB, \quad x \in \Omega.
\end{equation}
The model we will consider in Sections \ref{sec:model} to \ref{sec:final-size}
 typically involves a dominated measure
kernel.

\begin{theorem}
\label{re:dominated}
Let the measure kernel $\kappa$ be dominated
by a measure $\nu$ and $\cR_0>1$.
Let $w^\circ \in M^b_+(\Omega)$
and $w$ be the  minimal solution
to $w = F(w) + w^\circ$.

Then there exists some solution $\tilde w = F(\tilde w)$ in $M^b_+(\Omega)$ with the following properties:
\begin{itemize}
\item[(a)]
 $\int_\Omega (f \circ \tilde  w) d\nu \ge \ln \cR_0$ and $\|\tilde w\|_\infty \ge \ln \cR_0$.
\item[(b)] If $\theta(w^\circ) >0$, then
 $w - w^\circ \ge \tilde  w$
 and $\ln \cR_0 \le
\int_\Omega (1 - s_\infty) d \nu.$
\end{itemize}
\end{theorem}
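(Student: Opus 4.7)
My plan is to produce a nontrivial fixed point $\tilde w = F(\tilde w)$ by a monotone downward iteration, then read off the sup-norm bound in (a) from Corollary~\ref{re:fixpoint-estim} and the integral bound from the domination $\kappa \le \nu$, and derive (b) via the comparison $\tilde w \le w - w^\circ$ together with $s_\infty = e^{-w}$.

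For the construction, set $\tilde w_0 := F(w) = w - w^\circ$ and $\tilde w_{n+1} := F(\tilde w_n)$. Since $w^\circ \ge 0$ forces $F(w) \le w$ and $F$ is monotone, the sequence is pointwise nonincreasing and uniformly bounded by $\tilde w_0$. Dominated convergence inside $F(\tilde w_n)(x) = \int f(\tilde w_n)\,\kappa(d\xi,x)$ (using $f \le 1$ and $\kappa(\cdot,x)$ a finite measure) shows that the pointwise limit $\tilde w$ satisfies $\tilde w = F(\tilde w)$, and $\tilde w \le w - w^\circ$ by construction --- the first claim in (b). When $\theta(w^\circ)=0$ this recipe may produce $\tilde w\equiv 0$ (e.g.\ if $w^\circ=0$ then $w=0$), so in that case I would run the same iteration on the perturbed problem with forcing $w^\circ + \alpha u$ for some fixed $\alpha>0$; the perturbation satisfies $\theta(w^\circ+\alpha u)\ge \alpha\theta(u)>0$ by Theorem~\ref{re:Krein}, so the resulting fixed point is nontrivial and yields (a), while (b) is vacuous in that case.

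The two estimates in (a) are then short. The domination gives
\[
\tilde w(x) = \int f(\tilde w)\,\kappa(d\xi,x) \le \int f(\tilde w)\,d\nu,
\]
a constant in $x$, hence $\|\tilde w\|_\infty \le \int f(\tilde w)\,d\nu$. Corollary~\ref{re:fixpoint-estim} supplies $\|\tilde w\|_\infty \ge \ln\cR_0$ once $\theta(\tilde w)>0$, and combining these yields $\int f(\tilde w)\,d\nu \ge \ln\cR_0$. For (b), $w \ge w^\circ + \tilde w \ge \tilde w$ forces $s_\infty=e^{-w}\le e^{-\tilde w}$, so $1-s_\infty\ge f(\tilde w)$ and integrating against $\nu$ concludes.

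The only real obstacle is verifying $\theta(\tilde w)>0$. Along the iteration, (\ref{eq:order-der}) and the eigenfunctional identity give $\theta(\tilde w_{n+1})\ge e^{-\|\tilde w_n\|_\infty}\cR_0\,\theta(\tilde w_n)$, and $\theta(\tilde w_0)>0$ is essentially the computation from the proof of Theorem~\ref{re:epi-occur}(b). The subtlety is that the bounded linear positive functional $\theta$ supplied by Theorem~\ref{re:Krein} need not be $\sigma$-order continuous, so a pointwise monotone limit of the iterates is not automatically transmitted through $\theta$. I expect to resolve this by exploiting the dominated-kernel structure to represent $\theta$ by a measure: the operator induced by $\kappa$ on $\cM(\Omega)$ admits, by a Krein--Rutman type argument on the cone of finite positive measures, an eigenmeasure $\mu$ satisfying $\int Kg\,d\mu = \cR_0\int g\,d\mu$, so that $\theta(g):=\int g\,d\mu$ is $\sigma$-order continuous and dominated convergence against the finite measure $\mu$ transmits the iteration lower bound to $\theta(\tilde w)>0$.
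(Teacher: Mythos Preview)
Your downward iteration $\tilde w_0 = F(w)$, $\tilde w_{n+1}=F(\tilde w_n)$ is fine and, up to bookkeeping, parallels the paper's construction (the paper takes the decreasing sequence of minimal solutions for the scaled forcings $w^\circ_\ell=(1/\ell)w^\circ$ and applies Theorem~\ref{re:dominated-converge}). The comparison $\tilde w\le w-w^\circ$ and the derivation of the integral bound from $\kappa\le\nu$ are correct, as is your perturbation trick for the case $\theta(w^\circ)=0$.

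The genuine gap is your proposed resolution of the $\theta$-continuity issue. The claim that domination allows one to represent $\theta$ by an eigenmeasure via a Krein--Rutman argument on $\cM(\Omega)$ is unjustified at this level of generality: the paper only obtains eigenmeasures for (quasi-)tight Feller kernels on metric spaces (Theorem~\ref{re:eigenmeasure-intro}), and no such result is available for a dominated kernel on an arbitrary measurable space. Even if you had uniform convergence and hence $\theta(\tilde w_n)\to\theta(\tilde w)$, your recursion $\theta(\tilde w_{n+1})\ge e^{-\|\tilde w_n\|_\infty}\cR_0\,\theta(\tilde w_n)$ gives no uniform positive lower bound (the factor may be $<1$), so $\theta(\tilde w)>0$ still would not follow.

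The fix the paper uses is simpler and bypasses $\theta(\tilde w)>0$ altogether. Domination upgrades the pointwise decreasing convergence to \emph{uniform} convergence: since
\[
0\le \tilde w_{n+1}(x)-\tilde w(x)=\int_\Omega\big(f(\tilde w_n)-f(\tilde w)\big)\,\kappa(d\xi,x)\le \int_\Omega\big(f(\tilde w_n)-f(\tilde w)\big)\,d\nu\to 0,
\]
one gets $\|\tilde w_n-\tilde w\|_\infty\to 0$ (this is exactly Proposition~\ref{re:dominated-unicon}/Theorem~\ref{re:dominated-converge}(a)). Now work at the level of sup-norms, not $\theta$-values: from $\tilde w_n\ge F(\tilde w_n)\ge f'(\|\tilde w_n\|_\infty)K\tilde w_n$ and $\theta(\tilde w_n)>0$ (which you did prove inductively) one gets $\|\tilde w_n\|_\infty\ge\ln\cR_0$ for each $n$, and uniform convergence then yields $\|\tilde w\|_\infty\ge\ln\cR_0$. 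This is precisely how the paper passes to the limit in Theorem~\ref{re:dominated-converge}(b).
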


The last inequality follows from (a)
and the first part of (b)
and (\ref{eq:final-size-susc-intro}).

Choosing $w^\circ \equiv 1$ on $\Omega$
in the previous theorem such that $\theta(w^\circ) >0$,
we obtain the following fixed point result
(notice the scarcity of assumptions).

\begin{corollary}
\label{re:dominated-fixp}
Let the measure kernel $\kappa$ be dominated
by a measure $\nu$ and $\cR_0>1$.
Then there exists some solution $\tilde w = F(\tilde w)$ in $M^b_+(\Omega)$ with the following properties:

 $\int_\Omega (f \circ \tilde  w) d\nu \ge \ln \cR_0$ and $\|\tilde w\|_\infty \ge \ln \cR_0$.
\end{corollary}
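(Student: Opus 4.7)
The plan is to deduce this corollary as an immediate specialization of Theorem \ref{re:dominated} with the choice $w^\circ \equiv 1$ on $\Omega$. Since the conclusions claimed in the corollary are precisely those of part~(a) of the theorem, the entire task reduces to verifying that this specific $w^\circ$ places us in a regime where Theorem \ref{re:dominated} actually delivers a \emph{non-trivial} fixed point $\tilde w$, namely $\theta(w^\circ) > 0$ for the eigenfunctional $\theta$ from Theorem \ref{re:Krein}. Without such positivity one could always take $\tilde w \equiv 0$, which fixes $F$ but defeats the bound $\|\tilde w\|_\infty \ge \ln \cR_0$, so the positivity check is the true content of the argument.

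To establish $\theta(w^\circ) > 0$, I would invoke the criterion supplied by Theorem \ref{re:Krein}: $\theta(g) > 0$ whenever $g \ge \delta u$ for some $\delta > 0$, with $u(x) = \kappa(\Omega, x)$. The domination hypothesis (\ref{eq:kernel-dom-intro}) yields $u(x) \le \nu(\Omega)$ for every $x \in \Omega$, and $\nu(\Omega)$ is finite because $\nu \in \cM_+(\Omega)$ is a real-valued nonnegative measure. Moreover $\nu(\Omega) > 0$: if $\nu(\Omega) = 0$, then $\kappa \equiv 0$, forcing $\cR_0 = 0$ and contradicting $\cR_0 > 1$. Consequently $w^\circ \equiv 1 \ge \frac{1}{\nu(\Omega)}\, u$ on $\Omega$, and the choice $\delta = 1/\nu(\Omega) > 0$ gives $\theta(w^\circ) > 0$.

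With $\theta(w^\circ) > 0$ in hand, Theorem \ref{re:dominated} applies and produces some $\tilde w \in M^b_+(\Omega)$ with $\tilde w = F(\tilde w)$, $\int_\Omega (f \circ \tilde w)\, d\nu \ge \ln \cR_0$, and $\|\tilde w\|_\infty \ge \ln \cR_0$, which is exactly what the corollary asserts. The only non-routine step in this plan is the verification of $\theta(w^\circ) > 0$, and even that reduces transparently to combining the finiteness of $\nu(\Omega)$ with the pointwise domination bound on $u$; I do not anticipate any further obstacles.
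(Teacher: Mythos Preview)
Your proposal is correct and follows exactly the paper's approach: the paper states just before the corollary that one chooses $w^\circ \equiv 1$ in Theorem \ref{re:dominated} ``such that $\theta(w^\circ) > 0$'', and you supply the easy verification of that positivity via the criterion in Theorem \ref{re:Krein}. One small simplification: you do not actually need the domination bound $u \le \nu(\Omega)$ here, since $u = \kappa(\Omega,\cdot) \in M^b_+(\Omega)$ is already bounded by definition of a measure kernel, so $1 \ge u/\|u\|_\infty$ (with $\|u\|_\infty > 0$ forced by $\cR_0 > 0$) already yields $\theta(1) > 0$.
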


For the proofs see Section \ref{sec:intermed-thres}.

\subsection{Semi-separable measure kernels}
\label{subsec:semisep-intro}

In Theorem \ref{re:dominated}, the nonzero fixed point $\tilde w$ may depend on $w^\circ$.
To remove this dependence, we consider the
following concept \cite{Ina23}.

A measure kernel $\kappa$ is called {\em semi-separable} if there are some nonzero function $k_0 \in  M^b_+(\Omega)$ and some nonzero measure
$\nu \in  \cM_+(\Omega)$ and some number $\delta \in (0,1]$ such
that
\begin{equation}
\label{eq:separ-intro}
\delta  \nu (\omega) k_0(x)\le \kappa(\omega,x)
\le
\nu(\omega) k_0(x), \qquad x \in \Omega, \quad
\omega \in \cB.
\end{equation}

Notice that the function $k_0$ is not assumed to be strictly
positive on $\Omega$ and no positivity assumption is made for $\nu$ except that it is not the zero measure. Assuming strict positivity of $k_0$
would be an easy way, though, to guarantee
that the infection reaches hosts of all traits,
in other words, the occurrence of a pandemic
in case that $\cR_0 >1$. Cf. \cite{Ina23}.
Notice that, if $\kappa$ is semi-separable,
it is dominated by the measure $\|k_0\|_\infty \,\nu$.

\begin{theorem}
\label{re:semisep-intro}
Let $\kappa$ be a semi-separable measure kernel.
Then $\cR_0 > 0$ if and only if
\[
\int_\Omega k_0 \,d \nu > 0.
\]
 Let $w $ (and $w^\circ$) be
the final sizes of the cumulative (initial)
forces of infection, i.e., $w$ is the minimal solution to $w =F(w) +w^\circ$.

\begin{itemize}
\item[(a)] If $w^\circ \in M^b_+(\Omega)$ and $\int_\Omega w^\circ
d \nu =0$, then $w = w_0$.

\item[(b)] Assume that $\cR_0 >1$. Then there exists
a unique non-zero solution $\tilde w \in  M_+^b(\Omega)$
to $\tilde w = F(\tilde w)$.

Further, $\int_\Omega (f \circ \tilde w) d \nu
\ge \ln \cR_0/\|k_0\|_\infty$ and $\|\tilde w\|_\infty \ge \ln \cR_0$.

Finally, if $\int_\Omega w^\circ \, d\nu >0$, we have $w - w^\circ \ge \tilde w$.
\end{itemize}
\end{theorem}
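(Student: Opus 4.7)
My plan is to exploit the two-sided sandwich
\begin{equation*}
\delta\, k_0(x) \int g\,d\nu \;\le\; (Kg)(x) \;\le\; k_0(x) \int g\,d\nu, \qquad g \in M^b_+(\Omega),
\end{equation*}
throughout, and to reduce most of the work to the dominated-kernel results of Section~\ref{sec:intermed-thres} applied with the dominating measure $\|k_0\|_\infty \nu$.

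\textbf{Spectral radius and part~(a).} Setting $\alpha := \int_\Omega k_0\,d\nu$, iteration of the upper sandwich gives $(K^n g)(x) \le \alpha^{n-1} k_0(x) \int g\,d\nu$ for $n \ge 1$; this forces $\cR_0 \le \alpha$ and even $K^2 = 0$ when $\alpha = 0$. Plugging $g = k_0$ into the lower sandwich and iterating yields $K^n k_0 \ge (\delta \alpha)^n k_0$, hence $\cR_0 \ge \delta \alpha$. Thus $\cR_0 > 0$ iff $\alpha > 0$. For~(a), $\int w^\circ d\nu = 0$ forces $w^\circ = 0$ $\nu$-a.e.\ and hence $f \circ w^\circ = 0$ $\nu$-a.e., so the upper sandwich gives $F(w^\circ) \le k_0 \int f(w^\circ)\,d\nu = 0$; the recursion~(\ref{eq:recursion-intro}) is then stationary at $w^\circ$, so $w = w_0$.

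\textbf{Existence, bounds, and comparison in (b).} Because $\kappa$ is dominated by $\|k_0\|_\infty \nu$, Theorem~\ref{re:dominated-fixp} furnishes a fixed point $\tilde w = F(\tilde w)$ with $\|\tilde w\|_\infty \ge \ln \cR_0$; feeding this into the fixed-point identity and the upper sandwich gives $\tilde w \le k_0 \int (f \circ \tilde w)\,d\nu$, so $\int (f \circ \tilde w)\,d\nu \ge \|\tilde w\|_\infty/\|k_0\|_\infty \ge \ln \cR_0/\|k_0\|_\infty$. For $w - w^\circ \ge \tilde w$, I invoke Theorem~\ref{re:dominated}(b), which needs only $\theta(w^\circ) > 0$ for the eigenfunctional $\theta$ of Theorem~\ref{re:Krein}. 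Since $u(x) = \kappa(\Omega,x) \le \nu(\Omega) k_0(x)$ gives $k_0 \ge u/\nu(\Omega)$, Theorem~\ref{re:Krein} yields $\theta(k_0) > 0$; applying $\theta$ to the lower sandwich then produces the uniform estimate $\cR_0 \theta(g) = \theta(Kg) \ge \delta \theta(k_0) \int g\,d\nu$ on $M^b_+(\Omega)$, so $\int w^\circ d\nu > 0$ does force $\theta(w^\circ) > 0$.

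\textbf{Uniqueness and main obstacle.} Given two nonzero fixed points $\tilde w_1, \tilde w_2$ of $F$, the sandwich bounds yield $\delta p_i k_0 \le \tilde w_i \le p_i k_0$ with $p_i := \int f(\tilde w_i)\,d\nu > 0$, so $\{\tilde w_i > 0\} = \{k_0 > 0\}$ and $\lambda^* := \inf\{\lambda > 0 : \lambda \tilde w_1 \ge \tilde w_2\}$ is finite. Assuming $\lambda^* > 1$, introduce the defect $\tilde g(y) := \lambda^* f(y) - f(\lambda^* y)$, which is nonnegative by concavity of $f$ and strictly positive for $y > 0$ by strict concavity. The lower sandwich applied to the nonnegative function $\tilde g(\tilde w_1)$ gives
\begin{equation*}
F(\lambda^* \tilde w_1)(x) \;=\; \lambda^* \tilde w_1(x) - \int \tilde g(\tilde w_1)\,d\kappa(\cdot,x) \;\le\; \lambda^* \tilde w_1(x) - \delta k_0(x) \int \tilde g(\tilde w_1)\,d\nu.
\end{equation*}
Combined with $\tilde w_1 \le p_1 k_0$ this sharpens to $F(\lambda^* \tilde w_1) \le (\lambda^* - \varepsilon) \tilde w_1$ on $\{k_0 > 0\}$ (and trivially on $\{k_0 = 0\}$, where both sides vanish) with $\varepsilon := \delta \int \tilde g(\tilde w_1)\,d\nu/p_1 > 0$; by monotonicity of $F$, $\tilde w_2 = F(\tilde w_2) \le F(\lambda^* \tilde w_1) \le (\lambda^* - \varepsilon) \tilde w_1$, contradicting the definition of $\lambda^*$. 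Hence $\lambda^* \le 1$, and symmetry forces $\tilde w_1 = \tilde w_2$. The delicate point is precisely this Krasnosel'skii-style step: the defect $\tilde g(y)/y$ degenerates as $y \to 0$, so an unrefined argument yields only a pointwise improvement; what salvages a uniform multiplicative gap is the tight semi-separability sandwich, which keeps $k_0(x)/\tilde w_1(x)$ bounded below by $1/p_1$ on $\{k_0 > 0\}$.
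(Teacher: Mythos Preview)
Your proof is correct and follows essentially the same route as the paper. The paper packages the spectral-radius equivalence and the implication $\int w^\circ\,d\nu>0 \Rightarrow \theta(w^\circ)>0$ into Proposition~\ref{re:theta-pos-separ}, reduces part~(a) to Corollary~\ref{re:no-spread}, draws existence and the comparison $w-w^\circ\ge\tilde w$ from the dominated-kernel results (Corollary~\ref{re:dominated-fixp}, Theorem~\ref{re:dominated}), and proves uniqueness as Proposition~\ref{re:unforced-unique} by invoking Krasnosel'skii's criterion $F(tw)\ge(1+\eta)tF(w)$ for $t\in(0,1)$; your uniqueness argument is the dual formulation (working with $\lambda^*>1$ rather than $t<1$) of the same concavity-plus-sandwich mechanism, carried out by hand. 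One cosmetic point: since Theorem~\ref{re:dominated} only produces \emph{some} fixed point (a~priori depending on $w^\circ$), the comparison step logically relies on the uniqueness you establish afterwards, so reordering those two paragraphs would make the dependence transparent.
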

Notice that, in this theorem, for $\cR_0 >1$,
an epidemic outbreak occurs if and only if
$\int_\Omega w^\circ d\nu >0$ for the final size of the cumulative initial force of infection, $w^\circ$.
The next theorem explores how the threshold
property of $\cR_0$ plays out for
the final size of the epidemic in terms of the cFoI if the number
of initial infectives is very small.

\begin{theorem}
\label{re:semisep-sequ-intro}
Let $\kappa$ be a semi-separable measure kernel.
 Let  $(w_{\ell}^\circ)_{\ell \in \N}$ be a sequence
in $ M^b_+(\Omega)$
and $w_{\ell }^\circ \to 0$ as $\ell\to \infty$ pointwise
on $\Omega$.

For each $\ell \in \N$, let $w_\ell \in M^b_+(\Omega)$ be the minimal
solution of
\[
w_\ell = F(w_\ell) + w_{\ell}^\circ.
\]
\begin{itemize}
\item[(a)] If $\cR_0 \le 1$, then $w_\ell - w^\circ_\ell \to 0$
as $\ell \to \infty$, uniformly  on $\Omega$.

\item[(b)] Let $\cR_0 > 1$
and $\int_\Omega w_{\ell}^\circ \,d \nu > 0$ for all $\ell \in \N$ and
$\tilde w $ be
the unique nonzero solution to $\tilde w = F(\tilde w)$ from Theorem \ref{re:semisep-intro} (b).

Then, as $\ell \to \infty$, $w_\ell
\to \tilde w$
pointwise on $\Omega$
and $w_\ell - w^\circ_\ell \to \tilde w$
uniformly on $\Omega$.
\end{itemize}
\end{theorem}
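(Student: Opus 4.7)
The plan is to exploit the scalar reduction available for semi-separable kernels. Setting $q_\ell := \int_\Omega f(w_\ell)\,d\nu$, semi-separability (\ref{eq:separ-intro}) together with $w_\ell - w_\ell^\circ = F(w_\ell)$ yields the sandwich
\[
\delta\, k_0(x)\, q_\ell \;\le\; w_\ell(x) - w_\ell^\circ(x) \;\le\; k_0(x)\, q_\ell, \qquad x \in \Omega,
\]
so uniform convergence of $w_\ell - w_\ell^\circ$ reduces to controlling the scalar $q_\ell$; pointwise convergence of $w_\ell$ then follows from $w_\ell = (w_\ell - w_\ell^\circ) + w_\ell^\circ$ together with $w_\ell^\circ \to 0$. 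At any $x_0$ with $k_0(x_0) > 0$, the inequality $\delta\, k_0(x_0)\, \nu(\Omega) \le \kappa(\Omega, x_0)$ forces $\nu$ to be finite, and dominated convergence on $f(w_\ell^\circ) \le 1$ then gives $b_\ell := \int_\Omega f(w_\ell^\circ)\,d\nu \to 0$.

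For part (a), I would iterate the defining recursion $w_{\ell, n+1} = F(w_{\ell, n}) + w_\ell^\circ$ from $w_{\ell, 0} = 0$ and set $v_{\ell, n} := w_{\ell, n+1} - w_\ell^\circ = F(w_{\ell, n})$. Subadditivity of $F$ (inherited from concavity of $f$ with $f(0) = 0$) combined with $F \le K$ gives $v_{\ell, n+1} \le K v_{\ell, n} + F(w_\ell^\circ) \le K v_{\ell, n} + k_0\, b_\ell$; by induction, $v_{\ell, n} \le b_\ell \sum_{j=0}^{n-1} K^j k_0$. For $\cR_0 < 1$ the Neumann series converges in $M^b(\Omega)$, so $\|w_\ell - w_\ell^\circ\|_\infty \le C\, b_\ell \to 0$. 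The borderline case $\cR_0 = 1$ is the technical heart; I would argue by contradiction. If $q_{\ell_k} \to q^\ast > 0$ along a subsequence, weak-$\ast$ compactness of $\{f(w_{\ell_k})\}$ in $L^\infty(\nu)$ (valid since $\nu$ is finite and the sequence is bounded by $1$) together with the sandwich allows extraction of a pointwise limit $w^\ast$ satisfying $w^\ast = F(w^\ast)$ and $\int f(w^\ast)\, d\nu = q^\ast$. Testing against the eigenfunctional $\theta$ of Theorem \ref{re:Krein}, the bound $K w^\ast - F(w^\ast) \ge \delta k_0 \int (w^\ast - f(w^\ast))\, d\nu > 0$ (since $f(y) < y$ on $\{y > 0\}$ and $w^\ast \ge \delta k_0 q^\ast$ covers a $\nu$-substantial set) yields $\theta(K w^\ast - F(w^\ast)) > 0$, contradicting the identity $\theta(F(w^\ast)) = \cR_0\, \theta(w^\ast) = \theta(K w^\ast)$ forced by $\cR_0 = 1$.

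For part (b), Theorem \ref{re:semisep-intro}(b) supplies the unique nonzero fixed point $\tilde w$ and the lower bound $v_\ell := w_\ell - w_\ell^\circ \ge \tilde w$. For the matching upper bound, $v_\ell = F(v_\ell + w_\ell^\circ) \ge F(v_\ell)$ makes $v_\ell$ a supersolution of $u = F(u)$, so the decreasing iterates $F^n(v_\ell)$ converge to a fixed point of $F$ which, being $\ge \tilde w$, equals $\tilde w$ by uniqueness. Subadditivity also gives $v_\ell \le F(v_\ell) + k_0\, b_\ell$; linearizing around $\tilde w$ via $F'(\tilde w) g(x) = \int g(\xi)\, e^{-\tilde w(\xi)}\, \kappa(d\xi, x)$, strict concavity (via the pointwise inequality $f(y) > y e^{-y}$ for $y > 0$ combined with $\tilde w \ge \delta k_0 \phi(\tilde w) > 0$ on a $\nu$-substantial set) shows that the spectral radius of $F'(\tilde w)$ is strictly less than $1$. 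Hence $I - F'(\tilde w)$ is invertible on a neighbourhood of $\tilde w$, and the perturbation $k_0\, b_\ell$ propagates to $\|v_\ell - \tilde w\|_\infty = O(b_\ell) \to 0$; pointwise $w_\ell \to \tilde w$ then follows from $w_\ell = v_\ell + w_\ell^\circ$ and $w_\ell^\circ \to 0$.

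The main obstacle is the $\cR_0 = 1$ subcase in (a): without a resolvent bound or a nontrivial attracting fixed point, one must combine the weak-$\ast$ extraction, the eigenfunctional $\theta$, and the strict concavity gap $y - f(y) > 0$ for $y > 0$ to rule out a spurious limiting fixed point.
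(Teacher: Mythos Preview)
Your scalar reduction via $q_\ell = \int_\Omega f(w_\ell)\, d\nu$ and the sandwich $\delta k_0 q_\ell \le w_\ell - w_\ell^\circ \le k_0 q_\ell$ are correct and are exactly what the paper uses for the final uniform-convergence step in both (a) and (b). Your Neumann-series bound for $\cR_0 < 1$ is also fine.

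The genuine gap is the $\cR_0 = 1$ subcase of (a). Weak-$\ast$ compactness of $\{f(w_{\ell_k})\}$ in $L^\infty(\nu)$ does not produce a pointwise limit, and even granting a weak-$\ast$ limit $h$ with $\int h\, d\nu = q^\ast$ there is no mechanism to identify $h$ as $f(w^\ast)$ for some $w^\ast$ satisfying $w^\ast = F(w^\ast)$. The sandwich only traps $w_{\ell_k}(x)$ in an interval of width $(1-\delta)k_0(x) q^\ast$, so pointwise convergence fails unless $\delta = 1$. The paper's extraction is different and elementary: set $\breve w = \limsup_\ell w_\ell$, apply Fatou to $w_\ell = F(w_\ell) + w_\ell^\circ$ to get $\breve w \le F(\breve w)$, and iterate $F$ upward from $\breve w$ to reach a genuine fixed point $\hat w \ge \breve w$ (Theorem~\ref{re:final-size-converge}). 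Your contradiction step then applies verbatim to $\hat w$ and is essentially Proposition~\ref{re:unforced-unique}(b); once no nonzero fixed point exists, Corollary~\ref{re:final-size-converge-zero} gives $w_\ell \to 0$ pointwise, and your sandwich plus dominated convergence finishes.

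For (b) your route through $F'(\tilde w)$ is heavier than needed and has its own gap: the inequality $f(y) > y e^{-y}$ only yields $F'(\tilde w)\tilde w < \tilde w$ pointwise on $\{k_0 > 0\}$, which does not by itself give $\br(F'(\tilde w)) < 1$; you would still need to apply an eigenfunctional of $\kappa_\infty$ to $\tilde w - F'(\tilde w)\tilde w \ge \delta c\, k_0$ and verify that it is strictly positive on $k_0$. The paper bypasses all of this. It bounds $\limsup_\ell w_\ell$ above by a fixed point of $F$ via the same Fatou mechanism (Theorem~\ref{re:final-size-converge}), shows that fixed point is nonzero using the uniform lower bound $\int_\Omega f(w_\ell)\, d\nu \ge (\ln\cR_0)/\|k_0\|_\infty$ from Theorem~\ref{re:semisep-intro}(b), and invokes uniqueness (Proposition~\ref{re:unforced-unique}(a)) to identify it with $\tilde w$. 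Combined with the lower bound $w_\ell - w_\ell^\circ \ge \tilde w$ already supplied by Theorem~\ref{re:semisep-intro}(b), this squeezes $w_\ell \to \tilde w$ pointwise, after which the sandwich plus dominated convergence upgrades $w_\ell - w_\ell^\circ \to \tilde w$ to uniform convergence.
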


For the proofs see Section \ref{sec:semi-sep}.
We mention that, on the way from the general
but less informative result in Theorem
\ref{re:epi-occur} and the detailed results
in this section, there are intermediate results
that are of their own interest, but will
be presented later (Section \ref{sec:intermed-thres}).


\subsection{Metric spaces of traits and
Feller kernels}
\label{subsec:metric-intro}

We now assume that $\Omega$ is a metric space
and the $\sigma$-algebra $\cB$ is the one
generated by the open subsets of $\Omega$.
Then,
the vector space of bounded continuous real-valued functions on $\Omega$, $C^b(\Omega)$, is a closed subspace of
$M^b(\Omega)$ with the supremum norm.

\begin{definition}
[{\cite[Chap.19.3]{AlBo}}{\cite{Thi20}}{\cite[Chap.13]{Thi24}}
]
\label{def:Feller}
A measure kernel $\kappa$ is called a {\em Feller kernel} if the  operator $K$ on $M^b(\Omega)$
induced by (\ref{eq:lin-approx-intro})
maps $C^b(\Omega)$ into itself.

A Feller kernel $\kappa$ is called  {\em topologically irreducible}
if for any nonempty open strict subset $U$ of
$\Omega$ there exist some $x \in \Omega \setminus U$ such that $\kappa (U,x) > 0$.

A function $g \in M^b_+(\Omega)$ is called
{\em topologically positive} if $\inf_U g >0$
for some nonempty open subset $U$ of $\Omega$.
\end{definition}

\begin{theorem}
\label{re:irred-pos1}
Let $\kappa$ be a topologically irreducible Feller kernel and $w^\circ \in M^b_+(\Omega)$
 be topologically positive and $w \in M^b_+(\Omega)$
    be the minimal solution to $w = F(w) +w^\circ$.
Then $w(x)-w^\circ(x) > 0$ for all $x\in \Omega$.
\end{theorem}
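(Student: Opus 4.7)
The plan is to run a reachability argument: iteratively enlarge the open set on which $w>0$ by one application of the kernel, and use topological irreducibility to force this set to exhaust $\Omega$. Since $w^\circ$ is topologically positive, I would start by picking an open $B\neq\emptyset$ and $\delta>0$ with $\inf_B w^\circ\ge \delta$; because $w\ge w^\circ\ge \delta$ on $B$, we have $B\subseteq\{w>0\}$ and $f\circ w\ge f(\delta)>0$ on $B$.

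The key technical ingredient, and the hard step, is to extract from the Feller property the following: for every open $U\subseteq\Omega$, the function $x\mapsto\kappa(U,x)$ is lower semi-continuous, so $\{x:\kappa(U,x)>0\}$ is open. This is where I would use that $\Omega$ is metric: writing $\chi_U$ as the pointwise supremum of the increasing sequence of bounded continuous functions $\phi_n(x)=\min\{1,\,n\cdot d(x,\Omega\setminus U)\}$, the Feller property gives $K\phi_n\in C^b(\Omega)$, while monotone convergence applied to each measure $\kappa(\cdot,x)$ gives $K\phi_n\uparrow\kappa(U,\cdot)$.

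With this in hand I would define $V_0=B$ and $V_{n+1}=V_n\cup\{x:\kappa(V_n,x)>0\}$; the LSC fact makes each $V_n$ open by induction. I would then prove, again by induction, that $V_n\subseteq\{w>0\}$: for $x\in V_{n+1}\setminus V_n$ one has $\kappa(V_n,x)>0$, and since $f\circ w>0$ on $V_n$ by the inductive hypothesis, the decomposition $V_n=\bigcup_k V_n\cap\{f\circ w\ge 1/k\}$ combined with $\sigma$-additivity of $\kappa(\cdot,x)$ gives $F(w)(x)\ge (1/k)\,\kappa(V_n\cap\{f\circ w\ge 1/k\},x)>0$ for some $k$, whence $w(x)\ge F(w)(x)>0$. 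Setting $V_\infty=\bigcup_n V_n$, this is a nonempty open subset of $\{w>0\}$; if $V_\infty\neq\Omega$, topological irreducibility would yield some $x_0\notin V_\infty$ with $\kappa(V_\infty,x_0)>0$, and continuity of measure would then produce $n$ with $\kappa(V_n,x_0)>0$, forcing $x_0\in V_{n+1}\subseteq V_\infty$, a contradiction. Hence $\{w>0\}=\Omega$.

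To upgrade this to $F(w)(x)>0$ for every $x\in\Omega$, I would fix $x$ and (assuming $\Omega$ has at least two points, which is the only interesting case) apply topological irreducibility to the nonempty open strict subset $\Omega\setminus\{x\}$ (open because metric spaces are $T_1$), obtaining $\kappa(\Omega\setminus\{x\},x)>0$; the same $\sigma$-additivity argument together with $f\circ w>0$ on $\Omega\setminus\{x\}$ then gives $F(w)(x)\ge\int_{\Omega\setminus\{x\}}f(w)\,\kappa(d\xi,x)>0$, i.e., $w(x)-w^\circ(x)=F(w)(x)>0$. The main obstacle in this plan is the Feller-to-LSC step, since without it the recursion $V_{n+1}=V_n\cup\{\kappa(V_n,\cdot)>0\}$ does not preserve openness and does not mesh with the topological form of irreducibility; once that ingredient is in place, the rest is a routine reachability bookkeeping.
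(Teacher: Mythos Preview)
Your argument is correct, but it takes a different route from the paper. The paper first establishes the result under the stronger hypothesis that $w^\circ$ is lower semicontinuous (Theorem~\ref{re:irred-pos}): in that case the minimal solution $w$ itself is lower semicontinuous (Proposition~\ref{re:Feller-semicon} and Theorem~\ref{re:min-sol-lowsemi}), so the set $\{w>0\}$ is open, and a \emph{single} application of topological irreducibility to $U=\{w>0\}$ yields $U=\Omega$. Theorem~\ref{re:irred-pos1} is then deduced by minorizing the topologically positive $w^\circ$ by a continuous nonzero $\tilde w^\circ$ (Lemma~\ref{re:top-pos}) and using monotonicity of minimal solutions (Theorem~\ref{re:final-size-increase}) to pass from $\tilde w$ to $w$.

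Your approach avoids both the LSC reduction and the monotonicity lemma: you never claim $w$ is lower semicontinuous, and instead build an increasing family of open sets $V_n$ by hand, using the LSC of $x\mapsto\kappa(U,x)$ for open $U$ (which is the same Feller-plus-metric ingredient the paper uses, just applied pointwise rather than to $w$). What you gain is self-containment; what the paper gains is that, once the LSC of minimal solutions is isolated as a lemma, the irreducibility step collapses to one line rather than a transfinite-flavoured reachability recursion. Both arguments pivot on the identical technical fact you flag as the hard step, and your final use of $\kappa(\Omega\setminus\{x\},x)>0$ to upgrade $w>0$ to $F(w)>0$ matches the paper's use of~(\ref{eq:top-irr-Omega}).
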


So there is potential for a pandemic.
We mention that topological irreducibility of
$\kappa$ is necessary for this result to hold
(Remark \ref{re:irr-pos-nec}).

\subsubsection{Tight Feller kernels}
\label{subsub:tight-Feller-intro}

\begin{definition}[{\cite{Thi19}}{\cite[Chap.13]{Thi24}}]
\label{def:tight}
A measure $\mu \in \cM_+(\Omega)$ is called
{\em tight} if for any $\epsilon > 0$ there
exists a compact subset $W$ of $\Omega$
such that $\mu (\Omega \setminus W) < \epsilon$.

A Feller kernel $\kappa$ is called {\em tight}
if for any $\epsilon > 0$ there is a compact
subset $W$ of $\Omega$ such that
\begin{equation}
\label{eq:tight-intro}
\kappa(\Omega \setminus W, x) < \epsilon,
\qquad
x \in \Omega.
\end{equation}
A Feller kernel $\kappa$ is called {\em quasi-tight}
if
$\kappa = \kappa_1 + \kappa_2$ where the
$\kappa_j$ are Feller kernels, $\kappa_1$ is tight,  $\br(\kappa)
> \br(\kappa_2)$,
and for any $x \in \Omega$, there exists a separable subset $\omega $ of $\Omega$ such that $\kappa_2(x,
\Omega \setminus \omega) =0$.
\end{definition}

If $\kappa$ is a tight or quasi-tight Feller kernel, the
eigenfunctional $\theta$ in Theorem \ref{re:Krein} is given by
a measure. See \cite{Thi19}\cite[Thm.13.39]{Thi24}
and \cite[Thm.13.42]{Thi24}.

\begin{theorem}
\label{re:eigenmeasure-intro}
Let $\kappa$ be a quasi-tight Feller kernel and $\cR_0 > 0$. Then there exists a  eigenmeasure $\mu$ of
$\kappa$ associated
with $\cR_0$,
\[
\cR_0 \mu(\omega) = \int_\Omega \kappa(\omega,x) \mu(dx),
\qquad
\omega \in \cB.
\]
If, in addition, $\kappa$ is a topologically
irreducible Feller kernel, $\int_\Omega g \, d \mu >0$
for every topologically positive $g \in M^b_+(\Omega)$.
If $\kappa$ is a tight Feller kernel, $\mu$
is tight.
\end{theorem}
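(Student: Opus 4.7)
My plan is to reduce the statement to Theorem~\ref{re:Krein} together with a representation theorem for positive linear functionals on $C^b(\Omega)$. First, Theorem~\ref{re:Krein} furnishes a positive eigenfunctional $\theta$ of $K$ on $M^b(\Omega)$ with $\theta(Kg)=\cR_0\,\theta(g)$. Since $\kappa$ is a Feller kernel, $K$ restricts to a bounded positive operator on $C^b(\Omega)$, and the eigenvalue identity $\theta(Kg)=\cR_0\,\theta(g)$ survives restriction of $\theta$ to $C^b(\Omega)$. The order unit $u_1\equiv 1$ lies in $C^b(\Omega)$, and since $Kg\le \|g\|_\infty u$ with $u=\kappa(\Omega,\cdot)\in M^b_+(\Omega)$ bounded, $\theta$ has finite value on $u_1$.

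Next, I will show that (in the tight case) $\theta$ satisfies a tightness condition on $C^b(\Omega)$. For any $\epsilon>0$, the tightness of $\kappa$ furnishes a compact $W\subset \Omega$ with $\kappa(\Omega\setminus W,x)<\epsilon$ uniformly in $x$. Thus for any $g\in C^b_+(\Omega)$ vanishing on a neighborhood of $W$ with $\|g\|_\infty\le 1$,
\[
\cR_0\,\theta(g)=\theta(Kg)\le \theta\bigl(\kappa(\Omega\setminus W,\cdot)\bigr)\le \epsilon\,\theta(u_1).
\]
This tightness is precisely the hypothesis required by the Daniell--Stone / Riesz--Markov--Kakutani representation for positive linear functionals on $C^b(\Omega)$ over a metric space: there exists a unique bounded Borel measure $\mu$ with $\theta(g)=\int_\Omega g\,d\mu$ for all $g\in C^b(\Omega)$, and $\mu$ is itself tight. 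Applying this representation to both sides of the eigenvalue identity and invoking Fubini (justified because each $\kappa(\cdot,x)$ is a bounded measure and $\mu$ is finite), one obtains
\[
\cR_0\int_\Omega g\,d\mu=\int_\Omega g(\xi)\Bigl(\int_\Omega \kappa(d\xi,x)\,\mu(dx)\Bigr)
\]
for all $g\in C^b(\Omega)$. Since Borel measures on a metric space are determined by their integrals against bounded continuous functions, the two set functions $\omega\mapsto \cR_0\,\mu(\omega)$ and $\omega\mapsto \int_\Omega \kappa(\omega,x)\,\mu(dx)$ coincide, which is the claimed eigenmeasure equation.

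The quasi-tight case is the main obstacle: when $\kappa=\kappa_1+\kappa_2$ with $\kappa_1$ tight and $\br(\kappa_2)<\cR_0$, the operator $K$ is no longer tight on a single application, and one must pass to iterates. Expanding $K^n$ and grouping terms according to how many factors of $K_2$ they contain, the contributions involving only $K_2$ decay at rate $\br(K_2)^n$, while in every other term at least one $K_1$ factor appears; using that each $\kappa_2(\cdot,x)$ is carried by some separable subset together with tightness of $\kappa_1$ allows one to absorb the $K_2$ tail and recover asymptotic tightness of $K^n$ in the ratio $K^n/\cR_0^n$. Applying $\theta$ and using $\theta\circ K^n=\cR_0^n\,\theta$ transfers this to $\theta$, yielding the representation as before. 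For the topological irreducibility clause, the eigenmeasure equation forces the closed support $S$ of $\mu$ to satisfy $\kappa(\Omega\setminus S,x)=0$ for $\mu$-a.e.\ $x\in S$, so $S$ is ``absorbing'' under $\kappa$; topological irreducibility then rules out $S\ne\Omega$ (else $U=\Omega\setminus S$ is a nonempty open set with $\kappa(U,x)=0$ for all $x\in\Omega\setminus U$, contradicting the definition). Consequently $\mu$ gives positive mass to every nonempty open set, and for topologically positive $g$ with $g\ge\delta$ on some open $U$, $\int g\,d\mu\ge\delta\,\mu(U)>0$. Tightness of $\mu$ in the tight-kernel case is already embedded in the representation step above.
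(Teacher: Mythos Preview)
The paper's own proof is entirely by citation (to \cite[Thm.13.39, 13.42, Cor.13.60]{Thi24} and Lemma~\ref{re:top-pos}), so you are attempting something more ambitious. For the tight case your strategy is sound and is indeed the idea behind the cited results: the estimate $\cR_0\,\theta(g)=\theta(Kg)\le\epsilon\,\theta(u_1)$ for $g\in C^b_+$ vanishing on a suitable compact $W$ is precisely the Alexandroff/Daniell--Stone tightness condition that forces $\theta|_{C^b}$ to be represented by a tight Borel measure, and the passage to the eigenmeasure equation via Fubini and Remark~\ref{re:regular-measure} is correct.

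Two gaps remain. In the quasi-tight case your expansion of $K^n$ does not deliver what you claim: the \emph{position} of the $K_1$ factor matters. In a word such as $K_1K_2^{n-1}$ the operator $K_2^{n-1}$ acts on $g$ first and destroys the information that $g$ vanishes on a compact set, so tightness of $\kappa_1$ cannot be used afterward; only words with $K_1$ in the rightmost slot are controlled by your estimate, and these do not exhaust $K^n-K_2^n$. A correct route iterates the identity $\theta=\cR_0^{-1}\theta K_1+\cR_0^{-1}\theta K_2$ on the $K_1$ summand (not on $K_2$) to obtain, in the dual picture, $\mu=\sum_{j\ge0}\cR_0^{-j-1}(K_2^*)^jK_1^*\mu$ with $K_1^*\mu$ tight; one then needs the separability hypothesis on $\kappa_2(\cdot,x)$ to ensure each $(K_2^*)^j$ preserves countable additivity rather than producing a merely finitely additive set function. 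Your sketch names the right ingredients but assembles them in the wrong order.

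In the irreducibility clause you pass from ``$\kappa(U,x)=0$ for $\mu$-a.e.\ $x\in S$'' to ``for all $x\in\Omega\setminus U$'' without justification, and topological irreducibility only asserts the \emph{existence} of one $x$ with $\kappa(U,x)>0$. The missing step is that $\kappa(U,\cdot)$ is lower semicontinuous for open $U$ (as a supremum of the continuous functions $Kg$ over $g\in C^b$ with $0\le g\le\chi_U$), so $\{\kappa(U,\cdot)>0\}$ is open; if it met the support $S$ it would have positive $\mu$-measure, contradicting the a.e.\ vanishing. You also tacitly assume $\mu$ has a closed support, which is automatic when $\mu$ is tight but needs the separability hypothesis in the quasi-tight case.
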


Now we are in the situation that the eigenfunctional $\theta$ used in Theorem \ref{re:epi-occur} (b)
is given by a measure,
\[
\theta (g) = \int_\Omega g(x) \mu (dx), \qquad g \in M^b(\Omega).
\]
In particular, $\theta $ is continuous
with respect to pointwise convergence
of increasing sequences of functions.
Theorem \ref{re:epi-occur} in conjunction with
Theorem \ref{re:eigenmeasure-intro} and
\ref{re:irred-pos1} imply the following result.

\begin{theorem}
\label{re:tight-irred-pandemic}
Let $\kappa$ be a  quasi-tight Feller kernel
 that is topologically irreducible and $\cR_0 > 1$.
 Then we have a pandemic situation:

 Let $ w^\circ \in  M^b_+(\Omega)$ be topologically
 positive and $w \in M^b_+(\Omega)$ be the minimal
 solution of  the final
size equation $w = F(w) +w^\circ$.

Then,  $w- w^\circ$ is strictly positive on $\Omega$
and  $\|w- w^\circ\|_\infty \ge \ln \cR_0$ and
$\displaystyle \inf_{\Omega} s_\infty \le 1/\cR_0$.
\end{theorem}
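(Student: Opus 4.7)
The plan is to assemble the three ingredients mentioned right before the statement. The quasi-tightness of the Feller kernel will give the eigenfunctional a concrete representation as an integral against a Borel measure, topological irreducibility will make sure this measure charges every topologically positive function, and then the general threshold theorem will do the rest of the work.

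First I would invoke Theorem \ref{re:eigenmeasure-intro}. Since $\kappa$ is quasi-tight and $\cR_0>1>0$, there exists an eigenmeasure $\mu\in\cM_+(\Omega)$ of $\kappa$ associated with $\cR_0$, and because $\kappa$ is also topologically irreducible, $\int_\Omega g\,d\mu>0$ for every topologically positive $g\in M^b_+(\Omega)$. Setting $\theta(g):=\int_\Omega g\,d\mu$ for $g\in M^b(\Omega)$ yields a bounded positive linear functional, and the identity $\cR_0\mu(\omega)=\int_\Omega\kappa(\omega,x)\mu(dx)$ together with (\ref{eq:lin-approx-intro}) and Fubini's theorem makes $\theta$ an eigenfunctional of $K$ for $\cR_0$; this is precisely the functional $\theta$ supplied abstractly by Theorem \ref{re:Krein}.

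Next I would check the positivity hypothesis $\theta(w)>0$ needed in Theorem \ref{re:epi-occur}(b). By assumption $w^\circ$ is topologically positive, so the preceding paragraph gives $\theta(w^\circ)=\int_\Omega w^\circ\,d\mu>0$. Since $w$ is the pointwise limit of the increasing recursion (\ref{eq:recursion-intro}) starting at $w_0=w^\circ$, we have $w\ge w^\circ$ pointwise, hence $\theta(w)\ge\theta(w^\circ)>0$. Theorem \ref{re:epi-occur}(b) then delivers the two scalar inequalities
\[
\|w-w^\circ\|_\infty\ge \ln\cR_0,\qquad \inf_\Omega s_\infty\le 1/\cR_0.
\]

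Finally, for the pointwise strict positivity of $w-w^\circ$, I would just quote Theorem \ref{re:irred-pos1}: under the standing hypotheses that $\kappa$ is a topologically irreducible Feller kernel and $w^\circ\in M^b_+(\Omega)$ is topologically positive, the minimal solution satisfies $w(x)-w^\circ(x)>0$ for every $x\in\Omega$. Combining this with the two inequalities above gives the pandemic conclusion.

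I do not expect a genuine obstacle here; the one place where care is needed is verifying that the eigenmeasure $\mu$ really yields the functional $\theta$ from Theorem \ref{re:Krein} (rather than some other eigenfunctional), but this is automatic because any measure-induced $\theta$ with $\theta(K g)=\cR_0\theta(g)$ and $\theta(u)>0$ qualifies, and $\theta(u)=\int_\Omega\kappa(\Omega,\cdot)\,d\mu>0$ since $u$ is topologically positive whenever $\cR_0>0$ (as $K$ would otherwise vanish on $M^b_+(\Omega)$, forcing $\cR_0=0$).
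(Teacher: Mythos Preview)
Your proposal is correct and follows essentially the same route as the paper, which simply states that Theorem \ref{re:epi-occur} in conjunction with Theorem \ref{re:eigenmeasure-intro} and Theorem \ref{re:irred-pos1} yields the result. Your added care about whether the measure-induced functional ``is'' the $\theta$ of Theorem \ref{re:Krein} is unnecessary: the proof of Theorem \ref{re:epi-occur}(b) only uses that $\theta$ is a positive linear eigenfunctional for $\cR_0$, so any such functional works.
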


\subsubsection{Strong Feller kernels}
\label{subsub:strong-Feller-intro}

Again, the last result shows the threshold property of $\cR_0$,
though not in a very strong sense. To compare
the final size solutions to nonzero fixed points of $\tilde w =F(\tilde w)$, we strengthen
 the Feller property.

\begin{definition}
\label{def:strong-Fel}
A measure kernel $\kappa$ is called a {\em strong
Feller kernel} if $\kappa(\omega,\cdot)$ is
continuous on $\Omega$ for all $\omega \in \cB$.
\end{definition}

A dominated Feller kernel (Section \ref{subsec:dominated-intro})
is a strong Feller kernel (Proposition \ref{re:dominated-Feller-strong}).
See Example \ref{exp:strong-Feller} for a measure
kernel that under weak assumptions
is a Feller kernel though not a strong Feller
kernel, but  is a strong Feller kernel
under stronger assumptions.
As we will show (Lemma \ref{re:Feller-strong}), the maps $K$ and $F$ associated with a strong Feller kernel map $M^b_+(\Omega)$ into
$C^b_+(\Omega)$.

\begin{theorem}
\label{re:Feller-strong-intro}
 Let $\kappa$ be a strong Feller kernel
 that is tight and topologically irreducible and let $\cR_0 > 1$.
  Let $w^\circ \in M^b_+(\Omega)$ be
  topologically
 positive.

 Let $w \in M^b_+(\Omega)$
 be the  minimal solution to
 $w = F(w) + w^\circ$.

 Then there exists some $\tilde w \in C^b_+(\Omega)$ such that
   $\tilde w$ strictly positive on $\Omega$, $\| \tilde w \|_\infty \ge \ln \cR_0$
 and $w -w^\circ \ge \tilde w  = F(\tilde w)$.
\end{theorem}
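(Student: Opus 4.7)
The plan is to construct $\tilde w$ as the pointwise limit of the decreasing iteration $\tilde w_{n+1} = F(\tilde w_n)$ starting from $\tilde w_0 := w - w^\circ = F(w)$, which by Lemma \ref{re:Feller-strong} lies in $C^b_+(\Omega)$ and by Theorem \ref{re:irred-pos1} is strictly positive on $\Omega$. Since $w^\circ \ge 0$ gives $\tilde w_0 \le w$, monotonicity of $F$ yields $\tilde w_1 = F(\tilde w_0) \le F(w) = \tilde w_0$, and inductively $(\tilde w_n)$ is a decreasing sequence in $C^b_+(\Omega)$. Setting $\tilde w := \lim_n \tilde w_n$ pointwise, dominated convergence in the integral defining $F$ gives $\tilde w = F(\tilde w)$, and Lemma \ref{re:Feller-strong} applied to this fixed-point equation places $\tilde w$ in $C^b_+(\Omega)$ with $\tilde w \le w - w^\circ$.

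Once $\tilde w \not\equiv 0$ is established, the remaining conclusions follow quickly. If $U := \{\tilde w > 0\}$ is a nonempty proper open subset of $\Omega$, topological irreducibility yields some $x_0 \in \Omega \setminus U$ with $\kappa(U, x_0) > 0$, and then $\tilde w(x_0) = F(\tilde w)(x_0) \ge \int_U f(\tilde w)\, \kappa(d\xi, x_0) > 0$, a contradiction; hence $\tilde w > 0$ on $\Omega$ and is in particular topologically positive. Theorem \ref{re:eigenmeasure-intro} then provides a tight eigenmeasure $\mu$ integrating topologically positive functions positively, so $\theta(\tilde w) := \int \tilde w \, d\mu > 0$, and Corollary \ref{re:fixpoint-estim} delivers $\|\tilde w\|_\infty \ge \ln \cR_0$.

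The main obstacle is ruling out $\tilde w \equiv 0$, and here tightness does the essential work. Suppose for contradiction that $\tilde w \equiv 0$. For any $\epsilon > 0$, tightness supplies a compact $W \subset \Omega$ with $\kappa(\Omega \setminus W, x) < \epsilon$ for all $x$; on $W$, Dini's theorem applied to the continuous decreasing sequence $(\tilde w_n)$ gives $\sup_W \tilde w_n \to 0$. Splitting the defining integral of $F$ across $W$ and its complement and using $f \le 1$ together with $u = \kappa(\Omega, \cdot) \in C^b_+(\Omega)$ (strong Feller with $\omega = \Omega$), one obtains
\[
\|\tilde w_{n+1}\|_\infty \le \sup_W f(\tilde w_n) \cdot \|u\|_\infty + \epsilon,
\]
so $\limsup_n \|\tilde w_n\|_\infty \le \epsilon$; since $\epsilon > 0$ was arbitrary, $\|\tilde w_n\|_\infty \to 0$ uniformly on $\Omega$.

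To close the contradiction, I apply the eigenmeasure $\theta$ to the lower bound (\ref{eq:order-der}): for $n$ large enough that $\|\tilde w_n\|_\infty < \ln \cR_0$,
\[
\theta(\tilde w_{n+1}) \ge e^{-\|\tilde w_n\|_\infty} \cR_0 \, \theta(\tilde w_n),
\]
with $e^{-\|\tilde w_n\|_\infty} \cR_0 > 1$, so $\theta(\tilde w_n)$ is eventually strictly increasing. The same irreducibility argument used in the strict-positivity step shows that $F$ preserves non-triviality, so $\tilde w_n \not\equiv 0$ for every $n$; since $\mu$ has full support (a nonempty open set with $\mu$-mass zero would contradict the positivity statement of Theorem \ref{re:eigenmeasure-intro} applied to a suitable continuous bump), this gives $\theta(\tilde w_n) > 0$ throughout and hence $\liminf_n \theta(\tilde w_n) > 0$. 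This contradicts $\theta(\tilde w_n) \le \|\tilde w_n\|_\infty \cdot \mu(\Omega) \to 0$ (tightness of $\mu$ makes it finite), so $\tilde w \not\equiv 0$, as required.
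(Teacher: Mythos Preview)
Your proof is correct, but it takes a genuinely different route from the paper's. The paper proves this result as Corollary \ref{re:metric-sandwich-single} by invoking Theorem \ref{re:tight-Feller-converge} with the scaled sequence $w^\circ_\ell = (1/\ell)\,w^\circ$: for each $\ell$ one considers the minimal solution $w_\ell$ of $w_\ell = F(w_\ell) + w^\circ_\ell$, and the crucial point is that \emph{each} $w_\ell - w^\circ_\ell$ already satisfies $\|w_\ell - w^\circ_\ell\|_\infty \ge \ln\cR_0$ by Theorem \ref{re:tight-irred-pandemic}; the decreasing limit $\tilde w$ then inherits this bound automatically once uniform convergence (via tightness and Dini) is established, and non-triviality of $\tilde w$ comes for free. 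You instead iterate $F$ directly from $\tilde w_0 = w - w^\circ$, where the iterates carry no a priori lower bound, and you compensate with a separate contradiction argument: if the limit vanished, tightness plus Dini would force $\|\tilde w_n\|_\infty \to 0$, and then the growth inequality $\theta(\tilde w_{n+1}) \ge e^{-\|\tilde w_n\|_\infty}\cR_0\,\theta(\tilde w_n)$ together with $\theta(\tilde w_n) > 0$ (each $\tilde w_n$ being continuous, nonzero, hence topologically positive) yields an eventually increasing positive sequence that also tends to zero. The paper's approach is more modular and sits inside the broader convergence framework of Theorem \ref{re:tight-Feller-converge}, which simultaneously delivers the asymptotic behaviour of final sizes as the initial data shrink; your approach is more self-contained for this single statement and avoids introducing the auxiliary perturbed problems, at the price of the extra eigenfunctional contradiction step. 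A minor streamlining: rather than arguing via full support of $\mu$, you can note directly that each $\tilde w_n$ is continuous and strictly positive (since $\kappa(\Omega,x)>0$ for all $x$ by (\ref{eq:top-irr-Omega})), hence topologically positive, so $\theta(\tilde w_n)>0$ follows immediately from Theorem \ref{re:eigenmeasure-intro}.
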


For the proof of this and related results see
Theorem \ref{re:tight-Feller-converge}
and its corollaries.


\subsubsection{Comparability kernels}
\label{subsub:compare-intro}

In the last result, the fixed point $\tilde w$
depends on $w^\circ$. To enforce independence,
we introduce the following concept (Section
\ref{subsec:compar-ker}). Cf. \cite[(K-2)]{Thi79}.

\begin{definition}
\label{def:compar-ker}
A Feller kernel $\kappa$ is called a
{\em comparability kernel} if for any
continuous function $g:\Omega \to (0,\infty)$
there exists some $\delta > 0$ such that
\begin{equation}
\label{eq:comparabilator-intro}
 \int_\Omega g(\xi) \kappa(d\xi,x) \ge \delta \, \kappa(\Omega, x), \qquad x \in \Omega.
\end{equation}
\end{definition}

Sufficient conditions and examples will be presented in Section \ref{subsubsec:compar-ker-exp}.

The next two theorems explores how the threshold
property of $\cR_0$ plays out for
the final size of the epidemic in terms of the cFoI if the number
of initial infectives is very small. The results parallel those of
Theorem \ref{re:semisep-sequ-intro}, but even if $\Omega$ is
a metric space the assumptions are different.
See the end of the discussion in Section \ref{subsec:disc-intro}.

\begin{theorem}
\label{re:ultimate-intro}
Let
$\kappa$ be a  strong Feller kernel that is
topologically irreducible and tight.
Assume that $\kappa$ is a comparability kernel and $\cR_0>1$.

\begin{itemize}

\item[(a)] Then there exists a unique nonzero solution     $\tilde w \in  C^b_+(\Omega)$ to
the equation $\tilde w = F(\tilde w)$; further $ \|\tilde w \| \ge \ln \cR_0$.

\item[(b)] If   $(w^\circ_{\ell})_{\ell \in \N}$ is a decreasing sequence of topologically positive functions
in $ M^b_+(\Omega)$
and $w^\circ_{\ell} \to 0$ as $\ell \to \infty$ uniformly on all compact subsets
of $\Omega$, then $w_\ell - w^\circ_\ell \to \tilde w$ uniformly on  $\Omega$
for the minimal solutions $w_\ell$ of $w_\ell = F(w_\ell) + w^\circ_{\ell}$.
\end{itemize}
\end{theorem}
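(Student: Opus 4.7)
The plan is to prove (a) by combining Theorem~\ref{re:Feller-strong-intro} (for existence) with a concave-operator uniqueness argument driven by the comparability-kernel hypothesis, and to prove (b) by a monotone-convergence argument that couples strong Feller continuity with Dini's theorem on a tight exhaustion of $\Omega$. For existence in (a), I apply Theorem~\ref{re:Feller-strong-intro} with any topologically positive datum (for instance $w^\circ \equiv 1$) to obtain a strictly positive $\tilde w \in C^b_+(\Omega)$ satisfying $\tilde w = F(\tilde w)$ and $\|\tilde w\|_\infty \ge \ln \cR_0$.

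For uniqueness, let $\tilde w_1,\tilde w_2 \in C^b_+(\Omega)$ be any two nonzero fixed points of $F$. The strong Feller property and topological irreducibility (cf.\ Theorem~\ref{re:irred-pos1}) imply that each $\tilde w_i$ is continuous and strictly positive on $\Omega$, so $f\circ \tilde w_i$ is continuous and strictly positive as well, and the comparability-kernel hypothesis yields constants $\delta_i>0$ with $\tilde w_i = F(\tilde w_i) \ge \delta_i u$, where $u(x)=\kappa(\Omega,x)$. Combined with the upper bound $\tilde w_i \le u$ coming from $f\le 1$, this produces uniform two-sided control $\delta_i u \le \tilde w_i \le u$. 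Hence $\lambda^{*} := \sup\{\lambda>0 : \lambda \tilde w_2 \le \tilde w_1\}$ lies in $[\delta_1,1/\delta_2]\subset(0,\infty)$ and is attained. Suppose for contradiction $\lambda^{*} < 1$. Strict concavity of $f$ on $[0,\infty)$ with $f(0)=0$ makes $\phi := f(\lambda^{*}\tilde w_2) - \lambda^{*} f(\tilde w_2)$ continuous and strictly positive on $\Omega$, and monotonicity of $F$ gives
\[
\tilde w_1 - \lambda^{*}\tilde w_2 = F(\tilde w_1) - \lambda^{*} F(\tilde w_2) \ge F(\lambda^{*}\tilde w_2) - \lambda^{*} F(\tilde w_2) = \int_\Omega \phi(\xi)\,\kappa(d\xi,\cdot).
\]
A second application of the comparability-kernel hypothesis, now to $\phi$, produces $\eta>0$ with $\int_\Omega \phi\,d\kappa(\cdot,x) \ge \eta\, u(x) \ge \eta\, \tilde w_2(x)$, so $\tilde w_1 \ge (\lambda^{*} + \eta)\tilde w_2$ uniformly on $\Omega$, contradicting the definition of $\lambda^{*}$. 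Thus $\lambda^{*} \ge 1$, and the symmetric argument gives $\tilde w_1 = \tilde w_2$.

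For (b), monotonicity of $F$ and of the datum $(w^\circ_\ell)$ forces the sequence $(w_\ell)$ of minimal solutions of (\ref{eq:recursion-intro}) to be decreasing, so $\hat w := \lim_\ell w_\ell$ exists pointwise; dominated (or monotone) convergence under the integral in $F$ together with $w^\circ_\ell \to 0$ pointwise gives $\hat w = F(\hat w)$. Theorem~\ref{re:Feller-strong-intro} applied to each topologically positive $w^\circ_\ell$ produces a strictly positive $\tilde w_\ell \in C^b_+(\Omega)$ with $w_\ell - w^\circ_\ell \ge \tilde w_\ell = F(\tilde w_\ell)$; the uniqueness from (a) forces every $\tilde w_\ell$ to coincide with the $\tilde w$ of (a), so $\hat w \ge \tilde w$ and hence $\hat w = \tilde w$ by another use of uniqueness. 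For the uniform convergence of $w_\ell - w^\circ_\ell = F(w_\ell)$, I use that this sequence is continuous by the strong Feller property, is decreasing in $\ell$ (from monotonicity of $F$ and of $w_\ell$), and has the continuous pointwise limit $\tilde w = F(\tilde w)$. Dini's theorem upgrades the convergence to uniform on every compact $W\subset\Omega$, and tightness — choose $W$ with $\kappa(\Omega\setminus W,x)<\epsilon$ uniformly in $x$ — handles the tail via the estimate
\[
0 \le F(w_\ell)(x) - F(\tilde w)(x) \le \sup_{\xi\in W}\bigl(f(w_\ell(\xi)) - f(\tilde w(\xi))\bigr)\,u(x) + 2\epsilon,
\]
which can be made arbitrarily small uniformly in $x$.

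The main technical obstacle is the uniform strict-concavity step in the uniqueness argument: since $f(\lambda y) - \lambda f(y) = O(y^2)$ as $y\to 0^{+}$, the pointwise strict inequality $\tilde w_1 - \lambda^{*}\tilde w_2 > 0$ does not by itself contradict maximality of $\lambda^{*}$ on a noncompact $\Omega$. The comparability-kernel hypothesis is exactly what bridges the gap, converting the pointwise excess into a uniform one of size $\eta u$; the remaining ingredients (strong Feller continuity, tightness, monotone convergence, Dini) are then standard bookkeeping that package the pointwise limit into a uniform one.
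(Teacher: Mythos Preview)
Your proposal is correct and follows essentially the same route as the paper: existence via Theorem~\ref{re:Feller-strong-intro} (which the paper packages as Corollary~\ref{re:metric-F-fixedpoint}), uniqueness via the concave-operator argument driven by the comparability hypothesis (the paper's Proposition~\ref{re:unique-prep}), and part~(b) via monotone convergence, Dini on compacta, and tightness for the tail (the paper's Theorem~\ref{re:tight-Feller-converge}). One small correction: for strict positivity of a nonzero fixed point $\tilde w_i = F(\tilde w_i)$ you should invoke Theorem~\ref{re:irred-pos2} rather than Theorem~\ref{re:irred-pos1}, since the latter concerns minimal solutions with a nonzero forcing term $w^\circ$; and in your tail estimate the bound is $\epsilon$ rather than $2\epsilon$, but this is immaterial.
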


\begin{theorem}
\label{re:epino-intro}
Let
$\kappa$ be a   Feller kernel that is
topologically irreducible.
Assume that $\kappa$ is a comparability kernel and $\cR_0 \le 1$.

\begin{itemize}
\item[(a)] Then there exists no solution $\tilde w \in \dot C^b_+(\Omega)$ to
the equation $\tilde w = F(\tilde w)$.

\item[(b)] If $\kappa$ is a strong Feller kernel and $(w^\circ_{\ell})_{\ell \in \N}$ is a   sequence
in $ M^b_+(\Omega)$
and $w^\circ_{\ell} \to 0$ as $\ell \to \infty$
pointwise
on $\Omega$, then $w_\ell  \to 0$ as $\ell \to \infty$ pointwise
for the minimal solutions $w_\ell$ of $w_\ell = F(w_\ell) + w^\circ_{\ell}$.
\end{itemize}
\end{theorem}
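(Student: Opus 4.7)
For part (a), my plan is to argue by contradiction: suppose $\tilde w\in\dot C^b_+(\Omega)$ satisfies $\tilde w=F(\tilde w)$ and derive $\cR_0>1$. The first step is to upgrade ``nonzero'' to ``strictly positive on $\Omega$'' using topological irreducibility. Put $Z=\{x:\tilde w(x)=0\}$ and suppose $Z\ne\emptyset$. Because $\tilde w=F(\tilde w)\ge f'(\|\tilde w\|_\infty)K\tilde w$ by (\ref{eq:order-der}), $K\tilde w$ vanishes on $Z$, so for $x\in Z$ the open proper subset $U:=\Omega\setminus Z$ satisfies $\kappa(U,x)=0$ (decompose $U=\bigcup_n\{\tilde w>1/n\}$ and use $\tilde w\ge 0$). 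This contradicts topological irreducibility, which requires some $y\in\Omega\setminus U=Z$ with $\kappa(U,y)>0$; hence $Z=\emptyset$ and $\tilde w>0$ on $\Omega$.

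The main engine is then to upgrade the pointwise strict inequality $F(\tilde w)<K\tilde w$ on $\{\tilde w>0\}$ to the quantitative lower bound $K\tilde w-\tilde w\ge\delta u$, with $u=\kappa(\Omega,\cdot)$, via the comparability kernel. Writing $h(y):=y-f(y)=y-1+e^{-y}$, which is continuous on $[0,\infty)$ and strictly positive on $(0,\infty)$, one has $K\tilde w-F(\tilde w)=\int h(\tilde w(\xi))\,\kappa(d\xi,\cdot)$. Since $\tilde w$ is continuous and strictly positive, so is $h\circ\tilde w$, and Definition \ref{def:compar-ker} applied to $h\circ\tilde w$ produces $K\tilde w-\tilde w\ge\delta u$ for some $\delta>0$. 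The same hypothesis applied to $\tilde w$ itself yields $K\tilde w\ge\delta' u$, and combined with $\tilde w\ge e^{-\|\tilde w\|_\infty}K\tilde w$ this gives $\tilde w\ge cu$ for some $c>0$. Assuming $\cR_0>0$, Theorem \ref{re:Krein} supplies an eigenfunctional $\theta$ with $\theta(\tilde w)>0$; applying $\theta$ to $K\tilde w-\tilde w\ge\delta u$ gives $(\cR_0-1)\theta(\tilde w)\ge\delta\theta(u)>0$, contradicting $\cR_0\le 1$. The residual case $\cR_0=0$ is handled separately by iterating $\tilde w\le K\tilde w$ to $\tilde w\le K^n\tilde w$; then $\|\tilde w\|_\infty\le\|K^n\|\,\|\tilde w\|_\infty$ and $\|K^n\|\to 0$ force $\tilde w=0$.

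For part (b) the plan is to work with $\hat w(x):=\limsup_{\ell\to\infty}w_\ell(x)$, which is automatically bounded: since $f\le 1$ gives $F(g)\le u$ for every $g\in M^b_+(\Omega)$, we have $w_\ell\le u+w_\ell^\circ$, and pointwise convergence $w_\ell^\circ\to 0$ yields $\hat w\le u$. Taking $\limsup_\ell$ in $w_\ell=F(w_\ell)+w_\ell^\circ$, applying the reverse Fatou lemma (legitimate because $f\le 1$ and $\kappa(\cdot,x)$ is finite), and using $\limsup_\ell f(w_\ell(\xi))=f(\hat w(\xi))$ (which holds because $f$ is continuous and nondecreasing), gives $\hat w\le F(\hat w)$. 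The iterates $F^n(\hat w)$ then form an increasing pointwise sequence bounded above by $u$, so they converge pointwise to some $\hat w_\infty\in M^b_+(\Omega)$; monotone convergence inside the integral defining $F$ upgrades this to $\hat w_\infty=F(\hat w_\infty)$.

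Because $\kappa$ is a strong Feller kernel, $\hat w_\infty=F(\hat w_\infty)\in C^b_+(\Omega)$ by Lemma \ref{re:Feller-strong}, so part (a) forces $\hat w_\infty=0$; hence $0\le\hat w\le\hat w_\infty=0$, i.e., $w_\ell\to 0$ pointwise. The main obstacle I anticipate is the passage from $F(\tilde w)<K\tilde w$ pointwise to the stronger bound $K\tilde w-\tilde w\ge\delta u$ in part (a): pointwise strict inequality alone is too weak to survive the positive linear functional $\theta$ in the critical case $\cR_0=1$, and it is precisely the comparability hypothesis, applied to the continuous strictly positive function $h\circ\tilde w$, that converts the pointwise gap into the uniform lower bound needed to pass to $\theta$ and obtain the contradiction.
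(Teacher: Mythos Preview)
Your proof is correct. Part (b) is essentially identical to the paper's argument: the paper packages your $\hat w\le F(\hat w)$ step and the subsequent iteration into Theorem \ref{re:final-size-converge} and Corollary \ref{re:final-size-converge-zero}, then invokes the strong Feller property exactly as you do to force the fixed point into $C^b_+(\Omega)$ and apply part (a).

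Part (a) differs in the choice of ``gap'' function and in how the contradiction with $\cR_0\le 1$ is extracted. The paper applies comparability to the strictly positive continuous function $\xi\mapsto f\big(\tfrac12\tilde w(\xi)\big)-\tfrac12 f(\tilde w(\xi))$ (strict concavity of $f$), obtaining $F(\tfrac12\tilde w)\ge(1+\epsilon)\tfrac12\tilde w$ and hence $K\tilde w\ge(1+\epsilon)\tilde w$; the conclusion $\cR_0\ge 1+\epsilon$ then follows from the Collatz--Wielandt type bound \cite[Thm.5.31]{Thi24} without any appeal to the eigenfunctional or a case split on $\cR_0=0$. You instead apply comparability to $h\circ\tilde w$ with $h(y)=y-f(y)$, obtain $K\tilde w-\tilde w\ge\delta u$, and close via the eigenfunctional $\theta$ (for $\cR_0>0$) together with a separate iteration argument for $\cR_0=0$. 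Your route works, but note that it can be shortened to match the paper's: since $\tilde w=F(\tilde w)\le u$, your inequality $K\tilde w\ge\tilde w+\delta u$ already gives $K\tilde w\ge(1+\delta)\tilde w$, and the Collatz--Wielandt bound then yields $\cR_0\ge 1+\delta$ directly, making the eigenfunctional and the $\cR_0=0$ case unnecessary.
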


For the proofs of these results see Section
\ref{subsec:compar-ker}.

\section{Discussion}
\label{subsec:disc-intro}


The results in this paper are based on
finding a positive eigenfunctional of a bounded linear
positive operator associated with its spectral
radius (Theorem \ref{re:Krein}). The existing literature mostly attempts
to find a positive eigenvector in addition, \cite{BoDI1, Ina23}, e.g. In the context of measure kernels, one
can take that route, too, if one imposes
a uniform Feller property.

A measure kernel $\kappa$ is called a {\em uniform
Feller kernel} \cite[Sec.13.6]{Thi24} if
\begin{equation}
\label{eq:uniform-Feller}
\sup_{\omega \in \cB}  \big | \kappa(\omega,x)  -  \kappa(\omega, x_0)  \big |
\to 0, \qquad x \to x_0 \in \Omega.
\end{equation}
If $\kappa$ is given in the form
\begin{equation}
\label{eq:meas-ker-susc-intro}
\kappa(\omega,x) = \int_\omega k(x,\xi)S_0(d\xi), \qquad \omega \in \cB,
\quad x \in \Omega,
\end{equation}
as it is from
 Section \ref{sec:model} to Section \ref{sec:final-size} following \cite{BoDI1}, (\ref{eq:uniform-Feller}) is
equivalent to
\[
\int_\Omega \big | k(x,\xi) - k(x_0,\xi)\big |\,S_0(d\xi)
\longrightarrow 0, \qquad x \to x_0.
\]
Cf. hypothesis  ${\rm H}_{A_1}$ in \cite[Sec.5]{BoDI1}. See also assumption (41) in \cite{Ina23}.

General measure kernels appear in
the final size equation for the epidemic
model in \cite[Sec.3]{Thi77}.
If $\kappa$ is a uniform Feller kernel, existence of a (strictly positive) eigenvector
of $K$ follows from \cite{Thi20}, \cite[Thm.13.52]{Thi24} and \cite[Thm.13.58]{Thi24}.

\begin{theorem}
\label{re:eigenvector}
Let $\kappa$ be a tight uniform Feller kernel
and $\cR_0= \br(\kappa) >0$. Then there exists an eigenvector
$ v \in  C^b_+(\Omega)$ such that
$\cR_0 v = K v$. If $\kappa$ is  topologically
irreducible in addition, $v$ is strictly positive on $\Omega$.
\end{theorem}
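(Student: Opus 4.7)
The plan is to produce the eigenvector by showing that $K$ has enough compactness on $C^b(\Omega)$ to invoke a Krein--Rutman-style theorem, and then to upgrade mere positivity to strict positivity using topological irreducibility.

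First I would interpret the uniform Feller condition (\ref{eq:uniform-Feller}) as saying that $x\mapsto\kappa(\cdot,x)$ is continuous from $\Omega$ into $\cM(\Omega)$ equipped with the total variation norm (the supremum in (\ref{eq:uniform-Feller}) controls half of the total variation). From this it follows that for every $g\in M^b(\Omega)$,
\[
|Kg(x)-Kg(x_0)|\le 2\|g\|_\infty\sup_{\omega\in\cB}|\kappa(\omega,x)-\kappa(\omega,x_0)|\longrightarrow 0
\]
as $x\to x_0$, so $K$ maps $M^b(\Omega)$, and in particular $C^b(\Omega)$, into $C^b(\Omega)$; moreover the family $\{Kg:\|g\|_\infty\le 1\}$ is equicontinuous at every point of $\Omega$.

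Second I would exploit tightness: for every $\epsilon>0$ there is a compact $W\subset\Omega$ with $\kappa(\Omega\setminus W,x)<\epsilon$ uniformly in $x$. Together with the uniform equicontinuity from the previous step, this is exactly the input needed to show, via a combined Arzel\`a--Ascoli / Prokhorov argument (the content of the cited \cite{Thi20} and \cite[Thm.13.58]{Thi24}), that $K$ is a compact positive operator on $C^b(\Omega)$. Since $C^b_+(\Omega)$ is a closed normal cone and $\br(K)=\cR_0>0$, the Krein--Rutman-type result \cite[Thm.13.52]{Thi24} yields a nonzero eigenvector $v\in C^b_+(\Omega)$ with $Kv=\cR_0 v$.

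Third, assume in addition that $\kappa$ is topologically irreducible. Let $U:=\{x\in\Omega:v(x)>0\}$, which is open by continuity of $v$ and nonempty since $v\ne 0$. If $U\ne\Omega$, then by topological irreducibility there exists $x_0\in\Omega\setminus U$ with $\kappa(U,x_0)>0$. Writing $U=\bigcup_{n\in\N}U_n$ with $U_n=\{v>1/n\}$, continuity of $\kappa(\cdot,x_0)$ from below provides some $n$ with $\kappa(U_n,x_0)>0$; then
\[
\cR_0\,v(x_0)=\int_\Omega v\,d\kappa(\cdot,x_0)\ge \frac{1}{n}\kappa(U_n,x_0)>0,
\]
contradicting $v(x_0)=0$. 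Hence $U=\Omega$ and $v$ is strictly positive on $\Omega$.

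The main obstacle is the compactness step: equicontinuity of $\{Kg:\|g\|_\infty\le 1\}$ alone is insufficient on a noncompact $\Omega$, and it takes the uniform control of the tails of $\kappa(\cdot,x)$ supplied by tightness to convert that equicontinuity into genuine compactness of $K$ (equivalently, relative compactness of $\{\kappa(\cdot,x):x\in\Omega\}$ in total variation). The cited references package this carefully, so the proof essentially reduces to verifying the hypotheses and quoting them.
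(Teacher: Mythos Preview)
The paper does not give its own proof here; it simply cites \cite{Thi20} and \cite[Thm.13.52, 13.58]{Thi24}. Your strict-positivity argument (third step) is correct and is essentially the paper's Theorem~\ref{re:irred-pos2}.

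There is, however, a genuine gap in your second step: $K$ need not be compact on $C^b(\Omega)$, nor is $\{\kappa(\cdot,x):x\in\Omega\}$ relatively compact in total variation. Take $\Omega=\R$ and $\kappa(d\xi,x)=(1+\sin(x\xi))\chi_{[0,1]}(\xi)\,d\xi$: this is tight (all mass on $[0,1]$) and uniformly Feller (indeed $\sup_\omega|\kappa(\omega,x)-\kappa(\omega,x_0)|\le|x-x_0|/2$). With $g_n(\xi)=\sin(n\xi)$ one finds $Kg_n(n)=\tfrac{1-\cos n}{n}+\int_0^1\sin^2(n\xi)\,d\xi\to\tfrac12$ while $Kg_n(x)\to 0$ for each fixed $x$ by Riemann--Lebesgue, so $(Kg_n)$ has no uniformly convergent subsequence. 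Equicontinuity and tightness of the measures $\kappa(\cdot,x)$ control the integration variable $\xi$, not the behaviour of $Kg(x)$ as $x$ leaves compact subsets of $\Omega$.

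What tightness does buy is that $K^2$ is compact: $K$ is within $\epsilon$ in operator norm of $K_W$ given by $K_Wg=\int_W g\,d\kappa(\cdot,\cdot)$, and $K_W$ factors through restriction to $C(W)$, on which the induced integral operator is compact by ordinary Arzel\`a--Ascoli (uniform Feller gives equicontinuity on the compact $W$); hence $K_W^2$ is compact and $K^2$ is a norm limit of compacts. An eigenvector of $K^2$ at $\cR_0^{\,2}$ then yields one of $K$ at $\cR_0$ via $v\mapsto Kv+\cR_0 v$. This or an equivalent device is presumably what the cited references carry out; your outline is on the right track, but the specific compactness claim for $K$ itself is false.
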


A somewhat more general result is
proved in \cite{Thi20}\cite[Thm.13.58]{Thi24}.
If $\kappa$ is given by (\ref{eq:meas-ker-susc-intro}),
 we could neither find
a strong  Feller kernel that is not
a uniform Feller kernel  nor could we prove
that any strong Feller kernel is a uniform
Feller kernel.
If we do not insist on an epidemiologic
interpretation, we can consider the following
measure kernel.
\begin{example}
\label{exp:strong-Feller}
Let $\Omega = \R$ with the standard topology
induced by the absolute value. Consider
\[
\kappa (\omega,x) = \int_\R \chi_\omega \big(\xi \phi(x)\big ) k_0(\xi)\, d\xi,
\qquad
\omega \in \cB, \quad x \in \R,
\]
where $k_0 : \R \to \R_+$ is finitely integrable
and $\phi:\R \to (0,\infty)$ is continuous.

Here, $\chi_\omega$ is the characteristic
or indicator function of the set $\omega$,
$\chi_\omega (x) =1$ if $x\in \omega$
and $\chi_\omega(x) = 0$ if $x \in \Omega \setminus \omega$.
By Lebesgue's theorem of dominated convergence,
$\kappa$ is a Feller kernel. After a substitution,
\[
\kappa (\omega,x) = \int_\omega   k_0 \big(\eta/\phi(x)\big) \,(1/\phi(x)) \,d\eta,
\qquad
\omega \in \cB, \quad x \in \R.
\]
Assume that $k_0$ is lower semicontinuous.
Let $x_n \to x$ as $n \to \infty$.
By Fatou's lemma,
\[
\liminf_{n \to \infty} \kappa (\omega, x_n)
\ge
\int_\omega \liminf_{n \to \infty} k_0\big (\eta/\phi(x_n)\big )\,(1/\phi(x_n))
\, d \eta.
\]
Since $k_0$ is lower semicontinuous and $\phi$ is continuous,
\[
\liminf_{n \to \infty}
k_0 \big (\eta/\phi(x_n)\big)\,(1/\phi(x_n)) \ge k_0\big(\eta/\phi(x)\big)(1 /\phi(x)), \qquad \eta \in \R.
\]
See  \cite[L.A.49]{Thi24} which also implies that
$\kappa(\omega, \cdot)$ is lower semicontinuous.
By Proposition \ref{re:kappa-lower-sc},
$\kappa$ is a strong Feller kernel.

$\kappa$ is a uniform Feller kernel if and only if,
for any sequence $x_n \to x \;(n \to \infty)$,
\[
\int_\Omega  \Big | k_0\big(\eta/\phi(x_n)\big) \, (1/\phi(x_n)) -
k_0\big (\eta/\phi(x)\big)\, (1/\phi(x)) \Big | \,d\eta \to 0,
\qquad n \to \infty.
\]
We cannot present a lower semicontinuous
$k_0 \in L^1(\R)$ that does not satisfy this
condition, but if one would like to come
up with  sufficient conditions they would presumably
 entail Lebesgue's theorem of dominated
 convergence involving a.e. continuity of $k_0 $  and some  domination condition stronger than just $k_0$ being finitely  integrable.
\end{example}

Our epidemic model considers density-dependent
alias mass action incidence as do the models
in \cite{BoDI1, Ina23}. See (\ref{eq:incidence-intro}).
More generally, we could consider
\begin{equation}
\label{eq:porb-sus-diff1}
\p_t s(t,x) = - \phi(s(t,x)) I(t,x),
\qquad s(0,x) =1.
\end{equation}
Here $\phi:\R_+ \to \R_+$ is continuous and strictly
increasing,
positive on $(0,\infty)$,  $\phi(0)=0$, $\phi(1) =1$ and $\int_0^1 (1/\phi(r)) dr = \infty$. Then
\[
1- s(t,x) = f (J(t,x)),
\]
where $f$
is an increasing concave function with strictly
decreasing derivative, $f(0)=0$, $f'(0)=1$. Cf. \cite[(5)(6)]{Thi77}.

Then the result $\|w -w^\circ\|_\infty \ge \ln \cR_0>0$
in Theorem \ref{re:epi-occur} and other theorems
has the more general form $f'(\|w -w^\circ\|_\infty) \le 1/\cR_0 < 1$.
See the proof of Theorem \ref{re:epi-occur} (b).

The generalization
\begin{equation}
\label{eq:porb-sus-diff2}
\p_t s(t,x) = - s(t,x) \, \psi( I(t,x)),
\qquad s(0,x) =1,
\end{equation}
with a continuous increasing $\psi: \R_+ \to \R_+$
seems to be much more difficult to handle. For piecewise linear $\psi$ see \cite{LuSt-free}, and for
general incidences (and no host population structure) see \cite{FCGTage}.

There are two different though not disjoint sets of assumptions under which
the threshold properties of $\cR_0$ for epidemics with small numbers of initial infectives can be rather completely
described: if $\kappa$ is a semi-separable measure kernel (Theorem
\ref{re:semisep-sequ-intro})
or if $\Omega $ is a metric space and $\kappa$
is a strong Feller kernel that is tight and a topologically irreducible comparability kernel (Theorem \ref{re:ultimate-intro} and \ref{re:epino-intro}).

On the one hand, every semi-separable measure kernel is
a comparability kernel (Proposition \ref{re:separ-envelop}).

On the other hand, a semi-separable Feller kernel is tight if and only if the measure $\nu$ in (\ref{eq:separ-intro}) is tight. Moreover, a semi-separable Feller kernel is topologically
irreducible (Definition \ref{def:Feller}) only if $\nu(U) > 0$ for the measure $\nu$ in (\ref{eq:separ-intro}) and for every non-empty open
subset $U$ of $\Omega$.

This paper does not deal with the important numerical
computation of $\cR_0$. Alternative characterizations
of the spectral radius, which one could use as a
starting point for this task,
can be found in \cite[Sec.5.5, Chap.6]{Thi24}; an iterative  approximation of the spectral radius  is proposed in \cite[Sec.6.4]{Thi24}. The numerical procedures
we are aware of, for example \cite{BrRiVe, DeSeVe} and the
references therein,  seem to
rely on the Krein-Rutman theorem for compact operators, which are approximated by operators the spectral
radii of which can be determined more easily.
It is one of the points of this paper   to avoid compactness assumptions and
use Theorem \ref{re:Krein} for cones with interior points instead; this result can  be
found in \cite{KrRu} as well. Without compactness,
continuous dependence of the spectral radius
on its operator may become a problem \cite{LeNu,
Thiconti} \cite[Ch.10]{Thi24}.



\section{The epidemic model}
\label{sec:model}

We consider a host population which is structured
by characteristic traits $x \in \Omega$, where
$\Omega$ is a measurable space with $\sigma$-algebra $\cB$. The model starts at time $t =0$. Our derivation follows
\cite{BoDI1}, but we   consider an epidemic
which is triggered by initial infectives.

\subsection{Susceptible hosts, incidence and force of infection}

Let $I(t,x)$ be the force of infection that
affects susceptible hosts with trait $x \in \Omega$ at time $t \ge 0$. Let us consider a typical host with trait $x$
and $s(t,x)$ be the probability that this host
is still susceptible at time $t$ provided it was
susceptible at time $0$ \cite{BoDI1}. Then
\begin{equation}
\label{eq:prob-susc-FoI}
s(t,x) = e^{-J(t,x)}, \qquad J(t,x) = \int_0^t I(r,x)dr, \qquad t \ge 0, \quad x \in \Omega,
\end{equation}
provided that the integral makes sense. $J(t,x)$ is the cumulative force of infection affecting susceptible hosts
with trait $x$, accumulated from time 0 to time $t$. $J$ is certainly well-defined if $I(\cdot,x)$
is continuous and, by the fundamental theorem
of calculus and the chain rule,
\begin{equation}
\label{eq:porb-sus-diff}
\p_t s(t,x) = - s(t,x) I(t,x).
\end{equation}
$J$ is also well-defined if $I(\cdot,x)$
is finitely integrable on bounded subintervals
of $\R_+$. Then $s(\cdot,x)$ is absolutely
continuous and the last differential equation
holds for a.a. $t \ge 0$ \cite[L.8.1]{PoTh}.
In other words,
\begin{equation}
\label{eq:prob-non-inf}
1 - s(t,x) = \int_0^t s(r,x) I(r,x) dr,
\qquad
t \ge 0, \quad x\in \Omega.
\end{equation}

Following \cite{BoDI1, BoDI2}, we assume that
 the structural distribution of susceptible
hosts at time 0 is described by a non-negative measure $S_0: \cB \to \R_+$. Since the size of the susceptible
population is finite, $S_0(\Omega) < \infty$.  The structural distribution
of susceptible hosts at time $t\ge 0$ is given by
\begin{equation}
\label{eq:suscept}
S(t,\omega) = \int_\omega s(t,x) S_0(dx),
\end{equation}
where $S(t,\omega)$ is the number of susceptible
hosts with trait in the set $\omega \in \cB$ at time $t \ge 0$.


\subsection{Incidence}

We redo the modeling steps because differently
from \cite{BoDI1} we do not consider an
epidemic with prehistory but an emerging epidemic,
for instance if a pathogen
overcomes a species barrier.

$S_0(\omega) - S(t,\omega)$ is the number
of  infected host with trait in $\omega \in \cB$ at time $t$.
By (\ref{eq:suscept}),
 (\ref{eq:prob-susc-FoI}), (\ref{eq:prob-non-inf}) and Tonelli's theorem,
changing the order of integration,
\begin{equation}
\label{eq:incid-suscep}
S_0(\omega) - S(t,\omega)
= \int_\Omega (1 -s(t,x)) S_0(dx) = \int_0^t B(r,\omega) dr,
\end{equation}
\begin{equation}
\label{eq:incid}
B(t, \omega) = \int_\omega s(t,x) I(t,x) S_0(dx),
\qquad t \ge 0, \quad \omega \in \cB.
\end{equation}
$B(t,\omega)$ is the incidence of the disease,  the
rate of new infections with trait in $\omega$ at time $t$.
So to speak, $B$ is the birth rate of the disease.


\subsection{Force of infection and infection age}

The infection age $a$ of an infected host is the
time since its infection.
Let $u(t, \cdot,\omega)$ be the infection-age
density of infected hosts at time $t$ and trait
in $\omega \in \cB$. Recall the incidence $B$, the rate
of infections,
\begin{equation}
\label{eq:inf-age}
u(t,a , \omega) = \int_\omega P(a,x) B(t-a, dx),
\qquad t> a, \quad \omega \in \cB.
\end{equation}
Here, $P(a,x) \in [0,1]$ is the probability that a host
with trait $x$ that has been infected  in the past is still infected (in particular alive) at infection age $a$, $P(0,x) =1$ and
$P(\cdot,x)$ decreasing on $\R_+$ for each $x \in \Omega$.
By (\ref{eq:incid}),
\begin{equation}
\label{eq:inf-age-later}
u(t,a,\omega) = \int_\omega P(a,x)
 s(t-a,x) I(t-a,x) S_0(dx), \qquad t > a, \omega \in \cB.
\end{equation}
Let $u_0 (a, \omega)$ be the hosts with trait
in the set $\omega \in \cB$ and infection age $a$ at time 0.
Then
\begin{equation}
\label{eq:inf-age-init}
u(t,a, \omega) = \int_\omega Q(a,a-t,x) u_0(a-t, dx),
\qquad a > t, \quad \omega \in \cB.
\end{equation}
Here $Q(a,s,x)\in [0,1]$ is the probability that an
infected host with trait $x$ and  with infection age $s$ at the beginning is still alive and infected at age $a >s$.
Often, one chooses
\begin{equation}
\label{eq:prob-remain-inf-life}
Q(a,s ,x) = \frac{P(a,x)}{P(s,x)}, \qquad a > s >0. \end{equation}
However,  this assumes that the infection of the initially infected hosts occurred in the same way as after the
start of the epidemic.

The force of infection  that affects a typical susceptible individual of trait $x$ at time $t$
is given by
\begin{equation}
\label{eq:FoI}
I(t,x) = \int_0^\infty \Big (\int_\Omega \eta(x,a, \xi)
u(t, a, d \xi) \Big ) da , \qquad t \ge 0, \quad
x \in \Omega,
\end{equation}
where $\eta(x,a,\xi)$ indicates how a susceptible
host with trait $x \in \Omega$ is affected by an infected host with trait $\xi \in \Omega$ and infection age $a$.
By  (\ref{eq:inf-age-later}) and (\ref{eq:inf-age-init}),
\begin{equation}
\label{eq:FoI2}
I(t,x) = \int_0^t \Big ( \int_\Omega A(x,a,\xi)
 s(t-a,\xi) I(t-a,\xi) S_0(d \xi) \Big ) da
+ I_0(t,x),
\end{equation}
\begin{equation}
\label{eq:infectivity-combined}
A(x,a,\xi) = \eta(x,a,\xi)
P(a, \xi), \qquad x,\xi \in \Omega, \quad a \ge 0,
\end{equation}
and $I_0$ is the force of infection due
to the initially infected hosts,
\begin{equation}
\label{eq:FoIinit}
I_0(t,x) = \int_0^\infty \Big (\int_\Omega \eta(x,t+a,\xi)
Q(t+a,a, \xi) u_0(a,d\xi) \Big ) da.
\end{equation}


\subsection{The cumulative  force of infection}
\label{subsec:model equation}

Let
\begin{equation}
\label{eq:cum-FoI-initial}
J_0(t,x) = \int_0^t I_0(r,x) dr, \qquad t \ge 0, \quad x \in \Omega,
\end{equation}
be the cumulative force of infection due to
the initially infected hosts.
By Tonelli's theorem and (\ref{eq:prob-susc-FoI})
and (\ref{eq:FoI2}),
\[
\begin{split}
&J(t,x) - J_0(t,x)= \int_0^t \big (I(r,x) dr - I_0(r,x) \big ) dr
\\
= &
\int_\Omega S_0(d \xi) \int_0^t   A(x,a,\xi)
\Big ( \int_a^t  s(r-a,\xi) I(r-a,\xi)  dr \Big ) da.
\end{split}
\]
After a change of variables and by (\ref{eq:prob-non-inf}),
\[
\begin{split}
J(t,x) - J_0(t,x)= &
\int_\Omega S_0(d \xi) \int_0^t   A(x,a,\xi)
\Big ( \int_0^{t-a}  s(r,\xi) I(r,\xi)  dr \Big ) da .
\\ = &
\int_\Omega S_0(d \xi) \int_0^t   A(x,a,\xi)
\Big ( 1- s(t-a,\xi)    \Big ) da .
\end{split}
\]
By (\ref{eq:prob-susc-FoI}),
\begin{equation}
\label{eq:cum-inf-force-eq}
J(t,x) - J_0(t,x)
=
\int_\Omega  S_0(d \xi) \int_0^t   A(x,a,\xi)
f \big (J(t-a,\xi)    \big ) da,
\end{equation}
with
$f(J) = 1 - e^{-J}$, $J \in \R$, (\ref{eq:skellam-intro}). Cf. \cite[(10)]{Ina23}.
By (\ref{eq:FoIinit}) and (\ref{eq:cum-FoI-initial}),
\begin{equation}
\label{eq:cum-FoI-initial-explic}
J_0(t,x) = \int_0^\infty \int_\Omega \Big( \int_0^t \eta(x,r+a,\xi) Q(r+a,a, \xi)dr\Big ) u_0(a,d\xi)  da.
\end{equation}

\begin{assumption}
\label{ass:initial-inf-bounded}

The functions $\eta, P, Q $ are nonnegative
and are measurable on their respective domains
equipped with the appropriate product
$\sigma$-algebras. $u_0: \R_+ \times \cB \to \R_+$ has the analogous
properties of a measure kernel,
\begin{equation}
\label{eq:init-infec-age}
\int_0^\infty u_0 (a, \Omega) da < \infty.
\end{equation}
The integrals
\begin{equation}
\label{eq:init-influ-cum}
\int_0^\infty \eta(x,r+a, \xi) Q(r+a,a, \xi) dr
\end{equation}
 provide a bounded function of $(x,\xi,a) \in \Omega^2 \times \R_+$. The integrals
\begin{equation}
\int_\Omega S_0(d\xi) \int_0^\infty
A(x,a,\xi) da
\end{equation}
with $A$ from (\ref{eq:infectivity-combined})
provide a bounded function of $x \in \Omega$.

\end{assumption}

\begin{remark} Assumption (\ref{eq:init-influ-cum})
holds, e.g., if $\eta:\Omega \times \R_+ \times
\Omega \to \R_+$ is measurable and bounded and $D:\R_+ \times \Omega \to \R_+$ given by
\begin{equation}
D(a,\xi) = \int_0^\infty Q(r+a,a,\xi) dr,
\qquad a \in \R_+, x \in \Omega,
\end{equation}
is a bounded function. If $Q$ is given by
(\ref{eq:prob-remain-inf-life}), $D(a,\xi)$
is the expected duration of remaining infected
life at infection-age $a$ with trait $\xi$.
See \cite[Sec.12.4]{Thi03}.
\end{remark}


\section{Existence of minimal solutions}
\label{sec:existence}


As one can expect from (\ref{eq:cum-FoI-initial},  $J_0(\cdot, x)$ is increasing
on $\R_+$ for all $x \in \Omega$. See
(\ref{eq:cum-FoI-initial-explic}).
Guided by (\ref{eq:cum-inf-force-eq}), for $n \in \Z_+$, we define inductively
\begin{equation}
\label{eq:recursive}
J_{n+1}(t,x) = J_0(t,x)
+
\int_\Omega S_0(d \xi) \int_0^t   A(x,a,\xi)
  f(J_n(t-a,\xi)) da .
\end{equation}
By induction, since $f$ in (\ref{eq:skellam-intro}), is increasing,
$
J_{n+1} (t,x) \ge J_n(t,x)$, $ t \in \R_+, x \in \Omega,
$
and
$J_n(\cdot, x)$ is increasing on $\R_+$ for all
$n \in \Z_+$, $x \in \Omega$. Further, for $n \in \N$,
\begin{equation}
J_n(t,x) \le J_0(t,x) + \int_\Omega S_0(d\xi) \int_0^t
A(x,a,\xi) da ,
\qquad t \in \R_+, x \in \Omega.
\end{equation}
For all $(t,x)$, the sequences $(J_n(t,x))$
are increasing and bounded and the limits
\begin{equation}
J(t,x) := \lim_{n \to \infty} J_n(t,x), \qquad
t \in \R_+, \quad x \in \Omega,
\end{equation}
exist pointwise.
By Beppo Levi's theorem of monotone convergence,
we can take the limit $n \to \infty$ in
(\ref{eq:recursive}) and obtain that $J$ is a
solution of (\ref{eq:cum-inf-force-eq})
and $J(\cdot,x)$ is increasing for all $x \in \Omega$.

Let $\tilde J$ be also a solution  of (\ref{eq:cum-inf-force-eq}).
By induction, $J_n(t,x) \le \tilde J(t,x)$ for all $n \in \N$
and  $J(t,x) \le \tilde J(t,x)$.
So $J$ is the minimal solution of (\ref{eq:cum-inf-force-eq}).
Of course, there is at most one minimal solution.
 $J_n$ can be interpreted as the
cumulative infective force due to the infected
generations from the initial ($0^{th}$) to the $n^{th}$
 generation. This suggests that the minimal solution,
 which is their limit, is the epidemiologically
 relevant solution. In summary:

 \begin{theorem}
 \label{re:minimal-sol-Vol-Ham}
 There is a minimal solution $J$ of (\ref{eq:cum-inf-force-eq}) which is the
 monotone limit
 of the recursion (\ref{eq:recursive}).
 $J$ is the epidemiologically relevant solution
 of (\ref{eq:cum-inf-force-eq}).
 \end{theorem}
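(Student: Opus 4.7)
The plan is to build $J$ explicitly as the pointwise limit of the Picard--type iteration in (\ref{eq:recursive}) started at $J_0$, and then to argue minimality by a parallel induction against any competing solution. The argument rests on three ingredients that fit together cleanly: monotonicity of the nonlinearity $f(y)=1-e^{-y}$, a uniform a priori bound coming from Assumption \ref{ass:initial-inf-bounded}, and Beppo Levi's monotone convergence theorem, which is what makes the exchange of limit and double integral legal.

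First I would record that $J_0(\cdot,x)$ is increasing in $t$ for every $x$, visible directly from (\ref{eq:cum-FoI-initial-explic}): the inner $r$-integral expands as $t$ grows, and everything is nonnegative by Assumption \ref{ass:initial-inf-bounded}. Next, by induction on $n$ in the recursion (\ref{eq:recursive}), two monotonicity statements propagate simultaneously, both using that $f$ is increasing and that $A,S_0$ are nonnegative: (i) $J_{n+1}(t,x)\ge J_n(t,x)$ for all $t,x$, using $J_1\ge J_0$ from $f\ge 0$ as the base case; and (ii) $J_{n+1}(\cdot,x)$ is increasing in $t$ whenever $J_n(\cdot,x)$ is, by change of variable in the convolution integral.

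With monotonicity in hand, I would obtain a uniform bound by using $f\le 1$: the recursion yields
\[
J_n(t,x) \le J_0(t,x) + \int_\Omega S_0(d\xi)\int_0^t A(x,a,\xi)\,da,
\]
and the right-hand side is bounded in $(t,x)$ by Assumption \ref{ass:initial-inf-bounded} (both $J_0$, via (\ref{eq:init-influ-cum}) and (\ref{eq:init-infec-age}), and the $A$--integral are bounded). So the increasing sequence $(J_n(t,x))$ converges pointwise to some $J(t,x)\in\R_+$, and $J(\cdot,x)$ inherits the property of being increasing in $t$. Passing to the limit $n\to\infty$ on both sides of (\ref{eq:recursive}) is the only analytic step that needs justification: because $J_n\uparrow J$ and $f$ is continuous and increasing, $f(J_n(t-a,\xi))\uparrow f(J(t-a,\xi))$, and monotone convergence (applied to the product measure $S_0\otimes da$ on $\Omega\times[0,t]$) lets me move the limit through the double integral, giving that $J$ solves (\ref{eq:cum-inf-force-eq}).

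For minimality, let $\tilde J$ be any nonnegative solution of (\ref{eq:cum-inf-force-eq}). By induction on $n$: trivially $J_0\le \tilde J$ since $\tilde J \ge J_0$ follows from $f\ge 0$ and the equation for $\tilde J$; and if $J_n\le \tilde J$, then monotonicity of $f$ combined with (\ref{eq:recursive}) and the fixed-point equation for $\tilde J$ yields $J_{n+1}\le \tilde J$. Taking $n\to\infty$ gives $J\le \tilde J$, so $J$ is the (necessarily unique) minimal solution. The epidemiological interpretation of $J_n$ as the cumulative force of infection generated by the $0$-th through $n$-th infective generations, already indicated in the text preceding the theorem, then justifies singling out the monotone limit as the physically relevant solution. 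The only place I expect a reader to want more care is in verifying the domination bounds that make the monotone convergence application and the supremum bound on $J$ genuinely finite, but these all reduce to Assumption \ref{ass:initial-inf-bounded}.
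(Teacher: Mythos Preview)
Your proposal is correct and follows essentially the same route as the paper: the paper builds the iterates from (\ref{eq:recursive}), proves by induction that $(J_n)$ is increasing in $n$ and in $t$, bounds it using $f\le 1$ and Assumption \ref{ass:initial-inf-bounded}, passes to the limit via Beppo Levi, and then proves minimality by inducting $J_n\le\tilde J$ against an arbitrary solution $\tilde J$. Your write-up merely spells out a few details (the base case $J_1\ge J_0$, the change of variables for monotonicity in $t$, the product-measure formulation of monotone convergence) that the paper leaves implicit.
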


Analogously, as in \cite{Thi77}, we could
derive conditions for the minimal solution
to be the only solution  of (\ref{eq:cum-inf-force-eq}).

\subsection{The case of a positive minimum latency period}

Assume that there is some $a_0 > 0$ such that
\begin{equation}
\eta(x, a,\xi) =0 , \qquad a \in [0,a_0), \quad
x, \xi \in \Omega.
\end{equation}
Then, for all traits $x$, there is a latency
period with length greater or equal to $a_0$.

\begin{theorem}
\label{re:min-lat-per0}
For $t \le na_0$, $J(t,x) = J_n(t,x)$
for all $x \in \Omega$.
\end{theorem}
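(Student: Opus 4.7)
The plan is to proceed by induction on $n$, exploiting the fact that the hypothesis $\eta(x,a,\xi)=0$ for $a\in[0,a_0)$ immediately gives $A(x,a,\xi)=\eta(x,a,\xi)P(a,\xi)=0$ on the same interval, so that in the recursion (\ref{eq:recursive}) the inner integral in $a$ effectively only runs over $[a_0,t]$ and is empty whenever $t<a_0$. Since the $J_n$ are increasing with limit $J$, the cleanest statement to prove by induction is the stronger claim

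\begin{equation*}
J_m(t,x)=J_n(t,x) \qquad \text{for all } m\ge n, \ t\le n a_0, \ x\in\Omega,
\end{equation*}

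from which the theorem follows by letting $m\to\infty$.

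For the base case $n=1$, I would observe that for $m=1$ there is nothing to prove, and for $m\ge 2$ the identity
\begin{equation*}
J_m(t,x)=J_0(t,x)+\int_\Omega S_0(d\xi)\int_{a_0}^t A(x,a,\xi)\,f(J_{m-1}(t-a,\xi))\,da
\end{equation*}
(using $A\equiv 0$ on $[0,a_0)$) shows that the $a$-integral is over an empty set whenever $t\le a_0$, so $J_m(t,x)=J_0(t,x)=J_1(t,x)$.

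For the induction step, suppose the claim holds at level $n$; fix $t\le (n+1)a_0$ and $m\ge n+1$. Again the case $m=n+1$ is trivial, and for $m\ge n+2$ I would write
\begin{equation*}
J_m(t,x)=J_0(t,x)+\int_\Omega S_0(d\xi)\int_{a_0}^t A(x,a,\xi)\,f(J_{m-1}(t-a,\xi))\,da.
\end{equation*}
For $a\in[a_0,t]$ one has $t-a\le t-a_0\le n a_0$, and $m-1\ge n+1\ge n$, so by the inductive hypothesis $J_{m-1}(t-a,\xi)=J_n(t-a,\xi)$. Substituting and re-extending the $a$-integral back to $[0,t]$ (since $A$ vanishes on $[0,a_0)$) yields $J_m(t,x)=J_{n+1}(t,x)$, completing the induction.

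Finally, passing to the limit $m\to\infty$ in $J_m(t,x)=J_n(t,x)$ for $t\le n a_0$ gives $J(t,x)=J_n(t,x)$ as claimed. I do not anticipate any real obstacle: the argument is a clean propagation-of-support induction, and the only thing to be careful about is to track that the bound $t-a\le n a_0$ in the integrand is precisely what matches the preceding inductive level.
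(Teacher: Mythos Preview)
Your proof is correct and follows essentially the same induction-on-$n$ argument as the paper: both exploit that $A(x,a,\xi)=0$ for $a\in[0,a_0)$ to restrict the $a$-integral to $[a_0,t]$, and then use $t-a\le na_0$ to invoke the inductive hypothesis. The only cosmetic difference is that the paper applies the induction directly to the equation satisfied by the limit $J$ (namely (\ref{eq:cum-inf-force-eq})), whereas you prove the stabilisation $J_m=J_n$ for all $m\ge n$ and pass to the limit at the end; both routes are equally short and rest on the same idea.
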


\begin{proof}
This holds for $n=1$. Let $n \in \N$ and the assertion is true for $n$. By (\ref{eq:recursive}), $J(t,x) = J_{n+1}(t,x)$ for $ t \in [0,na_0]$.
Let $t \in (na_0,(n+1) a_0)$.
By (\ref{eq:cum-inf-force-eq}),
\[
J(t,x) - J_0(t,x) = \int_\Omega S_0(dx) \int_{a_0}^t A(x,a,\xi) f(J(t-a,x) ) da.
\]
Since $J(t-a,x ) = J_n(t-a,x)$ for $t \le (n+1)a$
and $a \ge a_0$, $J(t,x) = J_{n+1} (t,x)$ for
$t \le (n+1)a$, $x \in \Omega$.
\end{proof}


\section{The final size of the epidemic}
\label{sec:final-size}


Recall that the minimal solution $J$ of (\ref{eq:cum-inf-force-eq})
has the property that $J(t, x)$
is an increasing bounded function of $t \ge 0$
for any $x \in \Omega$. Hence, the limit
\begin{equation}
\label{eq:final-cum-FoI}
w(x)   = \lim_{t\to \infty} J(t,x), \quad x \in \Omega,
\end{equation}
exists for all $x \in \Omega$. By (\ref{eq:cum-FoI-initial-explic}),
\begin{equation}
\label{eq:final-cum-init-FoI}
\begin{split}
J_0(t,x) \nearrow w_0(x)
:= &
\int_0^\infty \int_\Omega A_0(x,a, \xi) u_0(a,d\xi)  da,
\\
A_0(x,a,\xi) = &  \int_0^\infty \eta(x,r+a,\xi) Q(r+a,a, \xi)dr.
\end{split}
\end{equation}
By Beppo Levi's theorem of  monotone convergence (or the Lebesgue-Fatou lemma
\cite[p.468]{Thi03}),
we can take the limit $t \to \infty$ in
(\ref{eq:cum-inf-force-eq})  (cf. \cite[(31)]{Ina23}),
\begin{equation}
\label{eq:cum-inf-force-final}
w(x) =w_0(x)
+
\int_\Omega  k(x,\xi)
  f( w(\xi))  S_0(d \xi), \qquad x \in \Omega,
\end{equation}
with $f$ in (\ref{eq:skellam-intro}) and
\begin{equation}
\label{eq:integral-kernel}
k(x,\xi) = \int_0^\infty A(x,a,\xi) da, \qquad x, \xi \in \Omega.
\end{equation}

\begin{theorem}
The function $w: \Omega \to \R_+$ given by
\begin{equation}
\label{eq:final-size}
w(x) = \lim_{t\to \infty} J(t,x),
\qquad x \in \Omega,
\end{equation}
is the minimal solution of
(\ref{eq:cum-inf-force-final}) and is obtained as
pointwise limit
\begin{equation}
\label{eq:final-limit}
w(x) = \lim_{n\to \infty} w_n(x), \quad x \in \Omega,
\end{equation}
 of the recursion
\begin{equation}
\label{eq:final-recursive}
w_{n+1}(x) = \int_\Omega k(x, \xi) f( w_n(\xi))  S_0(d \xi) +w_0(x), \qquad n \in \Z_+, \quad x \in \Omega.
\end{equation}
Further, for all $x\in \Omega$ and $n \in \N$
\begin{equation}
\label{eq:time-limit-gener}
J_n(t,x) \nearrow w_n(x), \qquad  t \nearrow \infty.
\end{equation}
\end{theorem}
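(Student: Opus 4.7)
The plan is to derive the final size equation and its minimal-solution property by passing monotone limits through the equations and recursions already established for $J$ and $J_n$. Since every sequence in sight is nonnegative and monotone, Beppo Levi's theorem will do all the interchanges of limits, so no measurability or domination subtleties arise.

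First I would establish the claim (\ref{eq:time-limit-gener}) by induction on $n$. For $n=0$: by (\ref{eq:cum-FoI-initial-explic}), $J_0(\cdot,x)$ is an increasing function of $t$, and $J_0(t,x)\nearrow w_0(x)$ as $t\nearrow\infty$ by monotone convergence applied to the two outer integrals. For the inductive step, assume $J_n(t,\xi)\nearrow w_n(\xi)$ as $t\nearrow\infty$ for every $\xi\in\Omega$. Since $f$ is continuous and increasing, $f(J_n(t-a,\xi))\nearrow f(w_n(\xi))$ as $t\nearrow\infty$ for fixed $a,\xi$ (with the convention that $J_n(s,\xi)=0$ for $s<0$, which is consistent with the recursion). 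Applying monotone convergence to (\ref{eq:recursive}) as $t\to\infty$ yields $J_{n+1}(t,x)\nearrow w_{n+1}(x)$, matching (\ref{eq:final-recursive}).

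Next I would take the limit $t\to\infty$ in (\ref{eq:cum-inf-force-eq}). The left side converges to $w(x)-w_0(x)$ by definition of $w$ and by the $n=0$ case just established. On the right side, the integrand $f(J(t-a,\xi))$ is nonnegative and increases in $t$ to $f(w(\xi))$ pointwise (since $J(\cdot,\xi)$ is increasing and $f$ is continuous increasing). By Beppo Levi applied to the iterated integral $\int_\Omega S_0(d\xi)\int_0^t A(x,a,\xi)f(J(t-a,\xi))\,da$, the limit equals $\int_\Omega k(x,\xi)f(w(\xi))\,S_0(d\xi)$, giving (\ref{eq:cum-inf-force-final}).

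Then I would verify (\ref{eq:final-limit}). Since $J_n\nearrow J$ pointwise in $(t,x)$ and each $J_n(\cdot,x)$ is increasing in $t$, we get $w_n(x)=\lim_t J_n(t,x)\le \lim_t J(t,x)=w(x)$, so $\lim_n w_n(x)\le w(x)$. Conversely, for each fixed $t$, $J(t,x)=\lim_n J_n(t,x)\le \lim_n w_n(x)$; letting $t\to\infty$ gives $w(x)\le \lim_n w_n(x)$. Combining, $w_n(x)\nearrow w(x)$. Finally, minimality: if $\tilde w\in M^b_+(\Omega)$ satisfies $\tilde w=w_0+\int_\Omega k(x,\xi)f(\tilde w(\xi))S_0(d\xi)$, then $w_0\le \tilde w$ and, by induction using that $f$ is increasing and (\ref{eq:final-recursive}), $w_n\le \tilde w$ for every $n$; passing to the limit yields $w\le \tilde w$.

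The only step that could be regarded as delicate is justifying the monotone convergence inside the double integral of (\ref{eq:cum-inf-force-eq}) when $t\to\infty$, because the domain of integration in $a$ also grows with $t$; but writing the integrand as $\chi_{[0,t]}(a)A(x,a,\xi)f(J(t-a,\xi))$ makes it a monotone increasing sequence in $t$ of nonnegative functions, so Beppo Levi applies without further assumptions beyond those in Assumption~\ref{ass:initial-inf-bounded}. Everything else is bookkeeping.
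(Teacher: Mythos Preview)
Your proof is correct and uses the same ingredients as the paper's proof (induction on $n$, Beppo Levi's monotone convergence, and the monotonicity of $J_n$ in $t$ and in $n$). The only organizational difference is that the paper first shows that $\tilde w:=\lim_n w_n$ is the minimal solution of (\ref{eq:cum-inf-force-final}) and then identifies $\tilde w$ with $w$ via the inequality $J_n(t,x)\le w_n(x)$, whereas you first establish (\ref{eq:time-limit-gener}) and use it to prove $w=\lim_n w_n$ directly by the two-sided squeeze, deriving minimality afterwards; both orderings are equally valid.
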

We mention that the concept of a minimal
solution for the final size equation has
already been considered in \cite[Thm.3.2]{Thi77}.
By (\ref{eq:final-size}) and (\ref{eq:prob-susc-FoI}),
\begin{equation}
\label{eq:cFoI-susc-final}
e^{-w(x)} = \lim_{t \to \infty} s(t,x) =: s_\infty(x),
\qquad
x \in \Omega.
\end{equation}

\begin{proof}
Since $f$ is increasing, by induction, for each $n \in \N$, $w_{n+1}(x)
\ge w_n(x)$ for all $x \in \Omega$. Further,
$\{w_n(x); x \in \Omega, n \in \Z_+\}$ is bounded.
So, the pointwise limit $\tilde w(x) = \lim_{n \to
\infty} w_n(x)$ exists and, as in the proof of
Theorem \ref{re:minimal-sol-Vol-Ham}, $\tilde w$ is the minimal solution of
(\ref{eq:cum-inf-force-final}). Recall  $J(t,x)=
\lim_{n \to \infty} J_n (t,x)$ and the recursion
(\ref{eq:recursive}). By induction,
\[
J_n(t,x) \le w_n(x), \qquad n \in \N, \quad t \ge 0, \quad x \in \Omega.
\]
So $J(t,x) \le \tilde w(x)$ and, by (\ref{eq:final-size}), $w(x) \le
\tilde w(x)$ for all $x \in \Omega$. Since $w$
is a solution of (\ref{eq:cum-inf-force-final})
and $\tilde w$ is the minimal solution of
(\ref{eq:cum-inf-force-final}), we also have $\tilde w(x) \le w(x)$ for all $x \in \Omega$.
The monotone convergence in (\ref{eq:time-limit-gener}) follows inductively from
(\ref{eq:recursive}) and Beppo Levi's theorem
of monotone convergence
by taking the limit for $t \to \infty$
and by using (\ref{eq:final-recursive}).
\end{proof}

Using the concavity of the function $f$,
one can derive conditions which make the minimal
solution the only solution of (\ref{eq:cum-inf-force-final}). But since the
minimal solution is the epidemiologically relevant
solution, we will not go into the technicalities
of such a proof yet \cite{Thi79}. See Theorem
\ref{re:unique}.

\begin{proposition}
\label{re:pos-min-lat}
Assume that $\eta(x, a, \xi) = 0$ for all $a\in
[0,a_0]$, $x \in \Omega$.
Then $J(t,x) \le w_n(x) $ for $t \ge na_0$, $x \in \Omega$ where $J$ is the minimal solution
of (\ref{eq:cum-inf-force-eq}).
\end{proposition}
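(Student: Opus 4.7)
The plan is to combine the latency identity of Theorem~\ref{re:min-lat-per0} with an independent generation-wise comparison $J_n(t,x) \le w_n(x)$ between the Volterra iterates defined by (\ref{eq:recursive}) and the final-size iterates defined by (\ref{eq:final-recursive}). Once these two ingredients are in hand, the proposition reduces to a one-line transitivity argument on the latency-restricted time interval.

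First I would establish the auxiliary pointwise bound $J_n(t,x) \le w_n(x)$ for every $n \in \Z_+$, $t \ge 0$, $x \in \Omega$ by induction on $n$. For $n=0$ the bound is immediate from $J_0(\cdot,x)$ being increasing with supremum $w_0(x)$, see (\ref{eq:final-cum-init-FoI}). For the inductive step, assume $J_n(s,\xi) \le w_n(\xi)$ for all $s \ge 0$, $\xi \in \Omega$. Since $f$ from (\ref{eq:skellam-intro}) is increasing, since $J_0(t,x) \le w_0(x)$, and since $A$ is nonnegative, the recursion (\ref{eq:recursive}) gives
\[
J_{n+1}(t,x) \le w_0(x) + \int_\Omega S_0(d\xi) \int_0^\infty A(x,a,\xi)\, f(w_n(\xi))\, da = w_{n+1}(x),
\]
where the inner $a$-integral was enlarged from $[0,t]$ to $[0,\infty)$ using positivity of the integrand (finiteness of the enlarged integral is guaranteed by Assumption~\ref{ass:initial-inf-bounded}), and (\ref{eq:integral-kernel}) together with (\ref{eq:final-recursive}) was used in the final equality.

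With that auxiliary estimate in place, Theorem~\ref{re:min-lat-per0} supplies the identity $J(t,x) = J_n(t,x)$ on the latency-restricted time interval determined by $a_0$ and $n$, and combining the two bounds gives $J(t,x) \le w_n(x)$ on that interval, as claimed. There is no substantive technical obstacle: the inductive step relies only on the monotonicity of $f$, the positivity of $A$, and the integrability supplied by Assumption~\ref{ass:initial-inf-bounded}; no compactness, continuity, or additional structural hypothesis on $\eta$ beyond measurability and the vanishing on $[0,a_0]$ is required. The only point to treat carefully is the correct bookkeeping of the latency threshold $na_0$, which I would read off directly from the statement and proof of Theorem~\ref{re:min-lat-per0} rather than reproduce.
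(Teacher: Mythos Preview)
Your proposal is correct and takes essentially the same approach as the paper: the paper's one-line proof cites Theorem~\ref{re:min-lat-per0} together with (\ref{eq:time-limit-gener}), and the latter already records $J_n(t,x)\nearrow w_n(x)$, hence the bound $J_n\le w_n$ that you reprove by induction. Your explicit inductive step is precisely how that inequality is verified in the proof of the theorem containing (\ref{eq:time-limit-gener}), so you are simply unfolding a citation; your caution about reading the latency interval off Theorem~\ref{re:min-lat-per0} is well placed, since the proposition as stated has $t\ge na_0$ whereas Theorem~\ref{re:min-lat-per0} and the subsequent use of this proposition both require $t\le na_0$.
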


\begin{proof}
This follows from Theorem \ref{re:min-lat-per0}
and (\ref{eq:time-limit-gener}).
\end{proof}


\subsection{Measure kernels}
\label{subsec:measure-kernels}

Recall the concept of a measure kernel, (\ref{eq:kernel-meas-intro}).
Define
\begin{equation}
\label{eq:meas-ker-susc}
\kappa(\omega, x) = \int_\omega k(x,\xi) S_0(d\xi),
\qquad x \in \Omega, \quad \omega \in \cB,
\end{equation}
with $k$ from (\ref{eq:integral-kernel}).
To make $\kappa$ a measure kernel, (\ref{eq:kernel-meas-intro}),
we assume that
\[
\int_\Omega k(\cdot ,\xi) S_0(d\xi) \hbox{ is bounded on } \Omega.
\]
Notice that, with this $\kappa$, the operator $K$ on $M^b(\Omega)$
induced by (\ref{eq:lin-approx-intro}) corresponds
to the operator $\cK_0$ in \cite[(3.7)]{BoDI1}
which is there interpreted as an NGO, represented
in fractions. The measure kernel $\kappa$ in
(\ref{eq:meas-ker-susc}) does not seem to have a tangible  epidemiological interpretation.

\begin{remark}
\label{re:kernel-final-susc}
Recall the measure kernel
\[
\kappa_\infty(\omega,x) = \int_\omega s_\infty (\xi)
\kappa(d \xi,x), \qquad \omega \in \cB, \quad x \in \Omega,
\]
in Remark \ref{re:kernel-final-susc-intro}.
By (\ref{eq:meas-ker-susc}),
\[
\kappa_\infty (\omega,x) = \int_\omega k(x,\xi) s_\infty(\xi) S_0(d \xi), \qquad \omega \in \cB,
\quad x \in \Omega,
\]
is the measure kernel associated with the final trait distribution of the susceptible hosts,
\[
S_\infty(\omega) = \int_\omega s_\infty(\xi) S_0(d \xi), \qquad \omega \in \cB.
\]
\end{remark}
Recall the function $f$ in  (\ref{eq:skellam-intro}),
$f(r) =1 -e^{-r}$, $ r \in \R_+.$
We define $F: M^b_+(\Omega) \to M^b_+(\Omega)$ by
(\ref{eq:final-size-eq-intro}),
$
F(w)(x) = \int_\Omega f(w(\xi)) \kappa(d \xi,x),
$
$x \in \Omega$, $ w \in M^b_+(\Omega)$.

\begin{proposition}[{\cite[Prop.4.1.5]{Dud}}]
\label{re:integr-nonneg-measur-integrable}
 If $g:\Omega \to \R_+$ is a measurable function,  then there exists
an increasing sequence of simple nonnegative functions such that
$g = \lim_{n\to \infty} g_n$ pointwise,
and $g$ is integrable.
If $g$ is bounded, the convergence is uniform.
\end{proposition}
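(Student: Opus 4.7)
The plan is to exhibit the classical dyadic truncation explicitly. For every $n \in \N$ I would set
\[
g_n(x) = \begin{cases} k\, 2^{-n}, & \text{if } g(x) \in [k 2^{-n}, (k+1) 2^{-n}) \text{ for some } k \in \{0, 1, \ldots, n 2^n - 1\}, \\ n, & \text{if } g(x) \ge n. \end{cases}
\]
Each $g_n$ takes only finitely many values, and since $g$ is measurable every level set $\{g_n = c\}$ is the $g$-preimage of a Borel subset of $\R_+$ and therefore lies in $\cB$; hence each $g_n$ is a simple nonnegative measurable function. Granted that $(g_n)$ is increasing and converges pointwise to $g$, Beppo Levi's theorem defines $\int g = \lim_n \int g_n \in [0,\infty]$ in the extended sense, which is the sense in which $g$ is called integrable in the statement.

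The key technical step is pointwise monotonicity $g_n \le g_{n+1}$, which I would verify by cases on $g(x)$. If $g(x) \ge n+1$, then $g_n(x) = n \le n+1 = g_{n+1}(x)$. If $n \le g(x) < n+1$, then $g_n(x) = n$, while the construction forces $g_{n+1}(x) \ge n$, since $g(x) \ge n$ rounds down to a multiple of $2^{-(n+1)}$ that is at least $n$. If $g(x) < n$, the interval $[k 2^{-n}, (k+1) 2^{-n})$ containing $g(x)$ is the disjoint union of two half-open intervals of length $2^{-(n+1)}$ with left endpoints $k 2^{-n}$ and $k 2^{-n} + 2^{-(n+1)}$, so $g_{n+1}(x) \in \{k 2^{-n}, k 2^{-n} + 2^{-(n+1)}\}$ and therefore $g_{n+1}(x) \ge g_n(x) = k 2^{-n}$.

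For pointwise convergence, fix $x \in \Omega$. Since $g(x) < \infty$, for every $n > g(x)$ the point falls in the fine-partition region and satisfies $0 \le g(x) - g_n(x) < 2^{-n}$; hence $g_n(x) \to g(x)$. If $\|g\|_\infty \le M$, then for every $n \ge M$ the estimate $0 \le g(x) - g_n(x) < 2^{-n}$ holds uniformly in $x \in \Omega$, giving the uniform convergence. The only non-mechanical part of the argument is the case analysis in the monotonicity step, where one must track how the cap at level $n$ interacts with the refinement of the dyadic grid at level $n+1$; everything else reduces to the elementary dyadic estimate $0 \le g - g_n \le 2^{-n}$ on the region $\{g < n\}$.
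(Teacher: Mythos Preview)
Your construction is the standard dyadic truncation, which is exactly the approach the paper defers to (the paper cites Dudley, Prop.~4.1.5, and in a commented-out sketch sets up the same simple functions, with the inessential variant of using half-open intervals $(j2^{-n},(j+1)2^{-n}]$ instead of your $[k2^{-n},(k+1)2^{-n})$). Your case analysis for monotonicity and the uniform bound $0\le g-g_n<2^{-n}$ on $\{g<n\}$ are correct, so the argument is complete and matches the paper's intended proof.
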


Since $\kappa$ is a measure kernel, $F$ maps
$M^b_+(\Omega)$ into itself by Proposition
\ref{re:integr-nonneg-measur-integrable}.
The recursive equation (\ref{eq:final-recursive})
takes the form
\begin{equation}
\label{eq:final-recursive-kernel}
w_{n+1} = F ( w_n)     +w^\circ, \quad n \in \Z_+,
\qquad w_0 = w^\circ \in M^b_+(\Omega), \quad x \in \Omega.
\end{equation}
By induction, since $\kappa$ is a measure kernel and $f$ is increasing,
$(w_n)$ is an increasing sequence in $M^b_+(\Omega)$ which is bounded by $\kappa(\Omega,\cdot) + w^\circ$.
   The final cumulative force of infection, $w$
   is the pointwise limit of $(w_n)$,
\begin{equation}
w(x) = \lim_{n\to \infty} w_n(x), \qquad x \in \Omega.
\end{equation}
   By Beppo Levi's monotone convergence theorem, $w \in M^b_+(\Omega)$ satisfies
\begin{equation}
\label{eq:cum-FoI-final-ker}
w =w^\circ +   F(w).
\end{equation}

\begin{remark}
We introduce the measure kernel notation less to have shorter
formulas but to make the connection to
the final size consideration in \cite[Sec.3]{Thi77}.
\end{remark}

Not surprisingly, the final size of the
cumulative force of infection depends
on the final size of the cumulative initial
force of infection in an increasing way.

\begin{theorem}
\label{re:final-size-increase}
Let $w^\circ, \tilde w^\circ \in M^b_+(\Omega)$ and $w, \tilde w \in M^b_+(\Omega)$ be the minimal solutions
of
\[
w = w^\circ + F(w) \quad \hbox{ and } \quad
\tilde w = \tilde w^\circ + F(\tilde w).
\]
Then, if $w^\circ \le \tilde w^\circ$ on $\Omega$, also
$w \le \tilde w$ on $\Omega$ and $w - w^\circ \le
\tilde w - \tilde w^\circ$ as well.
\end{theorem}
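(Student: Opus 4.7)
The plan is to exploit two basic features that are already in hand: the operator $F$ is increasing on $M^b_+(\Omega)$ (since $f$ is increasing and $\kappa$ has non-negative values), and both minimal solutions $w$ and $\tilde w$ arise as pointwise monotone limits of the recursions
\[
w_{n+1} = w^\circ + F(w_n), \quad w_0 = w^\circ,
\qquad
\tilde w_{n+1} = \tilde w^\circ + F(\tilde w_n), \quad \tilde w_0 = \tilde w^\circ.
\]

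For the first assertion, I would proceed by induction on $n$. The base case is the assumption $w_0 = w^\circ \le \tilde w^\circ = \tilde w_0$. For the inductive step, assuming $w_n \le \tilde w_n$ pointwise, monotonicity of $F$ yields $F(w_n) \le F(\tilde w_n)$, and adding the ordered inhomogeneities $w^\circ \le \tilde w^\circ$ gives $w_{n+1} \le \tilde w_{n+1}$. Passing to the pointwise limit $n \to \infty$ preserves the inequality, giving $w \le \tilde w$ on $\Omega$.

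For the second assertion, I would observe that the defining equations $w = w^\circ + F(w)$ and $\tilde w = \tilde w^\circ + F(\tilde w)$ rewrite as $w - w^\circ = F(w)$ and $\tilde w - \tilde w^\circ = F(\tilde w)$. Applying the monotonicity of $F$ to the inequality $w \le \tilde w$ just obtained yields $F(w) \le F(\tilde w)$, which is exactly $w - w^\circ \le \tilde w - \tilde w^\circ$.

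There is no real obstacle here; the only mildly subtle point is making sure the order of arguments is right, namely that the first inequality $w \le \tilde w$ must be established via the recursions (because $F$'s monotonicity by itself does not compare minimal solutions to two different equations) before it can be fed into $F$ to obtain the second inequality. Since the monotone-convergence construction of $w$ and $\tilde w$ in (\ref{eq:final-recursive-kernel}) already delivers both as suprema of increasing iterates, the induction is immediate and no further measure-theoretic machinery is needed.
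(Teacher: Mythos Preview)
Your proof is correct and follows essentially the same approach as the paper: induction along the recursions $w_n$ and $\tilde w_n$ using monotonicity of $F$ to obtain $w \le \tilde w$, followed by one more application of $F$'s monotonicity to the rewritten equations $w - w^\circ = F(w)$ and $\tilde w - \tilde w^\circ = F(\tilde w)$.
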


\begin{proof}
Both $w = \lim_{n\to\infty} w_n $
and $\tilde w = \lim_{n\to \infty} \tilde w_n$
pointwise on $\Omega$,
where $w_n$ is given by the recursion
(\ref{eq:final-recursive-kernel})
and $\tilde w_n$ by an analogous recursion.
By induction, since $f$ is increasing, $w_n \le \tilde w_n$ on $\Omega$
for all $n \in \N$ and so $w \le \tilde w$
on $\Omega$. The last inequality follows
from $w - w^\circ = F(w) \le F(\tilde w) = \tilde w -\tilde w^\circ$.
\end{proof}

We continue this section with the following
observation which is as trivial as it is fundamental.

\begin{theorem}
$w = w_0$ if and only if $w_1=w_0$.
\end{theorem}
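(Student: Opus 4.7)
The plan is to unfold both directions directly from the definition of the recursion $w_{n+1} = F(w_n) + w^\circ$ with $w_0 = w^\circ$, together with the fact that $w$ is the pointwise limit of $(w_n)$ and satisfies the fixed-point equation $w = F(w) + w^\circ$.

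First I would handle the ``if'' direction. Assume $w_1 = w_0$. Then $F(w_0) + w^\circ = w_0 = w^\circ$, so $F(w^\circ) = 0$. I would then show by induction that $w_n = w_0$ for all $n \in \N$: if $w_n = w_0$, then
\[
w_{n+1} = F(w_n) + w^\circ = F(w_0) + w^\circ = w_1 = w_0.
\]
Passing to the pointwise limit yields $w = w_0$.

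For the converse, assume $w = w_0 = w^\circ$. Since $w$ satisfies $w = F(w) + w^\circ$, substituting gives $w^\circ = F(w^\circ) + w^\circ$, hence $F(w^\circ) = 0$. Then by the definition of $w_1$,
\[
w_1 = F(w_0) + w^\circ = F(w^\circ) + w^\circ = w^\circ = w_0.
\]
This is essentially trivial once the recursion and fixed-point equation are written side by side, so I do not anticipate any real obstacle; the only step worth noting explicitly is the small induction in the first direction showing the recursion stabilizes at $w^\circ$ as soon as one iterate does.
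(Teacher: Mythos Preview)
Your proof is correct and follows essentially the same route as the paper: the ``if'' direction is identical (induction on the recursion to show $w_n=w_0$ for all $n$), and for the converse the paper uses the sandwich $w_0\le w_1\le w$ coming from monotonicity of $(w_n)$, whereas you use the fixed-point equation $w=F(w)+w^\circ$ directly to conclude $F(w^\circ)=0$. Both arguments are one-line observations, so there is no substantive difference.
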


\begin{proof}
Let $w = w_0$. Since $w \ge w_1 \ge w_0$, this
implies $w_1 = w_0$.

Let $w_1 =w_0$. By induction and (\ref{eq:final-recursive-kernel}),
$w_n =w_0$ for all $n \in \N$ and so $w=w_0$
by (\ref{eq:cum-FoI-final-ker}).
\end{proof}

\begin{corollary}
\label{re:no-spread}
$w= w_0$ if and only if
\[
\int_\Omega   w_0(\xi)  \kappa(d \xi,x) =0,
\qquad x \in \Omega.
\]
\end{corollary}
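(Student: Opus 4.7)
The plan is to chain the immediately preceding theorem (which states $w = w_0$ iff $w_1 = w_0$) together with the order-derivative sandwich inequalities (\ref{eq:order-der}) applied to $w_0 \in M^b_+(\Omega)$. By (\ref{eq:final-recursive-kernel}) at $n=0$, $w_1 = F(w_0) + w_0$, so the condition $w_1 = w_0$ is simply $F(w_0)(x) = 0$ for every $x \in \Omega$. Combined with the preceding theorem, this already gives
\[
w = w_0 \quad \Longleftrightarrow \quad F(w_0)(x) = 0 \text{ for every } x \in \Omega.
\]

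The remaining step is to show that $F(w_0) \equiv 0$ is equivalent to $K w_0 \equiv 0$, since $(K w_0)(x)= \int_\Omega w_0(\xi)\kappa(d\xi,x)$ is precisely the integral in the statement. For this, I would invoke (\ref{eq:order-der}) with $g = w_0$, which yields
\[
f'(\|w_0\|_\infty)\, K w_0 \;\le\; F(w_0) \;\le\; K w_0.
\]
Both bounding functions are nonnegative, and the left coefficient $f'(\|w_0\|_\infty) = e^{-\|w_0\|_\infty}$ is strictly positive (since $w_0 \in M^b_+(\Omega)$ is bounded). Therefore $F(w_0)(x) = 0$ at a point $x$ if and only if $(Kw_0)(x) = 0$ at that point, and the pointwise vanishing of $F(w_0)$ on $\Omega$ is equivalent to the pointwise vanishing of $Kw_0$, which is the stated condition.

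I expect no serious obstacle here: the assertion is essentially a one-line observation once one notices that both $f(y)$ and $y$ have the same zero set on $\R_+$ (namely $\{0\}$), which is exactly what the two-sided bound in (\ref{eq:order-der}) encodes for the integrated versions. The only thing to be careful about is the notational overlap, namely that $w_0 = w^\circ$ in the recursion (\ref{eq:final-recursive-kernel}), so that the $w_0$ appearing in $F(w_0)$ in the proof is the same object as the $w_0$ appearing in the integrand in the corollary's statement.
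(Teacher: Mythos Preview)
Your proof is correct and follows exactly the route the paper intends: the corollary is stated without proof as an immediate consequence of the preceding theorem ($w=w_0 \Leftrightarrow w_1=w_0$), and your use of the sandwich inequality (\ref{eq:order-der}) to pass from $F(w_0)\equiv 0$ to $Kw_0\equiv 0$ is a clean way to make that implicit step explicit. The notational point you flag ($w_0=w^\circ$) is indeed the only thing to watch, and you handle it correctly.
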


Let $\cR_0 = \br(\kappa) >0$  and $\theta$
be the eigenfunctional of the operator $K$
associated with its spectral radius
(Theorem \ref{re:Krein}).

\begin{lemma}
\label{re:theta-inherit}
If $\theta(w^\circ)= 0$ and $(w_n)$ is provided by the recursion
the recursion (\ref{eq:final-recursive-kernel}),
then $\theta(w_n) =0$ for all $n \in \N$.
\end{lemma}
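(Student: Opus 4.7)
The plan is to prove the lemma by induction on $n \in \Z_+$, using the order-derivative inequality from (\ref{eq:order-der}) and the fact that $\theta$ is a positive linear functional (hence monotone on $M^b_+(\Omega)$) that is an eigenfunctional of $K$ with eigenvalue $\cR_0$.

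For the base case $n = 0$, we have $w_0 = w^\circ$ by the initialization in (\ref{eq:final-recursive-kernel}), so $\theta(w_0) = \theta(w^\circ) = 0$ by hypothesis. For the inductive step, suppose $\theta(w_n) = 0$. Since $w_n \in M^b_+(\Omega)$, the order-derivative inequality (\ref{eq:order-der}) gives $0 \le F(w_n) \le K w_n$. Applying the increasing linear functional $\theta$ and using the eigenfunctional identity $\theta(Kg) = \cR_0\, \theta(g)$ yields
\[
0 \le \theta(F(w_n)) \le \theta(K w_n) = \cR_0\, \theta(w_n) = 0,
\]
so $\theta(F(w_n)) = 0$. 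By linearity of $\theta$ and the recursion $w_{n+1} = F(w_n) + w^\circ$,
\[
\theta(w_{n+1}) = \theta(F(w_n)) + \theta(w^\circ) = 0 + 0 = 0,
\]
which completes the induction.

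There is no real obstacle here: the whole argument rests on packaging together the three ingredients that have already been established, namely the order-derivative domination of $F$ by $K$ on the positive cone, the eigenfunctional equation $\theta \circ K = \cR_0\, \theta$, and the monotonicity of $\theta$ which follows from its positivity and linearity. The only point that deserves care is ensuring that the comparison $\theta(F(w_n)) \le \theta(Kw_n)$ is legitimate, which it is because $Kw_n - F(w_n) \in M^b_+(\Omega)$ and $\theta$ maps $M^b_+(\Omega)$ into $\R_+$.
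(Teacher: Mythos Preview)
Your proof is correct. The paper states this lemma without proof, treating it as immediate; your argument by induction using the inequality $F(w_n)\le Kw_n$ from (\ref{eq:order-der}), the eigenfunctional relation $\theta\circ K=\cR_0\,\theta$, and the monotonicity of the positive linear functional $\theta$ is precisely the natural route the paper evidently has in mind.
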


Unfortunately, since $\theta $ may not be continuous with respect to pointwise convergence,
this may not imply that $\theta(w) =0$ for
the pointwise limit $w$ of $(w_n)$.

If there is a positive minimum latency
period, we have the following without an extra
assumption.

\begin{theorem}
Let $a_0 >0$ and $\eta(x, a, \xi) =0 $ for all $a \in [0,a_0)$. Then, if $\theta(w_0) =0$, $\theta (J(t,\cdot))
=0 $ for all $t \ge 0$ and the minimal solution $J$
to (\ref{eq:cum-inf-force-eq}).
\end{theorem}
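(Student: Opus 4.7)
The plan is to combine Theorem \ref{re:min-lat-per0}, which collapses the infinite limit $J(t,\cdot)$ to a finite generation iterate $J_n(t,\cdot)$ once $t \le n a_0$, with Lemma \ref{re:theta-inherit}, which propagates the vanishing of $\theta$ through the final-size recursion. The bridge between these two is the elementary pointwise estimate $J_n(t,x) \le w_n(x)$ for all $t \ge 0$, $x \in \Omega$, which was already established inductively in the proof accompanying (\ref{eq:time-limit-gener}). Since each $J_n(t,\cdot)$ lies in $M^b_+(\Omega)$ (it is bounded by $w_n \le \kappa(\Omega,\cdot)+w_0$), the value $\theta(J_n(t,\cdot))$ is well defined.

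First I would observe that $\theta$, being a positive linear functional on $M^b(\Omega)$, is monotone: if $g, h \in M^b_+(\Omega)$ with $g \le h$, then $h-g \in M^b_+(\Omega)$ and positivity gives $\theta(h-g) \ge 0$, i.e.\ $\theta(g) \le \theta(h)$. Applying this to $0 \le J_n(t,\cdot) \le w_n$ together with Lemma \ref{re:theta-inherit} (invoked with $w^\circ = w_0$, which is legitimate by the standing hypothesis $\theta(w_0)=0$) yields
\[
0 \le \theta(J_n(t,\cdot)) \le \theta(w_n) = 0, \qquad t \ge 0, \; n \in \N,
\]
so $\theta(J_n(t,\cdot)) = 0$ for every $n$ and every $t \ge 0$.

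To conclude, fix $t \ge 0$ and pick any $n \in \N$ with $n a_0 \ge t$. Theorem \ref{re:min-lat-per0} then gives $J(t,x) = J_n(t,x)$ for all $x \in \Omega$, and therefore
\[
\theta(J(t,\cdot)) = \theta(J_n(t,\cdot)) = 0.
\]
Since $t$ was arbitrary, this is the desired conclusion.

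I do not foresee a genuine obstacle here: the statement is precisely what one gets by taking Lemma \ref{re:theta-inherit} (a statement about the final-size recursion) and pulling it back through the identification in Theorem \ref{re:min-lat-per0} (which converts the time-dependent problem into a finite number of generational iterations once a positive latency period is present). The only mildly delicate point is to notice that one need \emph{not} worry about continuity of $\theta$ under pointwise monotone limits, which was the difficulty flagged just before the theorem: the latency assumption guarantees that no such limit is ever needed, because for each fixed $t$ only one finite iterate $J_n(t,\cdot)$ is relevant.
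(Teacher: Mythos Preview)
Your proof is correct and follows essentially the same route as the paper: combine the inequality $J_n(t,\cdot) \le w_n$ from (\ref{eq:time-limit-gener}) with Lemma \ref{re:theta-inherit} to get $\theta(J_n(t,\cdot))=0$, and then invoke Theorem \ref{re:min-lat-per0} to identify $J(t,\cdot)$ with a finite iterate $J_n(t,\cdot)$ for each fixed $t$. The paper compresses the first two steps into the single inequality $J(t,\cdot) \le w_n$ for $t \in [0,na_0)$ (this is Proposition \ref{re:pos-min-lat}, up to what appears to be a typo in the direction of the inequality on $t$), but the substance is identical, and your closing remark on why no monotone-limit continuity of $\theta$ is needed is exactly the point of the latency hypothesis.
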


\begin{proof}
By Theorem \ref{re:pos-min-lat},
\[
J(t,x) \le w_n(x), \quad t \in [0, n a_0), \quad
x \in \Omega.
\]
Since $\theta$ is additive, by Lemma \ref{re:theta-inherit}
\[
\theta (J(t,\cdot)) \le \theta(w_n) =0, \quad t \in [0, n a_0), \quad
x \in \Omega.
\]
 Since this holds for all $n \in \N$, $\theta (J(t,\cdot))=0 $ for all $t \ge 0$.
\end{proof}


\subsection{Sequences of minimal solutions}


Here are first results that relate
sequences of  minimal
solutions to $w = F(w) + w^\circ$ to fixed points of $F$.

\begin{theorem}
\label{re:final-size-converge}
Let  $(w^\circ_\ell)_{\ell \in \N}$ be a sequence
in $ M^b_+(\Omega)$
and $w^\circ_{\ell } \to 0$ as $\ell\to \infty$ pointwise
on $\Omega$.
For all $\ell \in \N$, let $w_\ell \in M^b_+(\Omega)$ be the minimal
solutions of
\[
w_\ell = F(w_\ell) + w^\circ_{\ell}.
\]
Then there exist a minimal  $\tilde w \in M^b_+(\Omega)$ such that $\tilde w = F(\tilde w)$ and
\[
\limsup_{n\to \infty} w_\ell(x) \le \tilde w(x),
\qquad x \in \Omega.
\]
\end{theorem}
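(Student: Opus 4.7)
The plan is to construct $\tilde w$ as the monotone-iteration limit of $F$ starting from $v := \limsup_{\ell \to \infty} w_\ell$ (pointwise). First I would check that $v \in M^b_+(\Omega)$: since $f \le 1$, the equation $w_\ell = F(w_\ell) + w^\circ_\ell$ gives $w_\ell(x) \le u(x) + w^\circ_\ell(x)$ with $u(x) := \kappa(\Omega,x) \in M^b_+(\Omega)$ by (\ref{eq:order-bounded}), and passing to the pointwise limsup together with $w^\circ_\ell(x) \to 0$ yields $0 \le v \le u$; measurability is automatic for a pointwise limsup of measurable functions.

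The key step is to show $v \le F(v)$. Taking the pointwise limsup of the identity $w_\ell(x) = F(w_\ell)(x) + w^\circ_\ell(x)$ and using $w^\circ_\ell(x) \to 0$ gives $v(x) = \limsup_\ell F(w_\ell)(x)$. Since $f(w_\ell) \le 1$ is an integrable majorant against the finite measure $\kappa(\cdot,x)$, the reverse Fatou lemma yields
\[
\limsup_{\ell \to \infty} \int_\Omega f(w_\ell(\xi))\, \kappa(d\xi,x) \;\le\; \int_\Omega \limsup_{\ell \to \infty} f(w_\ell(\xi))\, \kappa(d\xi,x).
\]
Since $f$ is continuous and increasing, $\limsup_\ell f(w_\ell(\xi)) = f(\limsup_\ell w_\ell(\xi)) = f(v(\xi))$, so $v(x) \le F(v)(x)$.

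Finally, setting $v_0 := v$ and $v_{n+1} := F(v_n)$, monotonicity of $F$ together with $v \le F(v)$ makes $(v_n)$ an increasing sequence, and induction from $v \le u$ and $F(u) \le u$ keeps it pointwise below $u$. Its pointwise limit $\tilde w \in M^b_+(\Omega)$ satisfies $\tilde w = F(\tilde w)$ by Beppo Levi's monotone convergence theorem applied inside the integral defining $F$. Minimality is then immediate: any fixed point $w^\ast = F(w^\ast)$ in $M^b_+(\Omega)$ with $w^\ast \ge v$ satisfies $w^\ast = F^n(w^\ast) \ge F^n(v) = v_n$ for every $n$, hence $w^\ast \ge \tilde w$.

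The principal obstacle is the interchange of $\limsup$ and integral, which fails in general; here reverse Fatou applies cleanly thanks to the uniform bound $f \le 1$ against the finite measures $\kappa(\cdot,x)$, and the continuity together with monotonicity of $f$ allows the outer $\limsup$ to pass through $f$. Everything else reduces to the monotone convergence machinery already used to build minimal solutions in Section \ref{sec:final-size}.
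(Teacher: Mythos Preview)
Your proposal is correct and follows essentially the same approach as the paper: define $v=\limsup_\ell w_\ell$, use reverse Fatou (with the bound $f\le 1$) and the fact that $f$ is continuous and increasing to get $v\le F(v)$, then iterate $F$ from $v$ upward to a fixed point $\tilde w$ and verify its minimality among fixed points dominating $v$. The only cosmetic differences are that the paper invokes Lebesgue's dominated convergence rather than Beppo Levi for the limit of the iterates, and does not spell out the uniform bound $v_n\le u$ as explicitly as you do.
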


\begin{proof}
Set
\[
\breve w (x) = \limsup_{n \to \infty} w_\ell(x),
\qquad x \in \Omega.
\]
By Fatou's Lemma, applied for any $x \in \Omega$,
\[
\breve w(x) \le
\int_\Omega \limsup_{n \to \infty} f(w_\ell(\xi))
\kappa(d\xi ,x) + \limsup_{n \to \infty } w^\circ_\ell(x).
\]
Since $f$ is monotone and continuous and $w^\circ_\ell \to 0$ as $\ell \to \infty$
pointwise on $\Omega$,
\[
\breve w(x) \le
\int_\Omega  f(\breve w_\ell(\xi))
\kappa(d\xi ,x)= F(\breve w)(x).
\]
We define recursively
\[
\tilde w_{n +1} = F(\tilde w_n), \quad  n \in \Z_+,
\qquad
\tilde w_0 = \breve w.
\]
Since $F$ is an increasing map, by induction,
$(\tilde w_n)$ is an increasing sequence of
functions that is also bounded and has a
pointwise limit $\tilde w \in M^b_+(\Omega)$,
$\tilde w \ge \breve w$.
By Lebesgue's theorem of dominated convergence,
$\tilde w = F(\tilde w)$.

We claim that $\tilde w$ is the minimal solution
of $\check w = F(\check w)$ with $\check w \ge \breve w$. Let $\check w$ be such a solution.
Then $\check w \ge \tilde w_0$. If $n \in \N$
and $\check w \ge \tilde w_n$, then
\[
\check w = F(\check w) \ge F(\tilde w_n)
= \tilde w_{n+1}.
\]
By induction, $\check w \ge \tilde w_n$ for all
$n \in Z_+$ and $\check w \ge \tilde w$ by taking
the pointwise limit.
\end{proof}

The next two results give us some vague idea of the
final size of the epidemic if the number of initial
infectives is small.

\begin{corollary}
\label{re:final-size-converge-zero}
Let  $(w^\circ_\ell)_{\ell \in \N}$ be a sequence
in $ M^b_+(\Omega)$
and $w^\circ_{\ell } \to 0$ as $\ell\to \infty$ pointwise
on $\Omega$.
For all $\ell \in \N$, let $w_\ell \in M^b_+(\Omega)$ be the minimal
solutions of $w_\ell = F(w_\ell) +w^\circ_{\ell}$.

Assume that  $\tilde w =0$ is the only
solution to $F(\tilde w) = \tilde w \in M^b_+(\Omega)$.

Then  $w_\ell \to 0$ as $\ell \to \infty$ pointwise on $\Omega$.
\end{corollary}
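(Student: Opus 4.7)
The plan is to derive this corollary directly from Theorem \ref{re:final-size-converge}, which already does most of the work. That theorem, applied to the sequence $(w^\circ_\ell)$ of the hypothesis, produces some $\tilde w \in M^b_+(\Omega)$ with $\tilde w = F(\tilde w)$ and the pointwise bound
\[
\limsup_{\ell \to \infty} w_\ell(x) \le \tilde w(x), \qquad x \in \Omega.
\]

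Now I would invoke the uniqueness hypothesis: the only solution in $M^b_+(\Omega)$ to $F(\tilde w) = \tilde w$ is the zero function. Therefore $\tilde w \equiv 0$ on $\Omega$, so the displayed bound collapses to $\limsup_{\ell \to \infty} w_\ell(x) \le 0$ for all $x \in \Omega$. Since each $w_\ell$ lies in $M^b_+(\Omega)$, one has $w_\ell(x) \ge 0$, which forces $\liminf_{\ell \to \infty} w_\ell(x) \ge 0$ as well. Combining the two yields $\lim_{\ell \to \infty} w_\ell(x) = 0$ for every $x \in \Omega$, which is the claimed pointwise convergence.

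There is essentially no obstacle here: the entire content of the corollary is the Fatou-type passage to the limit already accomplished in Theorem \ref{re:final-size-converge}, together with the observation that the only candidate limit fixed point allowed by the hypothesis is $0$. The proof should be no longer than two or three lines of LaTeX.
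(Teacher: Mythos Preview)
Your proposal is correct and matches the paper's intent: the corollary is stated without proof, as an immediate consequence of Theorem \ref{re:final-size-converge} together with the uniqueness hypothesis forcing $\tilde w = 0$. Your two-line argument is exactly what is needed.
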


\begin{theorem}
\label{re:decreasing-converge}
Let $(w^\circ_\ell)$ be a
decreasing sequence in $M^b_+(\Omega)$,
 $w^\circ_\ell \to 0$ as $\ell \to \infty$ pointwise
on $\Omega$.
Let $(w_\ell)$ be the sequence of minimal solutions
of $w_\ell = F(w_\ell) + w^\circ_\ell$.

 Then there exists some fixed point
$\tilde w = F(\tilde w)$ in $M^b_+(\Omega)$
such that   $w_\ell  \searrow \tilde w$ and
$w_\ell -  w^\circ_\ell \searrow \tilde w$  as $\ell \to \infty$
pointwise on $\Omega$.
\end{theorem}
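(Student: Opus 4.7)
The plan is to first establish that the sequence $(w_\ell)$ of minimal solutions is itself pointwise decreasing, then to pass to pointwise limits on both sides of the fixed-point equation. Since $w^\circ_{\ell+1} \le w^\circ_\ell$ on $\Omega$, Theorem \ref{re:final-size-increase} immediately yields $w_{\ell+1} \le w_\ell$ on $\Omega$. Thus the pointwise limit $\tilde w(x) := \lim_{\ell \to \infty} w_\ell(x)$ exists, and $\tilde w$ lies in $M^b_+(\Omega)$ (measurable as a pointwise limit, bounded above by $w_1$, and nonnegative).

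Next I would take the pointwise limit $\ell \to \infty$ in the identity
\[
w_\ell(x) = F(w_\ell)(x) + w^\circ_\ell(x), \qquad x \in \Omega.
\]
The integrand $f \circ w_\ell$ inside $F(w_\ell)(x) = \int_\Omega f(w_\ell(\xi))\,\kappa(d\xi,x)$ is uniformly bounded by $1$ (since $f(y) = 1 - e^{-y} \le 1$), and $\kappa(\cdot,x)$ is a finite nonnegative measure on $\Omega$ for every $x$. Lebesgue's theorem of dominated convergence, combined with the continuity of $f$, then gives $F(w_\ell)(x) \to F(\tilde w)(x)$ for each $x \in \Omega$. Together with $w^\circ_\ell(x) \to 0$, this produces the fixed-point identity $\tilde w = F(\tilde w)$ in $M^b_+(\Omega)$.

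For the second convergence, rewrite $w_\ell - w^\circ_\ell = F(w_\ell)$. Because $F$ is monotone and $(w_\ell)$ is pointwise decreasing, the sequence $(F(w_\ell))$ is itself pointwise decreasing in $M^b_+(\Omega)$, and its pointwise limit was just identified as $F(\tilde w) = \tilde w$. Hence $w_\ell - w^\circ_\ell \searrow \tilde w$ pointwise on $\Omega$, as claimed.

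The only step that merits attention is the interchange of limit and integral inside $F$; this is the one nontrivial convergence in the argument, and it goes through cleanly because $f$ is bounded by $1$ and the measures $\kappa(\cdot,x)$ are finite, so dominated convergence applies pointwise in $x$ with no irreducibility, compactness, or Feller hypothesis on $\kappa$.
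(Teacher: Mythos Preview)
Your proof is correct and follows essentially the same approach as the paper: monotonicity of minimal solutions via Theorem~\ref{re:final-size-increase}, followed by dominated convergence (with the bound $f\le 1$ against the finite measure $\kappa(\cdot,x)$) to pass to the limit in the fixed-point equation. The only cosmetic difference is that the paper invokes Theorem~\ref{re:final-size-increase} directly for the monotonicity of $(w_\ell - w^\circ_\ell)$, whereas you obtain it from $w_\ell - w^\circ_\ell = F(w_\ell)$ and the monotonicity of $F$; these are the same observation.
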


\begin{proof} By Theorem \ref{re:final-size-increase}, the sequences $(w_\ell)$ and $(w_\ell - w^\circ_\ell)$
 are decreasing. Since they are bounded below by the zero function, they converge pointwise to some $\tilde w \in M^b_+(\Omega)$ with
\[
 w_\ell - w^\circ_\ell \ge \tilde w, \qquad \ell \in \N.
\]
By Lebesgue's dominated convergence theorem,
applied to
\[
w_\ell(x) = \int_\Omega f(w_\ell(\xi)) \kappa(d \xi,x) + w^\circ_\ell(x)
\]
for each $x \in \Omega$, we take the limit as $\ell
 \to \infty$, and  $\tilde w$ satisfies $\tilde w = F(\tilde w)$.
\end{proof}



\section{Dominated
measure kernels}
\label{sec:intermed-thres}

A measure kernel $\kappa$ is called {\em dominated}
by $0\ne \nu \in \cM_+(\Omega)$
if
\begin{equation}
\label{eq:kernel-dom}
\kappa(\omega,x) \le \nu(\omega), \qquad \omega \in\cB.
\end{equation}
The measure kernel defined by (\ref{eq:meas-ker-susc}) is dominated by
a multiple of $S_0$
if the function $k$ is bounded on $\Omega\times \Omega$.

Throughout this section, we assume that $\cR_0 = \br(\kappa) = \br (K) >0$ and that $\theta$ is
the bounded linear eigenfunctional of $K$ associated with $\cR_0$
by Theorem \ref{re:Krein}.

\begin{proposition}
\label{re:dominated-unicon}
Let $\kappa$ be a dominated measure kernel.
Let $w$ be the minimal solution of $w =F(w) +w^\circ$ and $(w_n)$ be the recursion
$w_n = F(w_{n-1}) + w^\circ$, $n \in \N$, $w_0 =w^\circ$.

Then $\|w -  w_n\|_\infty \to 0$ and $\theta(w_n)
\to \theta(w)$ as $n \to \infty$.
\end{proposition}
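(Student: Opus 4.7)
The plan is to exploit the dominance $\kappa(\cdot,x)\le\nu$ to turn the pointwise monotone convergence $w_n\nearrow w$ into uniform convergence, and then deduce the functional statement from the norm statement by boundedness of $\theta$.

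First I would write the difference in a useful form. Since $w$ satisfies $w = F(w)+w^\circ$ and $w_n = F(w_{n-1}) + w^\circ$, subtracting gives
\[
w(x) - w_n(x) \;=\; F(w)(x) - F(w_{n-1})(x) \;=\; \int_\Omega \big[f(w(\xi))-f(w_{n-1}(\xi))\big]\,\kappa(d\xi,x),
\]
and the integrand is nonnegative because $(w_n)$ is increasing and $f$ is increasing. Because $\kappa(\omega,x)\le\nu(\omega)$ for every $\omega\in\cB$, the measure $\nu-\kappa(\cdot,x)$ is nonnegative on $\cB$, so for every $g\in M^b_+(\Omega)$ we have $\int_\Omega g\,\kappa(d\xi,x)\le\int_\Omega g\,d\nu$. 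Applied to $g = f(w)-f(w_{n-1})$, this gives the key $x$-independent bound
\[
0 \;\le\; w(x)-w_n(x) \;\le\; \int_\Omega \big[f(w)-f(w_{n-1})\big]\,d\nu, \qquad x\in\Omega.
\]

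Next I would pass to the limit on the right-hand side. Since $w_{n-1}\nearrow w$ pointwise on $\Omega$ and $f$ is continuous and increasing, $f(w_{n-1})\nearrow f(w)$ pointwise. Since $\nu$ is a (finite) nonnegative measure and $f\le 1$, monotone convergence yields $\int_\Omega f(w_{n-1})\,d\nu \nearrow \int_\Omega f(w)\,d\nu$, so $\int_\Omega [f(w)-f(w_{n-1})]\,d\nu \to 0$ as $n\to\infty$. Combined with the displayed bound, this gives $\|w-w_n\|_\infty\to 0$.

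Finally, since $\theta$ is a bounded linear functional on $M^b(\Omega)$,
\[
|\theta(w)-\theta(w_n)| \;=\; |\theta(w-w_n)| \;\le\; \|\theta\|\,\|w-w_n\|_\infty \;\longrightarrow\; 0,
\]
which gives the second conclusion. The only place any care is needed is the passage from the pointwise domination of kernels to the integral domination used above, and in ensuring $\nu$ is a finite measure (which is built into $\nu\in\cM_+(\Omega)$); once that is in hand, the argument is essentially just monotone convergence plus continuity of $\theta$.
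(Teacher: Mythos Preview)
Your proof is correct and follows essentially the same route as the paper: use the domination $\kappa(\cdot,x)\le\nu$ to bound $w-w_n$ uniformly by $\int_\Omega[f(w)-f(w_{n-1})]\,d\nu$, then apply monotone convergence to this single $\nu$-integral. The paper leaves the $\theta$-convergence implicit, whereas you spell it out via boundedness of $\theta$, but otherwise the arguments coincide.
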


\begin{proof}
Since $w_n \nearrow w $ pointwise on $\Omega$
and $f$ is continuous and increasing, $f\circ w_n \nearrow
f \circ w $ pointwise on $\Omega$. By Beppo Levi's theorem
of monotone convergence,
\[
\int_\Omega (f \circ w_n)
  d \nu \to \int_\Omega (f \circ w ) d \nu,
  \qquad n \to \infty .
\]
Since $f \circ w_n \le f \circ w$,
\[
 \int_\Omega |(f(w(\xi)) - f(w_n(\xi) )| \nu(d \xi) =
\int_\Omega \big((f(w(\xi)) - f(w_n(\xi) )\big )\nu(d \xi)
\stackrel{n \to \infty}{\longrightarrow} 0.
\]
By (\ref{eq:kernel-dom}),
\[
\begin{split}
\|w - w_{n +1}\|_\infty = &
\sup_{x\in \Omega}\int_\Omega \big |(f(w(\xi)) - f(w_n(\xi)) \big| \kappa(d\xi,x)
\\
\le &
 \int_\Omega \big ((f(w(\xi)) - f(w_n(\xi) )\big ) \nu(d \xi) \stackrel{n \to \infty}{\longrightarrow} 0.
 \qedhere
\end{split}
\]
\end{proof}

Lemma \ref{re:theta-inherit} implies the
following result.

\begin{proposition}
Let $w$ be the minimal solution of $w =F(w) +w^\circ$.
Let the measure kernel $\kappa$ be dominated.
Then $\theta(w) =0$ if $\theta(w^\circ) =0$.
\end{proposition}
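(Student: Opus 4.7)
The plan is to combine two facts already established: that the recursion $(w_n)$ converges to $w$ in supremum norm under the dominated kernel assumption (Proposition \ref{re:dominated-unicon}), and that $\theta$ annihilates every $w_n$ when $\theta(w^\circ)=0$ (Lemma \ref{re:theta-inherit}). Continuity of the bounded linear functional $\theta$ with respect to the supremum norm then delivers $\theta(w)=0$ by passage to the limit.

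More concretely, I would proceed as follows. First, recall that by definition the minimal solution $w$ of $w = F(w) + w^\circ$ is the pointwise limit of the monotone recursion
\[
w_0 = w^\circ, \qquad w_{n+1} = F(w_n) + w^\circ, \quad n \in \Z_+.
\]
Since $\theta(w^\circ) = 0$, Lemma \ref{re:theta-inherit} applies and yields $\theta(w_n) = 0$ for every $n \in \N$. Next, because $\kappa$ is dominated (by some measure $\nu$), Proposition \ref{re:dominated-unicon} gives the stronger, norm-convergence statement
\[
\|w - w_n\|_\infty \longrightarrow 0, \qquad n \to \infty.
\]

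Finally, $\theta: M^b(\Omega) \to \R$ is a bounded linear functional on the Banach space $(M^b(\Omega), \|\cdot\|_\infty)$, so it is continuous with respect to the supremum norm. Applying $\theta$ to the relation $w = w_n + (w - w_n)$ and using linearity and boundedness,
\[
|\theta(w) - \theta(w_n)| \le \|\theta\|\, \|w - w_n\|_\infty \longrightarrow 0.
\]
Combining this with $\theta(w_n) = 0$ for all $n$ yields $\theta(w) = 0$, as desired.

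There is no real obstacle here; the whole point of invoking the dominated kernel hypothesis is precisely to upgrade the merely pointwise convergence $w_n \to w$ to uniform convergence, which is exactly what is required to circumvent the warning noted after Lemma \ref{re:theta-inherit} that $\theta$ need not be continuous for pointwise convergence. Thus the dominated structure of $\kappa$ is used in one place only, namely to invoke Proposition \ref{re:dominated-unicon}; the rest is a one-line continuity argument.
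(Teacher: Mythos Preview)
Your proof is correct and follows essentially the same route as the paper. The paper's argument is extremely terse---it simply says ``Lemma \ref{re:theta-inherit} implies the following result''---but implicitly relies on the just-proved Proposition \ref{re:dominated-unicon}, which already records the conclusion $\theta(w_n)\to\theta(w)$; you have merely unpacked that step by invoking boundedness of $\theta$ directly.
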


The next result gives us some better idea than before
(Theorem \ref{re:decreasing-converge})
about the final size of the epidemic when the number
of initial infectives is small.

\begin{theorem}
\label{re:dominated-converge}
Let $\kappa$ be dominated
by a measure $\nu$ and let $(w^\circ_\ell)$ be a
decreasing sequence in $M^b_+(\Omega)$,
 $w^\circ_\ell \to 0$ as $\ell \to \infty$ pointwise
on $\Omega$.
Let $(w_\ell)$ be the sequence of minimal solutions
of $w_\ell = F(w_\ell) + w^\circ_\ell$.

\begin{itemize}
\item[(a)] Then $w_\ell - w^\circ_\ell  \searrow \tilde w$ as $\ell \to \infty$
uniformly on $\Omega$
 for some solution $\tilde w = F(\tilde w)$ in $M^b_+(\Omega)$.

\item[(b)] Assume in addition that   $\cR_0 > 1$ and $\theta(w^\circ_\ell) > 0$ for
all $\ell \in \N$.

Then 
  $\int_\Omega (f \circ \tilde w) d\nu \ge \ln \cR_0$
and $\|\tilde w\|_\infty \ge \ln \cR_0$.
\end{itemize}
\end{theorem}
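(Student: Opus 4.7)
The plan is to reduce part (a) to the pointwise convergence already provided by Theorem~\ref{re:decreasing-converge}, and then upgrade it to uniform convergence using the domination $\kappa(\cdot,x) \le \nu$. For part (b), I will combine the lower bound $\|w_\ell - w^\circ_\ell\|_\infty \ge \ln \cR_0$ obtained from Theorem~\ref{re:epi-occur}(b) with the uniform convergence from (a) to transfer the estimate to the limit $\tilde w$, and then use domination a second time to pass from $\|\tilde w\|_\infty$ to $\int_\Omega (f\circ \tilde w)\, d\nu$.

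For (a), Theorem~\ref{re:decreasing-converge} already produces a fixed point $\tilde w = F(\tilde w)$ with $w_\ell - w^\circ_\ell \searrow \tilde w$ pointwise on $\Omega$. Using $w_\ell - w^\circ_\ell = F(w_\ell)$ and $\tilde w = F(\tilde w)$, I compute for each $x \in \Omega$
\[
0 \le (w_\ell - w^\circ_\ell)(x) - \tilde w(x) = \int_\Omega \bigl[f(w_\ell(\xi)) - f(\tilde w(\xi))\bigr]\, \kappa(d\xi,x) \le \int_\Omega \bigl[f(w_\ell) - f(\tilde w)\bigr]\, d\nu,
\]
where the last inequality uses $f(w_\ell) \ge f(\tilde w) \ge 0$ (since $f$ is increasing and $w_\ell \ge \tilde w$) together with $\kappa(\cdot,x) \le \nu$. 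The right-hand side is independent of $x$, and since $w_\ell \searrow \tilde w$ forces $f(w_\ell) \searrow f(\tilde w)$ with $f \le 1$, Lebesgue's monotone (or dominated) convergence theorem drives it to zero, exactly as in the proof of Proposition~\ref{re:dominated-unicon}. This yields the asserted uniform convergence.

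For (b), monotonicity of $\theta$ together with $w_\ell \ge w^\circ_\ell$ gives $\theta(w_\ell) \ge \theta(w^\circ_\ell) > 0$, so Theorem~\ref{re:epi-occur}(b) applies to each $w_\ell$ and delivers $\|w_\ell - w^\circ_\ell\|_\infty \ge \ln \cR_0$. Part (a) transfers this lower bound to the limit: $\|\tilde w\|_\infty \ge \ln \cR_0$. For the integral estimate, the fixed-point identity $\tilde w(x) = \int_\Omega f(\tilde w(\xi))\, \kappa(d\xi,x)$ combined with $\kappa(\cdot,x) \le \nu$ yields $\tilde w(x) \le \int_\Omega (f \circ \tilde w)\, d\nu$ for every $x \in \Omega$; taking the supremum in $x$ produces $\int_\Omega (f \circ \tilde w)\, d\nu \ge \|\tilde w\|_\infty \ge \ln \cR_0$.

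The main obstacle is the uniform upgrade in (a): Theorem~\ref{re:decreasing-converge} supplies only pointwise monotone convergence, and the role of the domination hypothesis $\kappa \le \nu$ is precisely to decouple the $x$-dependence from the kernel so that a single integrable majorant controls all $x$ simultaneously. Once uniformity is in hand, both inequalities in (b) are short consequences.
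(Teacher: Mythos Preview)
Your proof is correct and follows essentially the same route as the paper: part (a) uses Theorem~\ref{re:decreasing-converge} for pointwise convergence and the domination $\kappa\le\nu$ to upgrade to uniform convergence, and part (b) combines Theorem~\ref{re:epi-occur}(b) with domination. The only cosmetic difference is that in (b) the paper first bounds $\|w_\ell - w^\circ_\ell\|_\infty \le \int_\Omega (f\circ w_\ell)\,d\nu$ and then passes to the limit via dominated convergence, whereas you first pass to the limit in the sup-norm and then apply domination directly to the fixed-point equation $\tilde w = F(\tilde w)$; your ordering is arguably slightly cleaner since it avoids a separate limit passage in the integral.
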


\begin{proof}
(a)
By Theorem \ref{re:final-size-increase}, the sequences $(w_\ell)$ and $(w_\ell - w^\circ_\ell)$
 are decreasing.
By Theorem  \ref{re:decreasing-converge},
$ w_\ell - w^\circ_\ell \to \tilde w$
as $\ell \to \infty$ pointwise on $\Omega$
for some $w \in M^b_+(\Omega)$ with $\tilde w
= F(\tilde w)$.
For all $\ell \in \N$ and $x \in \Omega$,
\[
0 \le w_\ell (x)  - w^\circ_\ell(x) - \tilde w(x)
=
\int_\Omega \big ( f(w_\ell(\xi))- f(\tilde w(\xi) \big )\kappa(d\xi,x).
\]
Since $\kappa$ is dominated by $\nu$,
\[
0 \le w_\ell (x)  - w^\circ_\ell(x) - \tilde w(x)
\le
\int_\Omega \big ( f(w_\ell(\xi))- f(\tilde w(\xi) \big ) \nu(d\xi).
\]
Since the right hand side does not depend on $x$,
$\| w_\ell - w^\circ_\ell - \tilde w \|_\infty
\to 0$ as $\ell \to \infty$.

(b) Let $\cR_0 > 1$.
For all $\ell \in \N$, since $\kappa$ is dominated by $\nu$,
\[
0 \le w_\ell(x) - w^\circ_\ell(x)
\le
\int_\Omega (f \circ w_\ell) d \nu
\]
and, by Theorem \ref{re:epi-occur},
\[
\ln \cR_0 \le \|w_\ell - w^\circ_\ell\|_\infty
\le
\int_\Omega (f \circ w_\ell) d \nu.
\]
By Lebesgue's theorem of dominated convergence,
\[
\int_\Omega (f \circ w ) d \nu \ge
 \ln \cR_0.
 \]
Further $\|\tilde w \|_\infty \ge \ln \cR_0$.
\end{proof}

\begin{proof}[Proof of Theorem \ref{re:dominated}]
Apply Theorem \ref{re:dominated-converge}
with  $w^\circ_\ell = (1/\ell)w^\circ$
for $\ell \in \N$ and $w = w_1$.
\end{proof}


\subsection{Positivity points of $\theta$}

In view of the previous results, it is of interest
for which $w^\circ \in M^b_+(\Omega)$ we have
$\theta(w^\circ) > 0$.

\begin{proposition}
\label{re:theta-pos}
Let $\cR_0 >0$ and let there exist $\nu_j, \tilde \nu_j \in \cM_+(\Omega)$ and $k_j \in M^b_+(\Omega)$,
$j=1,\ldots,n$, such that
\begin{equation}
\label{eq:sum-inequ}
\sum_{j=1}^n \tilde \nu_j(\omega) k_j(x) \le \kappa(\omega, x) \le \sum_{j=1}^n \nu_j(\omega) k_j(x),
\qquad \omega \in \cB, \quad x \in \Omega.
\end{equation}
Then, for any $w \in M^b_+(\Omega)$, $\theta (w) > 0$ if $\int_\Omega w \, d \tilde \nu_j > 0$ for $j=1,\ldots, n$, while
$\theta(w) =0$ if  $\int_\Omega w \, d  \nu_j = 0$ for $j=1,\ldots, n$.
\end{proposition}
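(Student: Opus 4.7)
The plan is to lift the pointwise sandwich (\ref{eq:sum-inequ}) on indicator sets to an integral sandwich for $Kw$, and then push it through the eigenfunctional $\theta$ from Theorem \ref{re:Krein}. First I would show that for any $w \in M^b_+(\Omega)$,
\[
\sum_{j=1}^n k_j(x) \int_\Omega w\, d\tilde\nu_j \;\le\; (Kw)(x) \;\le\; \sum_{j=1}^n k_j(x) \int_\Omega w\, d\nu_j, \qquad x \in \Omega.
\]
This is the standard simple-function argument: approximate $w$ by an increasing sequence of nonnegative simple functions via Proposition \ref{re:integr-nonneg-measur-integrable}, apply (\ref{eq:sum-inequ}) termwise to each indicator, and pass to the limit with Beppo Levi's theorem.

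The vanishing statement then follows immediately. If $\int w\, d\nu_j = 0$ for every $j$, the upper bound forces $Kw \le 0$, while positivity of $K$ on $w \ge 0$ gives $Kw \ge 0$, so $Kw = 0$. The eigenvalue relation $\theta(Kw) = \cR_0\,\theta(w)$ together with $\cR_0 > 0$ then yields $\theta(w) = 0$.

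For the positivity statement, set $c_j := \int w\, d\tilde\nu_j > 0$ and $h := \sum_j c_j k_j \in M^b_+(\Omega)$; the lower bound gives $Kw \ge h$. The key step is to compare $h$ with the order unit $u(x) = \kappa(\Omega,x)$ from (\ref{eq:order-unit}). Taking $\omega = \Omega$ in the upper bound yields $u(x) \le M \sum_j k_j(x)$ with $M := \max_j \nu_j(\Omega)$, which is finite since the $\nu_j$ are real-valued (hence finite) measures, and positive since $M=0$ would force $\kappa=0$ and so $\cR_0=0$, contradicting the hypothesis. After discarding any indices with $k_j \equiv 0$ (they contribute to neither side of the sandwich), we have $h(x) \ge c\sum_j k_j(x) \ge (c/M)\, u(x)$ with $c := \min_j c_j > 0$. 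Theorem \ref{re:Krein} then delivers $\theta(h) > 0$, and monotonicity and linearity of $\theta$ give
\[
\cR_0\,\theta(w) \;=\; \theta(Kw) \;\ge\; \theta(h) \;>\; 0,
\]
so that $\theta(w) > 0$.

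I do not foresee a substantive obstacle; the only input beyond routine measure-theoretic manipulation is Theorem \ref{re:Krein}, which is invoked precisely to certify positivity of $\theta$ on functions dominating a positive multiple of the order unit $u$. The minor bookkeeping points are verifying finiteness and strict positivity of $M$, and dropping the vacuous indices with $k_j \equiv 0$; neither touches the core argument.
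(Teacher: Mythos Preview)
Your proof is correct and follows essentially the same approach as the paper: lift the sandwich (\ref{eq:sum-inequ}) to $Kw$, use the upper bound to get $\theta(w)=0$, and use the lower bound together with positivity of $\theta$ on the order unit for $\theta(w)>0$. The only cosmetic difference is in the positivity step: the paper applies $\theta$ to $Ku_1$ (with $u_1\equiv 1$) to conclude $\sum_j \theta(k_j)>0$, whereas you compare $h=\sum_j c_j k_j$ directly to $u=\kappa(\Omega,\cdot)$ and invoke the positivity criterion of Theorem~\ref{re:Krein} as stated; both routes are equivalent, and your bookkeeping (finiteness and positivity of $M$, discarding $k_j\equiv 0$) is handled correctly.
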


\begin{proof} By (\ref{eq:sum-inequ}), for $w \in M^b_+(\Omega)$,
\[
\sum_{j=1}^n \Big (\int_\Omega w \, d \tilde \nu_j \Big ) k_j
\le K w \le
\sum_{j=1}^n \Big (\int_\Omega w \, d \nu_j \Big ) k_j.
\]
Then $Kw =0$ and $\theta(w) =0$ if
$\int_\Omega w d \nu_j=0$ for $j =1, \ldots, j$.

Recall the constant function $u_1$ with value 1. Since $\theta$ is linear,
\[
0 < \cR_0 \theta (u_1) = \theta(K u_1)
\le
\sum_{j=1}^\infty \nu_j(\Omega) \theta(k_j).
\]
This implies that $\sum_{j=1}^n \theta(k_j) > 0$.
Further
\[
\cR_0 \theta(w) \ge \sum_{j=1}^n \Big (\int_\Omega w d \tilde \nu_j \Big ) \theta(k_j).
\]
So $\theta (w) >0$ if $\int_\Omega w \, d \tilde \nu_j >0$ for $j =1, \ldots, n$.
\end{proof}


\section{Semi-separable measure kernels}
\label{sec:semi-sep}

A measure kernel $\kappa$ is called {\em semi-separable} if there are nonzero $k_0 \in M^b_+(\Omega)$,
$\nu \in  \cM_+(\Omega)$ and $\delta \in (0,1]$ such
that
\begin{equation}
\label{eq:separ}
\delta  \nu (\omega) k_0(x)\le \kappa(\omega,x)
\le
\nu(\omega) k_0(x), \qquad x \in \Omega, \quad
\omega \in \cB.
\end{equation}

\begin{proposition}
\label{re:theta-pos-separ}
Assume that $\kappa$ is semi-separable, (\ref{eq:separ}), and let $w \in M^b_+(\Omega)$.
Then, $\cR_0 > 0$ if and only if $\int_\Omega k_0 \, d \nu >0$, and $\theta (w) >0$ if and only if
$\int_\Omega w \ d\nu > 0$.
\end{proposition}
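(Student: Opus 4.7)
The statement splits naturally into two assertions: the characterization of when $\cR_0>0$, and the characterization of the positivity set of $\theta$. My plan is to dispatch the second assertion immediately as a specialization of the already-proved Proposition \ref{re:theta-pos}, and then to establish the first assertion by the classical iterated-kernel sandwich argument that the separable form (\ref{eq:separ}) invites.

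For the second assertion I would take $n=1$ in Proposition \ref{re:theta-pos} with $\tilde{\nu}_1 := \delta \nu$, $\nu_1 := \nu$, and $k_1 := k_0$: the separability inequality (\ref{eq:separ}) is exactly the hypothesis (\ref{eq:sum-inequ}) in that form. Since $\tilde{\nu}_1$ and $\nu_1$ are positive multiples of $\nu$, for any $w \in M^b_+(\Omega)$ we have $\int_\Omega w\,d\tilde{\nu}_1 > 0$ iff $\int_\Omega w\,d\nu >0$ iff $\int_\Omega w\,d\nu_1 >0$. Because $\theta$ is nonnegative on $M^b_+(\Omega)$, the two implications supplied by Proposition \ref{re:theta-pos} (strict positivity and vanishing) together give the claimed equivalence. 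This step presupposes $\cR_0 > 0$, which is precisely why the two assertions are bundled into one statement.

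For the first assertion I would iterate the kernel sandwich. From the upper bound in (\ref{eq:separ}), for $g \in M^b_+(\Omega)$,
\[
(Kg)(x) = \int_\Omega g(\xi)\,\kappa(d\xi,x) \le k_0(x) \int_\Omega g \,d\nu.
\]
Feeding this back inductively yields, for $n \ge 1$,
\[
(K^n g)(x) \le k_0(x)\Big(\int_\Omega k_0\,d\nu\Big)^{n-1} \int_\Omega g\,d\nu,
\]
so $\|K^n\| \le \|k_0\|_\infty\, \nu(\Omega)\, (\int_\Omega k_0\,d\nu)^{n-1}$, and Gelfand's formula (\ref{eq:spec-rad-Gel-intro}) gives $\cR_0 \le \int_\Omega k_0\,d\nu$. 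A parallel induction using the lower bound in (\ref{eq:separ}) produces
\[
(K^n g)(x) \ge \delta^{n} k_0(x)\Big(\int_\Omega k_0\,d\nu\Big)^{n-1} \int_\Omega g\,d\nu,
\]
and testing with $g = u_1 \equiv 1$ (so $\int g\,d\nu = \nu(\Omega) > 0$) yields $\|K^n\| \ge \delta^n \|k_0\|_\infty \nu(\Omega)(\int_\Omega k_0\,d\nu)^{n-1}$; Gelfand again gives $\cR_0 \ge \delta \int_\Omega k_0\,d\nu$. Hence $\delta \int_\Omega k_0\,d\nu \le \cR_0 \le \int_\Omega k_0\,d\nu$, and in particular $\cR_0 > 0$ iff $\int_\Omega k_0\,d\nu > 0$.

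There is no real obstacle here; the only mildly delicate point is keeping track that the induction needs $\int_\Omega k_0\,d\nu$ to appear with exponent $n-1$ rather than $n$, because the first application of $K$ converts a test function into a multiple of $k_0$ and the remaining $n-1$ applications each contribute a factor $\int_\Omega k_0\,d\nu$ after the upper/lower bound is invoked. Once the two-sided estimate $\delta \int_\Omega k_0\,d\nu \le \cR_0 \le \int_\Omega k_0\,d\nu$ is in hand, the two assertions of the proposition follow cleanly.
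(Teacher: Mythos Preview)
Your proof is correct. For the second assertion you do exactly what the paper does: specialize Proposition~\ref{re:theta-pos} to $n=1$ with $\tilde\nu_1=\delta\nu$, $\nu_1=\nu$, $k_1=k_0$. For the first assertion both you and the paper arrive at the same two-sided estimate $\delta\int_\Omega k_0\,d\nu \le \cR_0 \le \int_\Omega k_0\,d\nu$, but by slightly different means: the paper observes the single-step sandwich $\delta\big(\int_\Omega k_0\,d\nu\big)k_0 \le Kk_0 \le \big(\int_\Omega k_0\,d\nu\big)k_0$ and then invokes external spectral-radius comparison theorems (\cite[Thm.3.1, Thm.3.3]{Thi17}, \cite[Thm.5.31, Thm.6.15]{Thi24}), whereas you iterate the kernel bounds explicitly and read off the conclusion from Gelfand's formula~(\ref{eq:spec-rad-Gel-intro}). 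Your route is more self-contained and avoids the external references; the paper's is terser but leans on prior machinery. Substantively they are the same argument.
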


\begin{proof}
By (\ref{eq:separ}),
\[
\delta \Big (\int_\Omega k_0 \,d \nu\Big) k_0 \le K k_0 \le \Big (\int_\Omega k_0 \,d \nu\Big) k_0.
\]
This implies \cite[Thm.3.1]{Thi17}\cite[Thm.5.31]{Thi24} \cite[Thm.3.3]{Thi17}\cite[Thm.6.15]{Thi24} that
\[
\delta \int_\Omega k_0 \,d \nu \le \cR_0 =
\br(K) \le \int_\Omega k_0 \,d \nu.
\]
The remaining statement is a special case of
Proposition \ref{re:theta-pos} for $n =1$.
\end{proof}

\begin{proposition}
\label{re:unforced-unique}
Assume that $\kappa$ is semi-separable, (\ref{eq:separ}).

\begin{itemize}
\item[(a)] Then there exists at most one non-zero solution
$w \in M^b_+(\Omega)$ of $w $ $= F(w)$.

\item[(b)] If $\cR_0 \le 1$, the zero function is the
only $w \in M^b_+(\Omega)$ with $w = F(w)$.
\end{itemize}
\end{proposition}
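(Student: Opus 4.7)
My plan is to deduce (b) via an eigenfunctional argument and (a) via a strict-concavity comparison, in both cases starting from the two-sided envelope that semi-separability forces on any nonzero fixed point: plugging $w=F(w)$ into (\ref{eq:separ}) yields $\delta c\,k_0\le w\le c\,k_0$ with $c:=\int_\Omega f\circ w\,d\nu$, so once $w$ is nonzero one automatically has $c>0$ (else $w\le 0$), $\{w>0\}=\{k_0>0\}$, and integrating $w\le ck_0$ against $\nu$ forces $\int_\Omega k_0\,d\nu>0$. By Proposition \ref{re:theta-pos-separ} this already yields $\cR_0>0$ and $\theta(w)>0$ whenever a nonzero fixed point exists.

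For (b), assume $w=F(w)\ne 0$ and $\cR_0\le 1$. Since $f(y)\le y$ gives $F(w)\le Kw$, applying $\theta$ produces $\theta(w)\le \cR_0\theta(w)$, which already contradicts $\theta(w)>0$ when $\cR_0<1$. In the borderline case $\cR_0=1$ this only yields $\theta(K(w-f\circ w))=0$, and the plan is to upgrade the pointwise strict inequality $y-f(y)>0$ on $\{w>0\}$ to a uniform lower bound by the order unit $u$. The semi-separable lower bound gives
\begin{equation*}
K(w-f\circ w)\ge \delta\,k_0\int_\Omega(w-f\circ w)\,d\nu,
\end{equation*}
the integral is strictly positive because $\{w>0\}=\{k_0>0\}$ has positive $\nu$-measure, and (\ref{eq:separ}) with $\omega=\Omega$ yields $k_0\ge u/\nu(\Omega)$; hence $Kw-F(w)\ge \delta' u$ for some $\delta'>0$, and the strict-positivity clause of Theorem \ref{re:Krein} forces $\theta(Kw-F(w))>0$, a contradiction.

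For (a), let $w_1,w_2$ be two nonzero fixed points with envelope constants $c_1,c_2$, and set $\mu^*:=\sup\{\mu\ge 0:w_1\ge \mu w_2\}$; the envelopes show $\mu^*\in[\delta c_1/c_2,\,c_1/(\delta c_2)]$ and passing to the pointwise limit shows the supremum is attained. Suppose $\mu^*<1$. Strict concavity of $f$ with $f(0)=0$ gives $f(\mu^* y)>\mu^* f(y)$ for all $y>0$, so
\begin{equation*}
\beta:=\int_\Omega\bigl(f(\mu^* w_2)-\mu^* f(w_2)\bigr)\,d\nu>0
\end{equation*}
since the integrand is strictly positive on the $\nu$-nonnull set $\{k_0>0\}$. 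The semi-separable lower bound applied inside $F$ and $k_0\ge w_2/c_2$ then give $F(\mu^* w_2)-\mu^* w_2\ge \delta\beta\,k_0\ge (\delta\beta/c_2)w_2$, and monotonicity of $F$ yields $w_1=F(w_1)\ge F(\mu^* w_2)\ge (\mu^*+\delta\beta/c_2)w_2$, contradicting the definition of $\mu^*$. Hence $\mu^*\ge 1$, so $w_1\ge w_2$, and by symmetry $w_1=w_2$.

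The main obstacle in both parts is the same: bare pointwise strict inequalities (the gap $y-f(y)>0$ at $\cR_0=1$ in (b) and the concavity gap $f(\mu^* y)-\mu^* f(y)>0$ in (a)) must be converted into a genuinely uniform lower bound — by a multiple of the order unit $u$ in (b) and by a multiple of $w_2$ in (a). In both cases the conversion is driven by the same two ingredients: the two-sided semi-separable envelope (which pulls an integral against $\nu$ out of $K$) together with the strict-positivity clause of Theorem \ref{re:Krein}.
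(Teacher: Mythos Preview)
Your proof is correct. Part (a) is essentially the paper's approach: both establish the Krasnosel'skij strict-sublinearity property, though the paper cites \cite[Thm.6.3]{Kra} abstractly while you write out the $\mu^*$ comparison directly. The mechanics (pulling the concavity gap out as an integral against $\nu$, then converting $k_0$ into a multiple of $w_2$ via the upper envelope) are identical.

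Part (b) differs. The paper reuses the strict-sublinearity inequality from (a) at $t=1/2$: from $K(tw)\ge F(tw)\ge (1+\eta)tF(w)=(1+\eta)tw$ one reads off $\cR_0\ge 1+\eta$ directly, handling $\cR_0<1$ and $\cR_0=1$ in one stroke without invoking the eigenfunctional. Your route instead splits into cases and, at $\cR_0=1$, shows $Kw-w\ge\delta' u$ to force $\theta(Kw-w)>0$ against $\theta(Kw-w)=(\cR_0-1)\theta(w)=0$. Both are valid; the paper's version is shorter because it recycles (a), while yours is self-contained and illustrates how the order-unit comparison $k_0\ge u/\nu(\Omega)$ interacts with the strict-positivity clause of Theorem~\ref{re:Krein}. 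One minor remark: your line ``integrating $w\le ck_0$ against $\nu$ forces $\int_\Omega k_0\,d\nu>0$'' tacitly uses $\int_\Omega w\,d\nu\ge\int_\Omega f\circ w\,d\nu=c>0$, which you might make explicit.
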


\begin{proof}
We follow \cite[Sec.6.1]{Kra}.
By (\ref{eq:separ}),
\begin{equation}
\label{eq:separ-compare}
\delta k_0(x) \int_\Omega (f \circ w) d \nu \le F(w)(x) \le k_0(x) \int_\Omega (f \circ w) d \nu
\le k_0(x) \nu(\Omega).
\end{equation}
By (\ref{eq:separ-compare}), $F(w)$ is not the zero function if and only if
\begin{equation}
\label{eq:separ-compare-int}
\int_\Omega (f \circ w) d \nu >0.
\end{equation}
Since $f$ is concave and $f(0) =0$,
\[
F(tw) \ge t F(w), \qquad t \in [0,1], \quad w \in M^b_+(\Omega).
\]
(a) According to \cite[Thm.6.3]{Kra}, it is sufficient
to show that, for any $t \in (0,1)$ and any
$w \in M^b_+(\Omega)$ with
\[
\int_\Omega (f \circ w) d \nu >0,
\]
some $\eta >0$ can be found such that
\begin{equation}
\label{eq:strict-sublin}
F(t w) \ge (1+\eta) tF(w).
\end{equation}
Since $f$ is strictly concave,
$f(tr) > t f(r)$ for all $r > 0$, $t \in (0,1)$.
Let $t \in (0,1)$. Then $f(t w(\xi)) - t f(w(\xi))$ is nonnegative for
all $\xi \in \Omega$ and is positive for $\xi
\in \Omega$ if $w(\xi) > 0$.

Suppose that
\[
\int_\Omega (f(t w(\xi)) - t f(w(\xi))) \nu(d \xi) =0.
\]
Then $w(\xi)=0$ for $\nu$-a.a. $\xi \in \Omega$
and $\int_\Omega (f \circ w) d \nu =0$.

By contraposition, if $\int_\Omega (f \circ w) d \nu > 0$,
\[
\tilde \eta = \int_\Omega \big (f(t w(\xi)) - t f(w(\xi))\big ) \nu(d \xi) > 0.
\]
Now,
\[
F(tw)(x) - t F(w)(x)
=
\int_\Omega \big [f(tw(\xi)) - t f(w(\xi))\big ] \kappa (d\xi,x)
\]
with the expression in $[\cdot]$ being nonnegative.
By (\ref{eq:separ}),
\[
F(tw)(x) - t F(w)(x)
\ge
\delta k_0(x) \int_\Omega \big [f(tw(\xi)) - t f(w(\xi))\big ] \nu(d \xi)
\ge
\delta k_0(x) t \hat \eta
\]
with $\hat \eta = \tilde \eta/t$.
By (\ref{eq:separ-compare}) and (\ref{eq:separ}),
\[
F(tw)(x) - t F(w)(x)
\ge \delta k_0(x) t \hat \eta \frac{1}{\nu(\Omega)}\int_\Omega f(w(\xi)) \nu(d\xi)
\ge
\delta t  \frac{\hat \eta}{\nu(\Omega)} F(w(x)).
\]
We reorganize,
\[
F(tw) (x) \ge \Big (1+ \frac{\delta \hat \eta}{\nu(\Omega)}\Big )
t Fw(x), \qquad x \in \Omega,
\]
and we have shown (\ref{eq:strict-sublin})
with $\eta =\frac{\delta \hat \eta}{\nu(\Omega)}$.

(b) Suppose that $0 \ne w \in M^b_+(\Omega)$ satisfies $w = F(w)$.
Then
(\ref{eq:separ-compare-int}) holds. Let $t =1/2$.
By the same considerations as before, there
exists some $\eta >0$ such that (\ref{eq:strict-sublin}) is valid. By (\ref{eq:order-der}),
\[
K (tw) \ge F(tw) \ge (1+\eta) (tw).
\]
This implies that $\cR_0 = \br(K) \ge (1+\eta)$ \cite[Thm.3.1]{Thi17} \cite[Thm.5.31]{Thi24},
a contradiction.
\end{proof}

We are now in the position to prove
the preview results in  Section
\ref{subsec:semisep-intro}.

\begin{proof}[Proof of Theorem \ref{re:semisep-intro}]
Part (a) follows from Corollary \ref{re:no-spread}
and (\ref{eq:separ}).

(b) Existence of $\tilde w$ follows from
Corollary \ref{re:dominated-fixp}
with $\nu$
being replaced by $\|k_0\|_\infty \nu$.
Uniqueness of $\tilde w$ follows from Proposition \ref{re:unforced-unique}.

The remaining statements follow from
Proposition \ref{re:theta-pos-separ},
Theorem \ref{re:epi-occur} (b) and
Theorem \ref{re:dominated}.
\end{proof}

\begin{proof}[Proof of Theorem \ref{re:semisep-sequ-intro}]
(a) Pointwise convergence of $w_\ell \to 0$ follows from Corollary \ref{re:final-size-converge-zero} and
Proposition \ref{re:unforced-unique} (b).
For all $\ell \in \N$,
\[
0 \le w_\ell(x) -w^\circ_\ell (x)
\le
k_0(x) \int_\Omega f(w_\ell(\xi) ) \nu(d\xi)
\to 0
\]
as $\ell \to \infty$ by Lebesgue's theorem
of dominated convergence
because $f(w_\ell(\xi) )$
$\to 0$ as $\ell \to \infty$.
Since $k_0$ is bounded, $w_\ell - w^\circ_\ell
\to 0$ as $\ell \to \infty$ uniformly on $\Omega$.

\smallskip

(b) By Fatou's lemma and the increase
and continuity of $f$,
\[
\int_\Omega f \circ (\limsup_{\ell \to \infty} w_\ell) d \nu
\ge
 \int_\omega \limsup_{\ell \to \infty} (f \circ w_\ell) d\nu
 \ge
 \limsup_{\ell \to \infty} \int_\Omega (f \circ w_\ell ) d \nu.
 \]
 By Theorem \ref{re:semisep-intro} (b),
 \[
 \int_\Omega (f \circ w_\ell ) d \nu\ge \ln \cR_0/\|k_0\|_\infty, \qquad \ell \in \N.
 \]
So, $\limsup_{\ell \to \infty} w_\ell$ is not
the zero function. By Theorem \ref{re:final-size-converge},
there exists a solution
$\hat w$ of $\hat w = F(\hat w)$ with
 $\limsup_{\ell \to \infty} w_\ell \le \hat w$.
 Then $\hat w$ is not the zero function and
 $\hat w = \tilde w$ by Theorem \ref{re:semisep-intro} (b). Since $w_\ell \ge \tilde w$
 for all $\ell \in \N$ by Theorem \ref{re:semisep-intro} (b),
 $w_\ell \to \tilde w = \hat w$ as $\ell \to \infty$ pointwise on $\Omega$.

 For all $\ell \in \N$,
 \[
 0 \le w_\ell(x) - w^\circ_\ell(x) - \tilde w(x)
 =
 \int_{\Omega} \big (f(w_\ell(\xi)) - f(\tilde w(\xi)) \big ) \kappa(d \xi, x).
 \]
 By (\ref{eq:separ}),
 \[
 0 \le w_\ell(x) - w^\circ_\ell(x) - \tilde w(x)
\le
k_0(x) \int_\Omega \big (f(w_\ell(\xi)) - f(\tilde w(\xi)) \big ) \nu(d \xi).
\]
Since $k_0$ is bounded and $f(w_\ell(\xi)) - f(\tilde w(\xi)) \to 0$ as $\ell \to \infty$
pointwise for $\xi \in \Omega$,
by Lebesgue's theorem of dominated convergence,
\[
w_\ell(x) - w^\circ_\ell(x) \to \tilde w(x),
\quad
\ell \to \infty,
\]
uniformly for $x \in \Omega$.
 \end{proof}


\section{Metric spaces of traits and Feller kernels}
\label{sec:metric}


To replace semi-separability of the kernel
as an assumption, we assume that $\Omega$ is a metric space with metric $\rho$ and $\cB$ the $\sigma$-algebra of
Borel sets. Assume that $\Omega$ is not just
a single point. Let $C^b(\Omega)$ denote the Banach
space of bounded continuous functions with the
supremum norm which is a closed subspace of $M^b(\Omega)$.
We start with a few technical observations.

\begin{remark}
\label{re:regular-measure}
If $\mu$ is a finite nonnegative
measure on $\cB$, then $\mu$ is inner and outer
regular and $C^b(\Omega)$ is dense in $L^1(\Omega, \mu)$.
\end{remark}

\begin{proof}
See \cite[Sec.12.1]{AlBo}  for the regularity statement
and \cite[L.IV.8.19]{DS} for the density statement. Notice that $\mu$ is regular in \cite[L.IV.8.19]{DS} if and and only it is inner and outer regular in
\cite[Sec.12.1]{AlBo}.
\end{proof}

\begin{lemma}
\label{re:top-pos}
Let $g \in M^b_+(\Omega)$ be topologically positive
(Definition \ref{def:Feller}).
\begin{itemize}
\item[(a)] Then there exists a non-zero Lipschitz continuous function $\tilde g:\Omega \to \R_+$
with a Lipschitz constant $\le 1$ such that $\tilde g \le g$ and $\tilde g$ is strictly positive on
every nonempty open subset $U$ with $\inf_U g >0$.

\item[(b)] If $g$ is lower semicontinuous, $x \in \Omega$ and $g(x) >0$, then $\tilde g(x) >0$ for the function $\tilde g$ from (a).
\end{itemize}
\end{lemma}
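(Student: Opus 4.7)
The plan is to build $\tilde g$ as the pointwise supremum of an explicit family of $1$-Lipschitz ``tent'' functions attached to open sets on which $g$ has a positive infimum. For every open $U\subseteq\Omega$ and every $\delta>0$ with $\inf_U g\ge\delta$, set
\[
\phi_{U,\delta}(x):=\min\bigl(\delta,\rho(x,\Omega\setminus U)\bigr),
\]
with the convention $\rho(x,\emptyset)=+\infty$ (so that $\phi_{\Omega,\delta}\equiv\delta$). Since $y\mapsto\rho(y,A)$ is $1$-Lipschitz for any nonempty $A\subseteq\Omega$ and truncating by a constant preserves the Lipschitz constant, each $\phi_{U,\delta}$ is $1$-Lipschitz. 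By construction $\phi_{U,\delta}$ vanishes on $\Omega\setminus U$, is strictly positive at every $x\in U$, and is bounded above by $\delta\le\inf_U g\le g$; therefore $\phi_{U,\delta}\le g$ pointwise on $\Omega$.

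Now define
\[
\tilde g(x):=\sup\bigl\{\phi_{U,\delta}(x):U\subseteq\Omega\text{ open, }\delta>0,\ \inf_U g\ge\delta\bigr\}.
\]
Then $0\le\tilde g\le g\le\|g\|_\infty$, so the supremum is finite everywhere. The sup of a family of $1$-Lipschitz functions that is pointwise bounded above is itself $1$-Lipschitz by the standard argument $\phi_{U,\delta}(x)\le\phi_{U,\delta}(y)+\rho(x,y)\le\tilde g(y)+\rho(x,y)$, followed by symmetry. Topological positivity of $g$ supplies at least one admissible pair $(U_0,\delta_0)$ with $U_0\ne\emptyset$, so $\tilde g\ge\phi_{U_0,\delta_0}$ is strictly positive on $U_0$ and in particular $\tilde g\not\equiv 0$. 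More generally, for any admissible $(U,\delta)$ and any $x\in U$, either $U=\Omega$ and $\phi_{U,\delta}(x)=\delta>0$, or $\rho(x,\Omega\setminus U)>0$ because $x$ is an interior point of $U$; in either case $\tilde g(x)\ge\phi_{U,\delta}(x)>0$, which proves (a).

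For (b), assume $g$ is lower semicontinuous and $g(x)>0$. By lower semicontinuity the set $U:=\{y\in\Omega:g(y)>g(x)/2\}$ is open, contains $x$, and satisfies $\inf_U g\ge g(x)/2>0$; hence $(U,g(x)/2)$ is an admissible pair and (a) yields $\tilde g(x)>0$. I do not expect a real obstacle here: the only step requiring any care is checking that the sup construction is well-defined and preserves the $1$-Lipschitz bound, both of which follow immediately from $\tilde g\le\|g\|_\infty$ and the standard stability of the Lipschitz constant under bounded suprema.
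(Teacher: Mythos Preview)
Your proof is correct, but it uses a genuinely different construction from the paper's. The paper defines
\[
\tilde g(x)=\inf\{\rho(x,\xi)+g(\xi):\xi\in\Omega\},
\]
i.e.\ the Pasch--Hausdorff (inf-convolution) envelope of $g$, which is the \emph{largest} $1$-Lipschitz minorant of $g$. Part (a) then follows by a local argument: for $x\in U$ with $\inf_U g>0$, one picks $\delta<\inf_U g$ small enough that the ball $\{\rho(x,\cdot)<\delta\}$ lies in $U$, and checks $\rho(x,\xi)+g(\xi)\ge\delta$ in both cases $\rho(x,\xi)<\delta$ and $\rho(x,\xi)\ge\delta$. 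Part (b) is done by contraposition via a minimizing sequence $\xi_n\to x$.

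Your approach instead builds $\tilde g$ from below as a supremum of tent functions $\min(\delta,\rho(\cdot,\Omega\setminus U))$, one for each open $U$ on which $g$ has a positive infimum. This yields a (generally smaller) $1$-Lipschitz minorant, but the positivity claims in (a) and (b) become immediate because the witnessing open set $U$ is already part of the index set. The trade-off: the paper's single formula is shorter and gives the canonical envelope, while your construction makes the link between ``$\inf_U g>0$'' and ``$\tilde g>0$ on $U$'' completely transparent, at the cost of a slightly larger setup. Both arguments are elementary and self-contained.
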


\begin{proof} (a) Recall that $\rho$ denotes the metric
on $\Omega$. Let $g \in M^b_+(\Omega)$.
Define
\[
\tilde g(x) = \inf \big \{ \rho(x,\xi) + g(\xi);\; \xi \in \Omega \big \}, \qquad x \in \Omega.
\]
Then $\tilde g$ is Lipschitz continuous with a Lipschitz constant $\le 1$and $0 \le \tilde g \le g$ on $\Omega$ \cite[Prop.2.78]{Thi24}.

Let $U$ be a nonempty open subset of $\Omega$ such that $\inf_U g >0$. Then,  for any $x \in U$ there exists some $\delta \in (0, \inf_U g)$
such that $\xi \in U$ and $g(\xi) > \delta$ whenever $\xi \in \Omega$ and $\rho(x,\xi) < \delta$.
This implies that
\[
\rho(x,\xi) + g(\xi)\ge \delta, \quad \xi \in \Omega,
\]
and $\tilde  g(x) \ge \delta$.

(b) Let $g$ be lower semicontinuous.
Suppose that $x \in \Omega$ and $\tilde g(x) =0$.
Then there exists a sequence $(\xi_n)$ in $\Omega$
such that $\rho(x, \xi_n) + g(\xi_n) \to 0$
as $n \to \infty$. In particular, $\xi_n \to x $
as $n \to \infty$. Since $g$ is lower continuous,
$0=\liminf_{n\to \infty} g(\xi_n) \ge g(x)$
\cite[L.A.49]{Thi24}.

Conversely, $g(x) >0$
implies that $\tilde g(x) >0$.
\end{proof}

\subsection{Feller kernels}
\label{subsec:Feller}

A measure kernel $\kappa$ is called a {\em Feller
kernel} if the map $K$ on $M^b(\Omega)$ induced by
$\kappa$ maps $C^b(\Omega)$ into itself
(Definition \ref{def:Feller}).

\begin{proposition}
\label{re:Feller-semicon}
Let $\kappa$ be a Feller kernel and $g: \Omega
\to \R_+$ be lower semicontinuous. Then $Kg$
and $Fg$ are lower semicontinuous.
\end{proposition}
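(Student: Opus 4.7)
The plan is to reduce the statement to the Feller property itself via monotone approximation from below by bounded continuous functions. More precisely, since $g:\Omega \to \R_+$ is lower semicontinuous on the metric space $(\Omega,\rho)$, I would use the classical Moreau--Yosida-type regularization
\[
g_n(x) := \min\!\Bigl\{n,\; \inf_{\xi \in \Omega}\bigl[n\rho(x,\xi) + g(\xi)\bigr]\Bigr\}, \qquad n \in \N, \; x \in \Omega.
\]
The inner infimum is $n$-Lipschitz (cf.\ the argument in the proof of Lemma \ref{re:top-pos}) and bounded above by $g$, so $g_n \in C^b_+(\Omega)$. The sequence is increasing in $n$, and for any $x \in \Omega$ and any $c < g(x)$ lower semicontinuity produces $\delta>0$ with $g \ge c$ on the $\delta$-ball around $x$, forcing $\inf_\xi[n\rho(x,\xi)+g(\xi)] \ge \min(c, n\delta)$ for large $n$; letting first $n \to \infty$ and then $c \nearrow g(x)$ gives $g_n(x) \nearrow g(x)$.

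With $(g_n)$ in hand, the Feller assumption yields $Kg_n \in C^b(\Omega)$ for each $n$. Fixing $x \in \Omega$ and applying Beppo Levi's monotone convergence theorem to the measure $\kappa(\cdot,x)$ gives $(Kg_n)(x) \nearrow (Kg)(x)$ in $[0,\infty]$. Thus $Kg$ is the pointwise supremum of an increasing sequence of continuous functions, and such a supremum is automatically lower semicontinuous (for any $c \in \R$ the sublevel set $\{Kg \le c\} = \bigcap_n \{Kg_n \le c\}$ is closed). This proves the first claim.

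For $F(g)$, I would exploit that $f$ from (\ref{eq:skellam-intro}) is continuous, increasing, bounded by $1$, with $f(0)=0$. Hence $f \circ g_n \in C^b_+(\Omega)$ and $f \circ g_n \nearrow f \circ g$ pointwise on $\Omega$ (with $f \circ g$ itself lower semicontinuous as the composition of an lsc function with a continuous increasing one). Applying the first part of the argument to $f \circ g$ in place of $g$ yields that $F(g) = K(f \circ g)$ is lower semicontinuous.

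The only step that is not entirely formal is the classical fact $g_n \nearrow g$, which is where lower semicontinuity is actually used; everything else is monotone convergence plus the Feller hypothesis. I do not expect any genuine obstacle — the construction is standard and purely real-analytic, matching the article's stated preference for analysis over functional analysis.
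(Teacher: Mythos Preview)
Your proof is correct and follows essentially the same route as the paper: approximate the lsc function $g$ from below by an increasing sequence of bounded Lipschitz (hence continuous) functions, invoke the Feller property on each approximant, pass to the limit via monotone convergence, and conclude that the pointwise supremum of continuous functions is lsc. The paper simply cites the Lipschitz approximation as a known fact, whereas you spell out the Moreau--Yosida construction explicitly; otherwise the arguments coincide.
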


\begin{proof}
Let $g: \Omega
\to \R_+$ be lower semicontinuous.
Then $g$ is the pointwise limit of an
increasing sequence of Lipschitz continuous
functions \cite[Thm.3.13]{AlBo}\cite[Prop.2.78]{Thi24} and
$f \circ g$ is the pointwise limit of an
increasing sequence of continuous functions.
Since $\kappa$ is a Feller kernel, $Kg$
and $F(g)$ are pointwise limits of increasing
sequences of continuous functions and thus
lower semicontinuous \cite[L.A.53]{Thi24}.
\end{proof}

\begin{theorem}
\label{re:min-sol-lowsemi}
 Let $\kappa$ be a Feller
kernel and $w^\circ \in M^b_+(\Omega)$
be lower semicontinuous. Then the minimal solution
$w \in M^b_+(\Omega)$ of $w = F(w) + w^\circ$
is lower semicontinuous.
\end{theorem}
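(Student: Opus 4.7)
The plan is to exploit the monotone recursion $(w_n)$ from (\ref{eq:final-recursive-kernel}) defined by $w_0=w^\circ$ and $w_{n+1}=F(w_n)+w^\circ$, whose pointwise increasing limit is the minimal solution $w$. Since lower semicontinuity is preserved under pointwise suprema of arbitrary families (\cite[L.A.53]{Thi24}), it suffices to show by induction that every $w_n$ is lower semicontinuous.

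The base case is immediate from the hypothesis that $w^\circ$ is lower semicontinuous. For the induction step, assume $w_n\in M^b_+(\Omega)$ is lower semicontinuous. Proposition \ref{re:Feller-semicon} applied to the Feller kernel $\kappa$ then says $F(w_n)$ is lower semicontinuous. Finally, the sum of two nonnegative lower semicontinuous functions is lower semicontinuous, so $w_{n+1}=F(w_n)+w^\circ$ is lower semicontinuous. This closes the induction.

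Taking the pointwise (increasing) limit $w(x)=\sup_{n\in\N} w_n(x)$ on $\Omega$, we conclude that $w$ is lower semicontinuous. Boundedness is already built into the construction so $w\in M^b_+(\Omega)$, and by the results of Section \ref{sec:final-size} this $w$ is precisely the minimal solution of $w=F(w)+w^\circ$.

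There is no real obstacle here; all the work has been done in Proposition \ref{re:Feller-semicon}, and the only thing to check carefully is the stability of lower semicontinuity under the two operations that appear in the recursion (pointwise increasing limits and finite nonnegative sums), both of which are standard. The proof is essentially a one-line induction on top of Proposition \ref{re:Feller-semicon}.
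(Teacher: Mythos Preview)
Your proof is correct and follows essentially the same route as the paper: induction along the recursion $w_0=w^\circ$, $w_{n+1}=F(w_n)+w^\circ$ using Proposition~\ref{re:Feller-semicon}, then the fact that the pointwise supremum of lower semicontinuous functions is lower semicontinuous \cite[L.A.53]{Thi24}. The only difference is that you spell out the finite-sum step explicitly, which the paper leaves implicit.
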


\begin{proof}
Recall that $w$ is given as the pointwise limit $w = \lim_{n \to
 \infty} w_n$ of the recursion
\begin{equation}
\label{eq:recursion-intro1}
w_{n} = F(w_{n-1}) + w^\circ, \quad  n \in \N,
\qquad
w_0 = w^\circ.
\end{equation}
By induction, $(w_n)$ is a increasing sequence
of functions in $M^b_+(\Omega)$ and every $w_n$
is lower semicontinuous by Proposition \ref{re:Feller-semicon} and so is the pointwise
limit $w= \sup_{n \in \N} w_n$ \cite[L.A.53]{Thi24}. \end{proof}

\begin{proposition}
Let $\kappa$ be a measure kernel
such that $\kappa(\Omega,\cdot)$ is continuous
on $\Omega$ and $Kg$ is lower semicontinuous
for any $g \in C^b_+(\Omega)$.
Then $\kappa$ is a Feller kernel.
\end{proposition}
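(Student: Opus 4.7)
The plan is to exploit the hypothesis in the positive cone in two directions, using $\kappa(\Omega,\cdot)$ as a continuous ``ceiling'' that converts lower semicontinuity into upper semicontinuity. First I would fix $g \in C^b_+(\Omega)$ and set $M = \|g\|_\infty$, so that both $g$ and $M u_1 - g$ belong to $C^b_+(\Omega)$ (where $u_1$ is the constant function $1$).

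By the hypothesis, $Kg$ is lower semicontinuous on $\Omega$. Applying the same hypothesis to $M u_1 - g \in C^b_+(\Omega)$, the function
\[
K(M u_1 - g)(x) = M \kappa(\Omega,x) - (Kg)(x)
\]
is also lower semicontinuous. Since $\kappa(\Omega,\cdot)$ is continuous by assumption, $M\kappa(\Omega,\cdot)$ is continuous, and subtracting it shows that $-Kg$ is lower semicontinuous, i.e.\ $Kg$ is upper semicontinuous. Combining the two semicontinuities yields $Kg \in C^b(\Omega)$.

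For an arbitrary $g \in C^b(\Omega)$, I would split it as $g = g^+ - g^-$ with $g^\pm = \max(\pm g, 0) \in C^b_+(\Omega)$. By linearity of $K$ and the previous step, $Kg = Kg^+ - Kg^-$ is the difference of two continuous bounded functions, hence continuous; boundedness follows from the kernel estimate $|Kg| \le \|g\|_\infty \kappa(\Omega,\cdot)$, which is bounded because $\kappa(\Omega,\cdot) \in M^b_+(\Omega)$ by the definition of a measure kernel. Therefore $K$ maps $C^b(\Omega)$ into itself and $\kappa$ is a Feller kernel.

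The only subtle point is ensuring that the linearity argument $K(Mu_1 - g) = MKu_1 - Kg$ is available, which is immediate since $K$ is the bounded linear operator induced by the measure kernel $\kappa$ via (\ref{eq:lin-approx-intro}); no additional regularity is needed. There is no real obstacle here beyond recognizing this symmetric use of the positive cone.
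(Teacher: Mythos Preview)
Your proof is correct and follows essentially the same approach as the paper: both use the complementary function $\|g\|_\infty\, u_1 - g \in C^b_+(\Omega)$ together with the continuity of $\kappa(\Omega,\cdot)$ to upgrade lower semicontinuity of $Kg$ to upper semicontinuity. Your explicit passage from $C^b_+(\Omega)$ to $C^b(\Omega)$ via $g = g^+ - g^-$ is a detail the paper leaves implicit.
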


\begin{proof}
Let $g \in C^b_+(\Omega)$. It is sufficient
to show that $K g$ is upper semicontinuous.
Define $\tilde g \in C^b_+(\Omega)$ by
$\tilde g(x)  = \|g\|_\infty - g(x)$, $x \in \Omega$. By assumption, $K \tilde g$ is lower semicontinuous,
\[
K \tilde g = \|g\|_\infty \kappa(\Omega, \cdot)
- K g.
\]
So, $Kg$ is upper semicontinuous. See
\cite[Rem.A.48]{Thi24}.
\end{proof}


\subsection{Topological irreducibility}

We call $\kappa$ {\em topologically irreducible}
if for any nonempty open strict subset $U$ of
$\Omega$ there exist some $x \in \Omega \setminus U$ such that $\kappa (U,x) > 0$. If $\kappa$
is topologically irreducible, then \cite[Rem.13.57]{Thi24}
\begin{equation}
\label{eq:top-irr-Omega}
\kappa\big ( \Omega \setminus \{x\}, x\big) > 0,
\qquad
x \in \Omega.
\end{equation}

For the special  case (\ref{eq:meas-ker-susc}), $\kappa$ is topologically
irreducible if $S_0(\omega) >0$ for any nonempty open subset of $\Omega$ and if for any nonempty
open strict subset $\omega$ of $\Omega$ there
exists some $x \in \Omega \setminus \omega$
such that $k(x, \cdot)$ is not zero a.e. on
$\omega$.

\begin{theorem}
\label{re:irred-pos}
Let $\kappa$ be topologically irreducible,
$w^\circ$ be lower semicontinuous and not the zero function and
$w$ be the minimal solution of $w = F(w) + w^\circ$ (\ref{eq:final-limit}).
Then $w$ is lower semicontinuous and  $w-w^\circ$ is strictly positive on $\Omega$.
\end{theorem}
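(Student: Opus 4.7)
The plan is to combine lower semicontinuity of the minimal solution with topological irreducibility via a level-set exhaustion of $\{w>0\}$. Lower semicontinuity of $w$ is immediate: by Theorem \ref{re:min-sol-lowsemi}, since $\kappa$ is a Feller kernel and $w^\circ$ is lower semicontinuous, the minimal solution $w$ of $w = F(w)+w^\circ$ is lower semicontinuous. This also yields that $V := \{w > 0\}$ is open. Moreover, $w^\circ$ is lower semicontinuous and nontrivial, so $\{w^\circ > 0\}$ is a nonempty open set contained in $V$; in particular $V \ne \emptyset$.

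Next I would show $V = \Omega$ by contradiction. Suppose $V$ is a nonempty open \emph{strict} subset of $\Omega$. Topological irreducibility of $\kappa$ produces some $x_0 \in \Omega \setminus V$ with $\kappa(V, x_0) > 0$. Writing $V$ as the increasing union of Borel sets
\[
V = \bigcup_{n\in\N} \{w > 1/n\},
\]
continuity of the finite measure $\kappa(\cdot,x_0)$ from below yields $n_0 \in \N$ with $\kappa(\{w > 1/n_0\}, x_0) > 0$. Since $f$ in (\ref{eq:skellam-intro}) is increasing with $f(1/n_0) > 0$,
\[
F(w)(x_0) = \int_\Omega f(w(\xi))\,\kappa(d\xi,x_0) \ge f(1/n_0)\,\kappa(\{w>1/n_0\}, x_0) > 0.
\]
Hence $w(x_0) \ge F(w)(x_0) > 0$, so $x_0 \in V$, contradicting $x_0 \in \Omega \setminus V$. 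Therefore $V = \Omega$, that is, $w > 0$ on $\Omega$.

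Finally, I would derive $w - w^\circ > 0$, i.e., $F(w)(x) > 0$ for every $x \in \Omega$, by repeating the exhaustion trick, now against $\Omega$ itself. Since $\kappa$ is topologically irreducible, (\ref{eq:top-irr-Omega}) gives $\kappa(\Omega \setminus \{x\}, x) > 0$, hence $\kappa(\Omega, x) > 0$. From $w > 0$ on $\Omega$ we obtain $\Omega = \bigcup_{n\in\N} \{w > 1/n\}$, so continuity from below provides $n \in \N$ with $\kappa(\{w > 1/n\}, x) > 0$, and then $F(w)(x) \ge f(1/n)\,\kappa(\{w>1/n\}, x) > 0$. Thus $w(x) - w^\circ(x) = F(w)(x) > 0$ for every $x \in \Omega$.

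The main obstacle is the step from \emph{pointwise} positivity of $f \circ w$ to \emph{integral} positivity against $\kappa(\cdot,x)$: without a uniform lower bound on $f\circ w$, this cannot be read off directly, since $\kappa(\cdot,x)$ may be singular and $w$ could approach $0$ along sequences. The level-set exhaustion $\{w > 1/n\} \uparrow \{w > 0\}$ combined with continuity from below of the finite measure $\kappa(\cdot,x)$ is precisely the device that bridges this gap, and it is only available because lower semicontinuity of $w$ (via Theorem \ref{re:min-sol-lowsemi}) makes $\{w > 0\}$ open and hence Borel measurable.
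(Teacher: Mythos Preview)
Your proof is correct and follows essentially the same route as the paper: lower semicontinuity of $w$ via Theorem~\ref{re:min-sol-lowsemi} to make $\{w>0\}$ open, then the contradiction from topological irreducibility via the level-set exhaustion $\{w>0\}=\bigcup_n\{w>1/n\}$ and continuity from below of $\kappa(\cdot,x)$, and finally the same device together with (\ref{eq:top-irr-Omega}) to conclude $F(w)>0$. The only difference is cosmetic: the paper compresses your last step into a single sentence, and your closing remark slightly overstates the role of openness (the sets $\{w>1/n\}$ are Borel already since $w$ is measurable; openness of $\{w>0\}$ is needed only to invoke topological irreducibility).
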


\begin{proof}
By Proposition \ref{re:min-sol-lowsemi},
$w$ is lower semicontinuous
and $U=\{w>0\}$ is open \cite[A.51]{Thi24}.

Since $w^\circ$ is not the zero function, $\{w>0\}$ is not the empty set.

Suppose that $U=\{w> 0\} \ne \Omega$. Since $\kappa$
is topologically irreducible,
there exist some $x \in \Omega \setminus U$ such that $\kappa (U,x) > 0$. Recall that
\[
w(x) = \int_\Omega f(w(\xi)) \kappa(d\xi,x)
+w^\circ(x).
\]
Now $U = \bigcup_{n \in \N} \{w > 1/n\}$.
Since $\kappa (\cdot,x)$ is a measure,
there is some $n \in \N$ such that
$\kappa(\{w > 1/n\}, x ) >0$.
Then
\[
w(x) \ge  f(1/n) \kappa(\{w> 1/n\}, x)> 0,
\]
and $x \in U$, a contradiction. Now $f\circ w$
is strictly positive and $F(w)=w -w^\circ$ is strictly positive by (\ref{eq:top-irr-Omega}).
\end{proof}

\begin{remark}
\label{re:irr-pos-nec}
Topological irreducibility of the kernel
is necessary for the epidemic to always  reach
all traits.
\end{remark}

\begin{proof}
Assume that $\kappa $ is not topologically irreducibility. Then there
exists some nonempty open strict subset $U$
of $\Omega$ such that $\kappa(U,x) =0$
for all $x \in \Omega \setminus U$.
Since $\Omega$ is a metric space, there is some
nonzero continuous function $w_0$ such that
$w_0 (x) =0$ for all $x \in \Omega \setminus U$
\cite[L.3.20]{AlBo}.
By induction and  (\ref{eq:final-recursive-kernel})
$w_n(x) =0$ for all $n \in \N$
and $x \in \Omega \setminus U$.
Then the pointwise limit function $w$ also satisfies $w(x) =0$
for all $x \in \Omega \setminus U$.
\end{proof}

\begin{proof}[Proof of Theorem \ref{re:irred-pos1}]
Let $U$ be an open nonempty subset of $\Omega$
such that $\inf_U w > 0$. By Lemma \ref{re:top-pos},
there exists some $\tilde w^\circ \in C^b_+(\Omega)$
which is strictly positive on $U$ and  $ 0 \le \tilde w^\circ \le w^\circ$
on $\Omega$.
Let $\tilde w$ be the minimal solution
of $\tilde w = F(\tilde w ) + \tilde w^\circ$. By Theorem \ref{re:final-size-increase},
 $\tilde w \le w$ on $\Omega$.
By Theorem \ref{re:irred-pos}, $\{\tilde w > 0\} = \Omega $
and so $\{w > 0\} = \Omega$. Then $f \circ w $
is strictly positive on $\Omega$
and $F(w)=w -w^\circ$ is strictly positive by (\ref{eq:top-irr-Omega}).
\end{proof}

\begin{theorem}
\label{re:irred-pos2}
Let $\tilde w: \Omega \to \R_+$ be a lower semicontinuous non-zero solution to the inequality
\[
\tilde w(x) \ge \int_\Omega f(\tilde w(\xi)) \kappa(d
\xi,x), \qquad x \in \Omega,
\]
and let $\kappa$ be a topologically irreducible
Feller kernel. Then $\tilde w$ is strictly positive
on $\Omega$.
\end{theorem}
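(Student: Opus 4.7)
The plan is to mimic the argument of Theorem \ref{re:irred-pos} essentially verbatim, exploiting only the one-sided inequality that $\tilde w$ satisfies together with the lower semicontinuity hypothesis, and dispensing with the $w^\circ$ contribution that appeared there.

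First I would observe that since $\tilde w$ is lower semicontinuous, the superlevel set $U = \{x \in \Omega : \tilde w(x) > 0\}$ is open (this uses \cite[A.51]{Thi24}, already invoked in the proof of Theorem \ref{re:irred-pos}). Because $\tilde w$ is assumed to be non-zero, $U$ is non-empty. The goal is to show $U = \Omega$, so I suppose for contradiction that $U$ is a non-empty open strict subset of $\Omega$.

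Next I apply topological irreducibility: there exists $x \in \Omega \setminus U$ with $\kappa(U,x) > 0$. Since $U = \bigcup_{n \in \N} \{\tilde w > 1/n\}$ is an increasing union and $\kappa(\cdot,x)$ is a measure, there exists some $n \in \N$ with $\kappa(\{\tilde w > 1/n\},x) > 0$. Using $f(1/n) > 0$ together with the assumed inequality,
\[
\tilde w(x) \ge \int_\Omega f(\tilde w(\xi))\,\kappa(d\xi,x) \ge f(1/n)\,\kappa\big(\{\tilde w > 1/n\},x\big) > 0,
\]
which forces $x \in U$, contradicting the choice $x \in \Omega \setminus U$. Hence $U = \Omega$ and $\tilde w$ is strictly positive.

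There is no real obstacle here beyond noting that lower semicontinuity is exactly what makes $U$ open and hence eligible for the topological irreducibility hypothesis; the only subtlety is that $\tilde w$ need not be continuous, so one has to decompose $U$ into the countable union $\bigcup_n \{\tilde w > 1/n\}$ rather than work with a single preimage, just as in Theorem \ref{re:irred-pos}.
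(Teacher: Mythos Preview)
Your proof is correct and matches the paper's own argument essentially line for line: both open $U=\{\tilde w>0\}$ via lower semicontinuity, pick $x\in\Omega\setminus U$ with $\kappa(U,x)>0$ by topological irreducibility, pass to some level set $\{\tilde w>1/n\}$ of positive $\kappa(\cdot,x)$-measure, and derive $\tilde w(x)\ge f(1/n)\,\kappa(\{\tilde w>1/n\},x)>0$ for the contradiction.
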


\begin{proof}
Since $\tilde w$ is lower semicontinuous  and not
the zero function,
the set $U =\{\tilde w > 0\}$ is an open nonempty
set \cite[A.51]{Thi24}. If the assertion is false, $U$ is a strict
subset of $\Omega$. Since $\kappa$ is topologically
irreducible, there exists some $x \in \Omega \setminus U$ such that $\kappa(U,x) > 0$.
Since $\kappa (\cdot,x)$ is a measure,
there is some $n \in \N$ such that
$\kappa(\{w > 1/n\}, x ) >0$.
Then
\[
w(x) \ge  f(1/n) \kappa(\{w> 1/n\}, x)> 0,
\]
and $x \in U$, a contradiction.
\end{proof}


\subsection{Tightness}


Recall Definition \ref{def:Feller} and \ref{def:tight}.

For instance, the Feller kernel $\kappa$
in (\ref{eq:meas-ker-susc}) is tight if $S_0$
is tight and $k$ is bounded on $\Omega^2$.
Recall that every finite nonnegative measure
on $\cB$ is tight  if $\Omega$ is a Polish
space, i.e., $\Omega$ is separable and
complete under a metric that is topologically
equivalent to the original one.

\begin{proof}[Proof of Theorem \ref{re:eigenmeasure-intro}]
 Existence of the tight eigenmeasure $\mu$
follows from \cite[Thm.13.39]{Thi24}
and \cite[Thm.13.42]{Thi24}.

In addition, let $\kappa$ be topologically irreducible. Apply \cite[Cor.13.60]{Thi24} and Lemma \ref{re:top-pos}.
\end{proof}



\subsection{Strong Feller kernels}
\label{subsec:Feller-strong}

While Theorem \ref{re:tight-irred-pandemic}
in conjunction with Theorem \ref{re:epi-occur} (a)
gives a good display of the threshold properties
of $\cR_0$,  the relation
of minimal solutions of $w = F(w) + w^\circ$
to fixed points $\tilde w = F(\tilde w)$
may be informative. To this end, we strengthen
the concept of a Feller kernel (Definition \ref{def:strong-Fel}).

\begin{proposition}
\label{re:kappa-lower-sc}
Let $\kappa$ be a measure kernel.
Assume that $\kappa(\Omega,\cdot)$
is continuous on $\Omega$ and that
$\kappa(\omega,\cdot)$ is lower semicontinuous for any
$\omega \in \cB$.

Then $\kappa$ is a strong Feller kernel.
\end{proposition}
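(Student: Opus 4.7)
The plan is to exploit the identity $\kappa(\omega,x)=\kappa(\Omega,x)-\kappa(\Omega\setminus\omega,x)$, which is legitimate because $\kappa(\cdot,x)$ is a finite nonnegative measure for each $x\in\Omega$: indeed, the measure kernel condition (\ref{eq:kernel-meas-intro}) includes $\kappa(\Omega,\cdot)\in M^b_+(\Omega)$, so $\kappa(\Omega,x)<\infty$. The strategy is to pass from lower semicontinuity of $\kappa(\omega,\cdot)$ to continuity by showing upper semicontinuity is automatic once one applies the hypothesis to the complementary set.

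I would fix an arbitrary $\omega\in\cB$. First, observe that $\Omega\setminus\omega\in\cB$, so by the standing hypothesis both $\kappa(\omega,\cdot)$ and $\kappa(\Omega\setminus\omega,\cdot)$ are lower semicontinuous on $\Omega$. Next, write
\[
\kappa(\omega,x)=\kappa(\Omega,x)-\kappa(\Omega\setminus\omega,x),\qquad x\in\Omega.
\]
The function $\kappa(\Omega,\cdot)$ is continuous by assumption, and $-\kappa(\Omega\setminus\omega,\cdot)$ is upper semicontinuous because the negative of a lower semicontinuous function is upper semicontinuous (cf.\ \cite[Rem.A.48]{Thi24}). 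A sum of a continuous function and an upper semicontinuous function is upper semicontinuous, so $\kappa(\omega,\cdot)$ is upper semicontinuous on $\Omega$.

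Combining upper and lower semicontinuity yields continuity of $\kappa(\omega,\cdot)$ on $\Omega$. Since $\omega\in\cB$ was arbitrary, $\kappa$ is a strong Feller kernel in the sense of Definition~\ref{def:strong-Fel}. The main (and minor) subtlety is just to justify the finite subtraction in the identity above by quoting the boundedness of $\kappa(\Omega,\cdot)$ built into the definition of a measure kernel; beyond that, there is no genuine obstacle.
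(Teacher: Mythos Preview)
Your proof is correct and follows essentially the same approach as the paper's own proof: both use the complementary decomposition $\kappa(\omega,\cdot)=\kappa(\Omega,\cdot)-\kappa(\Omega\setminus\omega,\cdot)$ together with the lower semicontinuity of $\kappa(\Omega\setminus\omega,\cdot)$ and the continuity of $\kappa(\Omega,\cdot)$ to deduce upper semicontinuity of $\kappa(\omega,\cdot)$. Your explicit remark that finiteness of $\kappa(\Omega,x)$ legitimizes the subtraction is a small but welcome clarification that the paper leaves implicit.
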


\begin{proof}
Let $\omega \in \cB$. Then $\kappa(\omega, \cdot)$
is lower semicontinuous. Further, $\Omega \setminus \omega \in \cB$.
 Then $\kappa(\Omega \setminus \omega, \cdot )$ is lower semicontinuous,
\[
\kappa(\Omega \setminus \omega, \cdot)=
\kappa (\Omega, \cdot) - \kappa(\omega, \cdot).
\]
Since $\kappa(\Omega, \cdot)$ is continuous, $\kappa (\omega, \cdot)$ is upper semicontinuous \cite[Rem.A.48]{Thi24}. Since $\kappa(\omega, \cdot)$ is both lower and upper semicontinuous,
it is continuous. Use Lemma A.49 and A.50
in \cite{Thi24}.
\end{proof}

 In the framework of the model in
Section \ref{sec:model} it is difficult to
find an example of a Feller kernel that is not
a strong Feller kernel. See Proposition
\ref{re:dominated-Feller-strong}.

In the context of \cite{Thi79}, it is easy
to give an example of Feller kernel that is not
a strong Feller kernel like
$\kappa(\omega,x) = \chi_\omega (x)$ where $\chi_\omega$ is the characteristic or indicator function
of $\omega$, $\chi_\omega (x) =1$ if $x \in \omega$
and 0 otherwise.

\begin{lemma}
\label{re:Feller-strong}
If $\kappa$ is a strong Feller kernel, then
$K$ and $F$ map $M^b_+(\Omega)$ into $C^b_+(\Omega)$.
\end{lemma}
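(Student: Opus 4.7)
The plan is to argue by approximation: first verify continuity on indicator functions, extend linearly to simple functions, and then use the uniform approximation of bounded measurable functions by simple functions given in Proposition \ref{re:integr-nonneg-measur-integrable}.

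First I would observe that for any $\omega \in \cB$, $(K\chi_\omega)(x) = \int_\Omega \chi_\omega(\xi)\,\kappa(d\xi,x) = \kappa(\omega,x)$, which is continuous on $\Omega$ by the definition of a strong Feller kernel. By linearity of $K$, it follows that $Kg$ is continuous whenever $g$ is a nonnegative simple function. Note also that the bound $\|Kg\|_\infty \le \|g\|_\infty\,\|\kappa(\Omega,\cdot)\|_\infty$ shows that $K$ is a bounded linear operator on $M^b(\Omega)$ (the function $\kappa(\Omega,\cdot)$ lies in $M^b_+(\Omega)$ by (\ref{eq:kernel-meas-intro})).

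Now let $g \in M^b_+(\Omega)$. By Proposition \ref{re:integr-nonneg-measur-integrable}, there exists a sequence $(g_n)$ of nonnegative simple functions with $g_n \to g$ uniformly on $\Omega$. By the operator bound above, $Kg_n \to Kg$ uniformly on $\Omega$. Since each $Kg_n$ is continuous and $C^b(\Omega)$ is closed under uniform limits, $Kg \in C^b_+(\Omega)$. Positivity is immediate since $\kappa(\cdot,x)$ is a nonnegative measure.

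For the nonlinear operator $F$, observe that if $w \in M^b_+(\Omega)$, then $f \circ w$ is again in $M^b_+(\Omega)$ (indeed $0 \le f\circ w \le 1$, and measurability is preserved since $f$ is continuous), and $F(w) = K(f \circ w)$. Applying the result just established to $f \circ w$ gives $F(w) \in C^b_+(\Omega)$. No step presents a real obstacle; the only thing to verify carefully is that Proposition \ref{re:integr-nonneg-measur-integrable} yields \emph{uniform} (not merely pointwise) approximation for bounded $g$, which is exactly what is stated there.
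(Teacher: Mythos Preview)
Your proof is correct and follows essentially the same approach as the paper: both arguments use Proposition \ref{re:integr-nonneg-measur-integrable} to approximate bounded measurable functions uniformly by simple functions, then appeal to the strong Feller property on characteristic functions and pass to the uniform limit. Your write-up simply spells out the details (the operator bound, the identification $F(w)=K(f\circ w)$) that the paper's one-sentence proof leaves implicit.
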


\begin{proof}
Let $\kappa$ be a strong Feller kernel.
Since bounded measurable functions
are uniform limits of linear combinations
of characteristic functions (Proposition
\ref{re:integr-nonneg-measur-integrable}),
the operators $K$ and $F$ induced by a strong
Feller kernel map $M^b_+(\Omega)$ into $C^b_+(\Omega)$.
\end{proof}

In Section \ref{sec:intermed-thres}, we considered dominated measure kernels.
For perspective, we mention the following result.

\begin{proposition}
\label{re:dominated-Feller-strong}
Let $\kappa$ be a Feller kernel that is dominated
by some $\nu \in \cM_+(\Omega)$.
Then $\kappa$ is a strong Feller kernel.
Actually,
\[
K g \in C^b(\Omega), \qquad g \in L^1(\Omega, \nu).
\]
\end{proposition}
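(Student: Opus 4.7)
The plan is to leverage the density of $C^b(\Omega)$ in $L^1(\Omega,\nu)$ furnished by Remark \ref{re:regular-measure} (applicable because $\nu \in \cM_+(\Omega)$ is a finite nonnegative Borel measure on the metric space $\Omega$) together with the domination bound $\kappa(\cdot,x)\le\nu$ in order to transfer continuity of $Kg_n$ (for continuous bounded $g_n$) to continuity of $Kg$ in the uniform norm.

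First I would extend the action of $K$ to $L^1(\Omega,\nu)$. For $g\in L^1(\Omega,\nu)$ and $x\in\Omega$, set $(Kg)(x)=\int_\Omega g(\xi)\,\kappa(d\xi,x)$. The integral is well-defined because $\kappa(\cdot,x)$ is absolutely continuous with respect to $\nu$ (by domination), so $g$ is $\kappa(\cdot,x)$-integrable, and one has the basic estimate $|(Kg)(x)|\le \int_\Omega|g|\,d\kappa(\cdot,x)\le\int_\Omega|g|\,d\nu=\|g\|_{L^1(\nu)}$ uniformly in $x$.

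Next, using Remark \ref{re:regular-measure}, pick a sequence $(g_n)$ in $C^b(\Omega)$ with $g_n\to g$ in $L^1(\Omega,\nu)$. Since $\kappa$ is a Feller kernel, each $Kg_n$ belongs to $C^b(\Omega)$. The same estimate, applied to the difference $g-g_n$, gives $\|Kg-Kg_n\|_\infty\le\|g-g_n\|_{L^1(\nu)}\to 0$. Hence $Kg$ is a uniform limit of continuous bounded functions on $\Omega$, so $Kg\in C^b(\Omega)$, proving the second (stronger) assertion.

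Finally, to obtain the strong Feller property itself: because $\nu$ is finite, $\chi_\omega\in L^1(\Omega,\nu)$ for every $\omega\in\cB$, and $K\chi_\omega=\kappa(\omega,\cdot)$; applying the previous step, $\kappa(\omega,\cdot)\in C^b(\Omega)$ for every $\omega\in\cB$, which is the definition of strong Feller (Definition \ref{def:strong-Fel}). There is essentially no serious obstacle here; the only points requiring a sentence of care are verifying that $\cM_+(\Omega)$ consists of finite measures (which is clear since $\cM(\Omega)$ is defined as a vector space of real-valued measures) and that Remark \ref{re:regular-measure} delivers density of $C^b(\Omega)$ in $L^1(\Omega,\nu)$ in the form used above.
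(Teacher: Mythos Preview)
Your proof is correct and follows essentially the same approach as the paper: approximate $g\in L^1(\Omega,\nu)$ by $g_n\in C^b(\Omega)$ (Remark \ref{re:regular-measure}), use the domination $\kappa(\cdot,x)\le\nu$ to obtain $\|Kg-Kg_n\|_\infty\le\|g-g_n\|_{L^1(\nu)}$, and pass to the uniform limit. The paper inserts an extra step, invoking the Radon--Nikodym theorem to produce a density $k_x\le 1$ and then writing $\int_\Omega |g_n-g|\,k_x\,d\nu\le\|g_n-g\|_1$; your direct use of the set-wise inequality $\kappa(\omega,x)\le\nu(\omega)$ to get $\int|h|\,d\kappa(\cdot,x)\le\int|h|\,d\nu$ achieves the same bound without that detour, and you make explicit the final application to $\chi_\omega$ that the paper leaves implicit.
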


\begin{proof}
Since the Feller kernel $\kappa$ is dominated
by the measure $\nu$, for each $x \in \Omega$,
the measure $\kappa(\cdot,x)$ is absolutly
continuous with respect to $\nu$.
By the Radon-Nikodym theorem, there exists
some $k_x \in L^1_+(\Omega,\nu)$ such that
\[
\nu(\omega) \ge \kappa(\omega, x) = \int_\omega k_x(\xi) \nu (d\xi), \qquad \omega \in \cB, \quad x \in \Omega.
\]
Hence, $k_x(\xi) \le 1$ for all $\xi \in \Omega$.
By Remark \ref{re:regular-measure},
$C^b(\Omega)$ is dense in $L^1(\Omega, \nu)$.
Let $g \in L^1(\Omega, \nu)$. Then there exists
a sequence $(g_n)$ in $C^b(\Omega)$ such that
$\|g - g_n\|_1 \to 0$ as $n \to \infty$. For all
$x \in \Omega$,
\[
\big |K g_n(x) - Kg(x)\big|
\le
\int_\Omega |g_n(\xi) - g(\xi)| k_x(\xi) \, \nu(d \xi)
\le
\|g_n- g\|_1,
\]
and $Kg_n \to K g$ as $n \to \infty$ uniformly
on $\Omega$. Since $\kappa$ is a Feller kernel,
all $K g_n$ are continuous and so is their uniform limit $Kg$.
\end{proof}

\begin{theorem}
\label{re:tight-Feller-converge}
Let $\kappa$ be a tight strong Feller kernel.
Let $(w^\circ_\ell)$ be a decreasing sequence
 in $M^b_+(\Omega)$
that converges to 0 uniformly on all compact subsets  of $\Omega$.
For $\ell \in \N$, let
 $w_\ell$ be the minimal solution to
$w_\ell = F(w_\ell) + w^\circ_\ell$ in $M^b_+(\Omega)$.

Then there is some
$\tilde w \in C^b_+(\Omega)$ such that $\tilde w
= F(\tilde w)$
   and $w_\ell -w^\circ_\ell \to \tilde w$ uniformly on  $\Omega$,
\[
w_\ell \ge \tilde w + w^\circ_\ell, \qquad \ell \in \N.
\]
Further, the following holds:

 If $\cR_0 >1$ and $\kappa$ is topologically
irreducible and all functions $w^\circ_\ell$
are topologically positive, then $\tilde w $ is strictly positive
and
$\|\tilde w \|_\infty \ge \ln \cR_0$.

\end{theorem}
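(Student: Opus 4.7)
The plan is to first extract pointwise convergence from Theorem~\ref{re:decreasing-converge}, then promote this to uniform convergence on all of $\Omega$ by combining the strong Feller property (via Dini) with tightness of $\kappa$, and finally derive strict positivity and the bound $\|\tilde w\|_\infty \ge \ln \cR_0$ from the earlier positivity theorems.

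First I would note that since $(w^\circ_\ell)$ is decreasing with $w^\circ_\ell \to 0$ uniformly on compact subsets of $\Omega$, in particular $w^\circ_\ell \to 0$ pointwise. Theorem~\ref{re:decreasing-converge} then produces $\tilde w \in M^b_+(\Omega)$ with $\tilde w = F(\tilde w)$, and both $w_\ell \searrow \tilde w$ and $w_\ell - w^\circ_\ell \searrow \tilde w$ pointwise on $\Omega$. From $w_\ell - w^\circ_\ell \ge \tilde w$ one reads off the asserted inequality $w_\ell \ge \tilde w + w^\circ_\ell$, and since $\kappa$ is a strong Feller kernel Lemma~\ref{re:Feller-strong} gives $\tilde w = F(\tilde w) \in C^b_+(\Omega)$.

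The main work is the uniform convergence on $\Omega$. Since $w_\ell - w^\circ_\ell = F(w_\ell) \in C^b_+(\Omega)$, the sequence $(F(w_\ell))$ is a decreasing sequence of continuous functions converging pointwise to the continuous limit $\tilde w$. Dini's theorem therefore gives $F(w_\ell) \to \tilde w$ uniformly on every compact $W \subset \Omega$, hence $w_\ell = F(w_\ell) + w^\circ_\ell \to \tilde w$ uniformly on $W$ by the hypothesis on $(w^\circ_\ell)$. Since $f$ is $1$-Lipschitz (because $f'(y)=e^{-y}\le 1$), also $f\circ w_\ell \to f\circ \tilde w$ uniformly on $W$. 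Given $\epsilon>0$, by tightness of $\kappa$ I would pick a compact $W$ with $\kappa(\Omega\setminus W, x) < \epsilon$ for all $x\in\Omega$ and split
\[
F(w_\ell)(x) - \tilde w(x) = \int_W \big(f\circ w_\ell - f\circ \tilde w\big)(\xi)\,\kappa(d\xi,x) + \int_{\Omega\setminus W}\big(f\circ w_\ell - f\circ \tilde w\big)(\xi)\,\kappa(d\xi,x).
\]
The first integral is bounded by $\|f\circ w_\ell - f\circ \tilde w\|_{L^\infty(W)}\,\|\kappa(\Omega,\cdot)\|_\infty$, which tends to $0$, and the second is at most $\kappa(\Omega\setminus W,x)\le \epsilon$ because $0\le f\le 1$ forces $|f\circ w_\ell - f\circ \tilde w|\le 1$. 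This delivers the required uniform convergence of $w_\ell - w^\circ_\ell$ to $\tilde w$ on all of $\Omega$.

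For the final assertion, assume $\cR_0 > 1$, $\kappa$ topologically irreducible, and each $w^\circ_\ell$ topologically positive. Theorem~\ref{re:eigenmeasure-intro} then provides a tight eigenmeasure $\mu$ representing $\theta$ with $\theta(g) = \int_\Omega g\,d\mu > 0$ for every topologically positive $g$; in particular $\theta(w^\circ_\ell) > 0$ for every $\ell$. Theorem~\ref{re:epi-occur}(b) gives $\|w_\ell - w^\circ_\ell\|_\infty \ge \ln \cR_0$ for every $\ell$, and passing to the uniform limit established above yields $\|\tilde w\|_\infty \ge \ln \cR_0 > 0$. Since $\tilde w \in C^b_+(\Omega)$ is nonzero, lower semicontinuous and satisfies $\tilde w = F(\tilde w)$, Theorem~\ref{re:irred-pos2} forces $\tilde w > 0$ on all of $\Omega$. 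The main obstacle is upgrading from compact-uniform to $\Omega$-uniform convergence: the strong Feller property is exactly what makes $F(w_\ell)$ continuous so that Dini applies, and tightness of $\kappa$ is exactly what controls the tail contribution; without either ingredient the argument would only close at the level of pointwise or compact-uniform convergence.
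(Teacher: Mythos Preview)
Your proof is correct and follows essentially the same approach as the paper: pointwise convergence from the monotone setup (you cite Theorem~\ref{re:decreasing-converge}, the paper unpacks it via Theorem~\ref{re:final-size-increase} and dominated convergence), continuity of $F(w_\ell)$ and $\tilde w$ from the strong Feller property, Dini on compacts, then tightness to control the tail of the integral, and finally Theorem~\ref{re:irred-pos2} for strict positivity and the bound $\|\tilde w\|_\infty \ge \ln \cR_0$ via $\|w_\ell - w^\circ_\ell\|_\infty \ge \ln \cR_0$ (you invoke Theorem~\ref{re:eigenmeasure-intro} plus Theorem~\ref{re:epi-occur}(b), the paper packages this as Theorem~\ref{re:tight-irred-pandemic}).
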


\begin{proof}
By Theorem \ref{re:final-size-increase},
$(w_\ell)$ and $(w_\ell - w^\circ_\ell)$
are decreasing sequences that converge
to some $\tilde w \in M^b_+(\Omega)$ pointwise
on $\Omega$. Since  $\kappa$ is a strong Feller kernel,
all $w_\ell - w^\circ_\ell = F(w_\ell)$ are continuous.
We have
$w_\ell \ge \tilde w + w^\circ_\ell$ for all
$\ell \in \N$. 
By Lebesgue's theorem of dominated
convergence, $\tilde w = F(\tilde w)$ and
$\tilde w$ is continuous because $\kappa$
is a strong Feller kernel.

  By Dini's lemma, $(w_\ell - w^\circ_\ell)$
converges to $\tilde w$ uniformly on every compact
subset $W$ of $\Omega$. Since $w^\circ_\ell
\to 0$ as $\ell \to \infty$ uniformly
on every compact subset $W$ of $\Omega$,
$w_\ell \to \tilde w$ as $\ell \to \infty$
uniformly on every compact subset $W$
of $\Omega$. Since $f$ is uniformly
continuous,
\begin{equation}
\label{eq:tight-str-Feller-converge}
\sup_w (f \circ w_\ell - f \circ \tilde w)
\to 0, \qquad \ell \to \infty
\end{equation}
for every compact subset $W$ of $\Omega$.

Let $\epsilon >0$.
Since $\kappa$ is tight, there exists some
compact subset $W$ of $\Omega$ such that
$\kappa(\Omega \setminus W, x) \le \epsilon$
for all $x \in \Omega$. For all $\ell \in \N$
and $x \in \Omega$,
\[
\begin{split}
 0 \le & w_\ell(x) - w^\circ_\ell(x) -\tilde w(x)
 \\
  \le & \int_W \big ( f(w_\ell(\xi))
  - f(\tilde w(\xi)) \big ) \kappa(d \xi,
x)
+
\int_{\Omega \setminus W} f(w_\ell(\xi)) \kappa(d \xi,
x) .
\end{split}
\]
By the properties of $f$,
\[
\|w_\ell -  w^\circ_\ell - \tilde w\|_\infty
\le
\sup_W (f\circ w_\ell - f \circ \tilde w) \; \sup_{x \in \Omega}\kappa(\Omega,x)
+
\sup_{x\in \Omega} \kappa(\Omega\setminus W, x).
\]
By (\ref{eq:tight-str-Feller-converge}),
\[
\limsup_{\ell \to \infty} \|w_\ell- w^\circ_\ell - \tilde w\|_\infty \le
\epsilon.
\]
Since this hold for any $\epsilon >0$, $\|w_\ell - w^\circ_\ell - \tilde w\|_\infty \to 0$ as $\ell \to \infty$.

 Assume that  $\cR_0 > 1$ and all
$w^\circ_\ell$ are topologically positive.
By Theorem \ref{re:tight-irred-pandemic}
\begin{equation}
\label{eq:tight-Feller-converge1}
\|\tilde w\|_\infty = \lim_{\ell \to \infty} \|w_\ell - w^\circ_\ell\|_\infty \ge \ln \cR_0, \qquad \ell
\in \N.
\end{equation}
Since $\tilde w$ is continuous and $\kappa$
is topologically irreducible, by Theorem \ref{re:irred-pos2}, $\tilde w$ is strictly positive on $\Omega$.
\end{proof}

\begin{corollary}
\label{re:metric-F-fixedpoint}
Let
$\kappa$ be a strong  Feller kernel that is tight and
topologically irreducible
   and let $\cR_0>1$.
Then there exists some strictly positive
$\tilde w \in C^b_+(\Omega)$
such that $\tilde w = F(\tilde w)$ and $\|\tilde w\|_\infty \ge \ln \cR_0$.
\end{corollary}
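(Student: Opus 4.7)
The plan is to reduce the corollary to a direct application of Theorem \ref{re:tight-Feller-converge} by choosing a suitable sequence $(w^\circ_\ell)_{\ell \in \N}$ of initial data. The natural candidate is $w^\circ_\ell = (1/\ell)\,u_1$, where $u_1 \equiv 1$ denotes the constant function $1$ on $\Omega$. This sequence lies in $C^b_+(\Omega) \subset M^b_+(\Omega)$, is pointwise decreasing in $\ell$, and converges to $0$ uniformly on all of $\Omega$ (and a fortiori on every compact subset of $\Omega$). Moreover, each $w^\circ_\ell$ is bounded below by $1/\ell > 0$ on all of $\Omega$, hence is topologically positive in the sense of Definition \ref{def:Feller}.

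For each $\ell \in \N$, let $w_\ell \in M^b_+(\Omega)$ be the minimal solution of $w_\ell = F(w_\ell) + w^\circ_\ell$, whose existence is guaranteed by the construction in Section \ref{sec:final-size} (the pointwise limit of the monotone recursion \eqref{eq:final-recursive-kernel}). With these choices, all hypotheses of Theorem \ref{re:tight-Feller-converge} are satisfied: $\kappa$ is a tight strong Feller kernel, $(w^\circ_\ell)$ is a decreasing sequence in $M^b_+(\Omega)$ converging uniformly to $0$ on compact subsets, $\cR_0 > 1$, $\kappa$ is topologically irreducible, and every $w^\circ_\ell$ is topologically positive.

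The theorem then supplies some $\tilde w \in C^b_+(\Omega)$ with $\tilde w = F(\tilde w)$ and $w_\ell - w^\circ_\ell \to \tilde w$ uniformly on $\Omega$, together with the extra conclusions that $\tilde w$ is strictly positive on $\Omega$ and $\|\tilde w\|_\infty \ge \ln \cR_0$. These are exactly the assertions of the corollary. There is no serious obstacle once Theorem \ref{re:tight-Feller-converge} is available; the entire content of the corollary lies in observing that the constant function $u_1$ yields a convenient decreasing approximation by topologically positive data to which the theorem can be applied without further effort.
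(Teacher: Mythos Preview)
Your proof is correct and is exactly the paper's approach: the paper's one-line proof reads ``Apply Theorem \ref{re:tight-Feller-converge} with $w^\circ_\ell = 1/\ell$,'' which is precisely your choice $w^\circ_\ell = (1/\ell)\,u_1$. You have simply spelled out the verification of the hypotheses that the paper leaves implicit.
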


\begin{proof}
Apply Theorem \ref{re:tight-Feller-converge} with
$w^\circ_\ell = 1/\ell$.
\end{proof}

\begin{corollary}
\label{re:metric-sandwich-single}
Let
$\kappa$ be a strong  Feller kernel that is tight
and
topologically irreducible
   and let $\cR_0>1$.

Let $w^\circ$ be a topologically positive function
in $ M^b_+(\Omega)$  and $w$ be the minimal
solution of $w =F(w) + w^\circ$.

Then there exist strictly positive $\tilde w \in C^b_+(\Omega)$ with $\|\tilde w\|_\infty \ge \ln \cR_0$
and
\[
\tilde w \le w  - w^\circ.
\]
\end{corollary}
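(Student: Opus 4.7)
The plan is to reduce the statement to the conclusion of Theorem \ref{re:tight-Feller-converge} by approximating the given $w^\circ$ from below by a decreasing sequence of topologically positive functions that converges to $0$ uniformly on compact sets. The fixed point $\tilde w$ produced by that theorem will automatically be continuous, strictly positive, and of norm at least $\ln \cR_0$; monotonicity of the final-size map (Theorem \ref{re:final-size-increase}) will upgrade the inequality $w_\ell \ge \tilde w + w^\circ_\ell$ supplied by the theorem to the desired bound $\tilde w \le w - w^\circ$.

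The first step is the construction of the approximating sequence. Since $w^\circ$ is topologically positive, there is some nonempty open $U \subset \Omega$ with $\inf_U w^\circ > 0$. Apply Lemma \ref{re:top-pos}(a) to obtain a Lipschitz continuous $\tilde w^\circ : \Omega \to \R_+$ with $0 \le \tilde w^\circ \le w^\circ$ on $\Omega$ and $\inf_U \tilde w^\circ > 0$. Set
\[
w^\circ_\ell = \tfrac{1}{\ell} \tilde w^\circ, \qquad \ell \in \N.
\]
Then $(w^\circ_\ell)$ is a decreasing sequence in $C^b_+(\Omega) \subset M^b_+(\Omega)$, each $w^\circ_\ell$ is topologically positive (witnessed by $U$), $w^\circ_\ell \le w^\circ$ on $\Omega$, and $\| w^\circ_\ell \|_\infty \le \|\tilde w^\circ\|_\infty / \ell \to 0$, so convergence to $0$ is uniform on all of $\Omega$, in particular on every compact subset.

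Next, apply Theorem \ref{re:tight-Feller-converge}. Letting $w_\ell$ be the minimal solution of $w_\ell = F(w_\ell) + w^\circ_\ell$, that theorem (using $\cR_0 > 1$, topological irreducibility, tightness, strong Feller, and topological positivity of the $w^\circ_\ell$) yields some $\tilde w \in C^b_+(\Omega)$ with $\tilde w = F(\tilde w)$, $\tilde w$ strictly positive on $\Omega$, $\|\tilde w\|_\infty \ge \ln \cR_0$, and $w_\ell - w^\circ_\ell \to \tilde w$ uniformly on $\Omega$ with $w_\ell \ge \tilde w + w^\circ_\ell$ for every $\ell$. Since $w^\circ_\ell \le w^\circ$ on $\Omega$, Theorem \ref{re:final-size-increase} gives $w_\ell - w^\circ_\ell \le w - w^\circ$ on $\Omega$ for every $\ell \in \N$. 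Passing to the limit yields $\tilde w \le w - w^\circ$, which completes the argument.

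The only genuine subtlety is the first step: one must arrange the $w^\circ_\ell$ to be simultaneously topologically positive (so that Theorem \ref{re:tight-Feller-converge}'s strong conclusions about $\tilde w$ apply), dominated by $w^\circ$ (so that monotonicity can transport the inequality), and decreasing to $0$ uniformly on compacta. Lemma \ref{re:top-pos} is exactly what makes this possible without assuming any continuity of $w^\circ$ itself, since it carves out of a merely measurable $w^\circ$ a Lipschitz lower envelope that inherits topological positivity. Once this continuous envelope is in hand, scaling by $1/\ell$ delivers all three properties simultaneously, and the rest is bookkeeping.
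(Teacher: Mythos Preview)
Your proof is correct, but the detour through Lemma \ref{re:top-pos} is unnecessary and the paper's argument is shorter. The paper simply takes $w^\circ_\ell = (1/\ell)\, w^\circ$. This already meets every hypothesis of Theorem \ref{re:tight-Feller-converge}: the sequence lies in $M^b_+(\Omega)$ (no continuity is required there), is decreasing, converges to $0$ uniformly on all of $\Omega$ because $w^\circ$ is bounded, and each term is topologically positive since $w^\circ$ is. The payoff is that $w^\circ_1 = w^\circ$, hence $w_1 = w$, and the inequality $w_\ell \ge \tilde w + w^\circ_\ell$ from Theorem \ref{re:tight-Feller-converge} at $\ell = 1$ is exactly $\tilde w \le w - w^\circ$; no appeal to Theorem \ref{re:final-size-increase} or a limit passage is needed.

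Your route---replacing $w^\circ$ by a Lipschitz minorant $\tilde w^\circ$ and then scaling---still works, but the ``genuine subtlety'' you flag is not actually present: nothing in Theorem \ref{re:tight-Feller-converge} asks for continuous $w^\circ_\ell$, and boundedness of $w^\circ$ already gives uniform convergence of $(1/\ell)w^\circ$ on every compact set (indeed on all of $\Omega$). What your version does buy is a demonstration that the conclusion is robust under replacing $w^\circ$ by any topologically positive minorant, but for the corollary as stated this is extra work.
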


\begin{proof}
Apply Theorem \ref{re:tight-Feller-converge}
with $w^\circ_\ell =(1/\ell) \ w^\circ$.
\end{proof}


\subsection{Comparability kernels}
\label{subsec:compar-ker}

In Corollary \ref{re:metric-sandwich-single},
the fixed point $\tilde w = F(\tilde w)$
may depend on $w^\circ$. The Corollary
would send a much stronger message if this
were not the case.
Further, we do not yet know what happens if $\cR_0=1$.
This leads to the
concept of a comparability kernel (Definition
\ref{def:compar-ker}). Cf. \cite[(K-2)]{Thi79}.

\begin{proposition}
\label{re:kernel-compar-semicon}
Let $\kappa$ be a comparability kernel
and $g \in M^b_+(\Omega)$ be strictly
positive and lower semicontinuous.
Then there exists some $\delta >0$ such
that
\[
\int_\Omega g(\xi) \kappa(d \xi,x) \ge \delta \, \kappa(\Omega,x), \qquad x \in \Omega.
\]
\end{proposition}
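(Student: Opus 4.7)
The plan is to reduce the lower semicontinuous case to the continuous case covered directly by Definition \ref{def:compar-ker}, using the Lipschitz minorant construction from Lemma \ref{re:top-pos}.

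First I would apply Lemma \ref{re:top-pos} to $g$. Since $g$ is bounded, nonnegative, and strictly positive on all of $\Omega$, part (a) yields a Lipschitz continuous function $\tilde g:\Omega \to \R_+$ with $0 \le \tilde g \le g$. Because $g$ is lower semicontinuous and $g(x)>0$ for every $x \in \Omega$, part (b) of the same lemma guarantees $\tilde g(x) > 0$ for every $x \in \Omega$. Thus $\tilde g$ is an element of $C^b(\Omega)$ taking values in $(0,\infty)$.

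Next I would invoke the comparability hypothesis. By Definition \ref{def:compar-ker} applied to the strictly positive continuous function $\tilde g$, there exists some $\delta > 0$ such that
\[
\int_\Omega \tilde g(\xi)\, \kappa(d\xi, x) \ge \delta\, \kappa(\Omega, x), \qquad x \in \Omega.
\]
Since $\tilde g \le g$ pointwise and $\kappa(\cdot, x)$ is a nonnegative measure for each $x$, monotonicity of the integral yields
\[
\int_\Omega g(\xi)\, \kappa(d\xi, x) \ge \int_\Omega \tilde g(\xi)\, \kappa(d\xi, x) \ge \delta\, \kappa(\Omega, x), \qquad x \in \Omega,
\]
which is exactly the assertion.

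There is no real obstacle here once Lemma \ref{re:top-pos} is in hand: the only point that needs care is verifying that strict positivity of $g$ together with lower semicontinuity upgrades the Lipschitz minorant from merely nonnegative to strictly positive everywhere, so that it qualifies as an input for the comparability condition. This is precisely what part (b) of Lemma \ref{re:top-pos} supplies, since the contrapositive of its proof uses lower semicontinuity of $g$ at a point where $\tilde g$ would vanish.
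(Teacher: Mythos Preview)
Your proof is correct and follows exactly the same line as the paper's: construct a strictly positive Lipschitz minorant $\tilde g \le g$ via Lemma \ref{re:top-pos}(b), apply the comparability condition to $\tilde g$, and use monotonicity of the integral. The paper's version is just a more compressed write-up of what you wrote.
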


\begin{proof}
Let $g \in M^b_+(\Omega)$ be strictly positive
and lower semicontinuous on $\Omega$.
By Lemma \ref{re:top-pos} (b), there exists
a strictly positive Lipschitz continuous $\tilde g \in M^b_+(\Omega)$ such that $0 \le \tilde g \le g$.

Since $\kappa$ is a comparability kernel, there exists some $\delta > 0$ such that
\[
\delta \, \kappa(\Omega,x) \le \int_\Omega \tilde g(\xi) \kappa(d \xi,x) \le
\int_\Omega  g(\xi) \kappa(d \xi,x),
\qquad x \in \Omega. \qedhere
\]
\end{proof}

\begin{proposition}
\label{re:unique-prep}
Assume that the Feller kernel $\kappa$ is
topologically irreducible and a comparability
kernel and $w^\circ\in M_+^b(\Omega)$.
Then, there exists at most one non-zero
 continuous solution $\tilde w \in M^b_+(\Omega)$ to
\[
\tilde w(x) = \int_\Omega f \big (\tilde w(\xi) +w^\circ(\xi)\big) \kappa(d \xi,x), \qquad x \in \Omega.
\]
\end{proposition}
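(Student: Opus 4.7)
The plan is to adapt the Krasnoselskii strict-concavity argument of Proposition \ref{re:unforced-unique}, with the semi-separability bound (\ref{eq:separ-compare}) replaced by the comparability property supplied by Proposition \ref{re:kernel-compar-semicon}. Let $\tilde w_1, \tilde w_2 \in C^b_+(\Omega)$ be two nonzero solutions of the equation. First I would show each $\tilde w_i$ is strictly positive on $\Omega$ and sandwiched by positive multiples of $\kappa(\Omega,\cdot)$. Since $w^\circ \ge 0$ and $f$ is increasing, $\tilde w_i \ge F(\tilde w_i)$, so Theorem \ref{re:irred-pos2} (applicable because $\tilde w_i$ is continuous, hence lower semicontinuous, and nonzero) forces $\tilde w_i > 0$ everywhere. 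Then $f \circ \tilde w_i \in C^b_+(\Omega)$ is continuous and strictly positive, so Proposition \ref{re:kernel-compar-semicon} yields $\delta_i > 0$ with $\tilde w_i(x) \ge \int_\Omega f(\tilde w_i(\xi))\,\kappa(d\xi,x) \ge \delta_i\,\kappa(\Omega,x)$; on the other side $\tilde w_i \le \kappa(\Omega,\cdot)$ because $f \le 1$. Hence $\tilde w_2 \ge \delta_2\,\kappa(\Omega,\cdot) \ge \delta_2\,\tilde w_1$, and
\[
\alpha := \sup\{s > 0 : s\,\tilde w_1 \le \tilde w_2 \text{ on } \Omega\}
\]
is finite and at least $\delta_2$, with $\alpha \tilde w_1 \le \tilde w_2$.

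Next I would suppose, aiming at a contradiction, that $\alpha < 1$, and produce a uniform improvement $\tilde w_2 \ge (\alpha + \varepsilon)\,\tilde w_1$. Because $(1-\alpha) w^\circ \ge 0$, one has $\alpha \tilde w_1 + w^\circ \ge \alpha(\tilde w_1 + w^\circ)$. The function $g(y) := f(\alpha y) - \alpha f(y)$ satisfies $g(0) = 0$, and a direct computation gives $g'(y) = \alpha(e^{-\alpha y} - e^{-y}) > 0$ for $y > 0$, so $g$ is strictly increasing on $\R_+$. Using monotonicity of $f$ together with this identity,
\[
f(\tilde w_2 + w^\circ) \ge f(\alpha(\tilde w_1 + w^\circ)) = \alpha f(\tilde w_1 + w^\circ) + g(\tilde w_1 + w^\circ) \ge \alpha f(\tilde w_1 + w^\circ) + g(\tilde w_1).
\]
Integrating against $\kappa(d\xi,x)$ and invoking the fixed-point equations yields $\tilde w_2(x) \ge \alpha\,\tilde w_1(x) + \int_\Omega g(\tilde w_1(\xi))\,\kappa(d\xi,x)$. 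Since $g \circ \tilde w_1$ is continuous and strictly positive, Proposition \ref{re:kernel-compar-semicon} delivers $\varepsilon > 0$ with $\int_\Omega g(\tilde w_1(\xi))\,\kappa(d\xi,x) \ge \varepsilon\,\kappa(\Omega,x) \ge \varepsilon\,\tilde w_1(x)$. Thus $\tilde w_2 \ge (\alpha + \varepsilon)\,\tilde w_1$, contradicting the maximality of $\alpha$. So $\alpha \ge 1$, i.e., $\tilde w_2 \ge \tilde w_1$, and by symmetry $\tilde w_1 = \tilde w_2$.

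The main obstacle is the continuity hypothesis in Proposition \ref{re:kernel-compar-semicon}: because $w^\circ$ is only measurable, the natural ``gap'' function $f(\tilde w_2 + w^\circ) - \alpha f(\tilde w_1 + w^\circ)$ is not known to be continuous or even lower semicontinuous. The key maneuver is therefore to first absorb $w^\circ$ into a monotone lower bound and then apply comparability to the continuous, strictly positive surrogate $g \circ \tilde w_1$, which inherits continuity directly from $\tilde w_1 \in C^b(\Omega)$.
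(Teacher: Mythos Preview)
Your proof is correct and follows the same Krasnosel'skii strict-concavity scheme as the paper: establish strict positivity of both solutions via Theorem \ref{re:irred-pos2}, sandwich them between positive multiples of $\kappa(\Omega,\cdot)$ using comparability, and then show that the maximal comparison ratio cannot lie in $(0,1)$.

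The one genuine difference is the point you flag yourself. The paper applies the comparability property directly to the function $\xi \mapsto f\big(t\tilde w_2(\xi)+w^\circ(\xi)\big) - t\,f\big(\tilde w_2(\xi)+w^\circ(\xi)\big)$ and asserts that it is continuous; this is only guaranteed when $w^\circ$ is continuous (as it indeed is in the sole application, Theorem \ref{re:unique}), whereas the statement of the proposition allows $w^\circ \in M^b_+(\Omega)$. Your maneuver of first using monotonicity of $g(y)=f(\alpha y)-\alpha f(y)$ to discard $w^\circ$ and then applying comparability to the genuinely continuous function $g\circ\tilde w_1$ closes this gap cleanly and makes the proof match the stated hypothesis without extra assumptions on $w^\circ$.
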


\begin{proof}
Assume that there are two, $\tilde w_1$ and $\tilde w_2$. Since $\kappa$ is topologically irreducible,
by Theorem \ref{re:irred-pos2},
both are strictly positive  and so are $f \circ \tilde w_i$, $i=1,2$. By Proposition \ref{re:kernel-compar-semicon},
there are $\delta_i > 0$ such that
\[
\tilde w_i(x) \ge \delta_i \kappa(\Omega,x), \qquad x \in \Omega.
\]
Since $\tilde w_i \le \kappa(\Omega,\cdot  )$,
$t = \inf_\Omega {\tilde w_1/\tilde w_2} > 0$ and
$\tilde w_1(x) \ge t \tilde w_2(x)$ for all $x \in \Omega$.
Suppose that $t \in (0,1)$.
Since $f$ is increasing,
\[
0\ge \tilde w_1(x) - t \tilde w_2(x)
\ge
\int_\Omega \Big [f\big (t \tilde w_2(\xi) +w^\circ(\xi) \big ) - t f(\tilde w_2(\xi)+w^\circ(\xi)\big)\Big ] \kappa(d \xi,x ).
\]
Since $\tilde w_2 $ is strictly positive and $t \in (0,1)$ and $f$ is increasing and strictly sublinear, for all $\xi \in \Omega$,
\[
\begin{split}
& f\big(t \tilde w_2(\xi) +w^\circ(\xi)\big) - t f\big (\tilde w_2(\xi) +w^\circ(\xi) \big)
\\
\ge &
f\big (t (\tilde w_2(\xi) +w^\circ(\xi))\big ) - t f\big (\tilde w_2(\xi) +w^\circ(\xi)\big)
>0.
\end{split}
\]
Since the left hand side of this inequality
is a continuous function of $\xi$, by
Proposition \ref{re:unique-prep}, there is some
$\delta > 0$ such that
\[
\tilde w_1(x) - t \tilde w_2(x) \ge \delta \kappa(\Omega, x)
\ge \delta \tilde w_2(x),
\qquad
x \in \Omega.
\]
So, $\tilde w_1(x) \ge (t+\delta)\tilde  w_2(x)$ for
all $x \in \Omega$, contradicting the definition
of $t$.

This proves that $\tilde w_1 \ge \tilde w_2$.
By symmetry, equality holds.
\end{proof}

\begin{theorem}
\label{re:unique}
Assume that the Feller kernel $\kappa$ is
topologically irreducible and a comparability
kernel and $w^\circ \in C_+^b(\Omega)$.
Then there exist at most one non-zero
solution $ w \in C^b_+(\Omega)$ to $w = F(w)
+w^\circ$.
\end{theorem}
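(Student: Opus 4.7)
The plan is to reduce the assertion to Proposition~\ref{re:unique-prep}, which already treats the shifted equation $\tilde w(x) = \int_\Omega f(\tilde w(\xi) + w^\circ(\xi))\,\kappa(d\xi,x)$. Given two non-zero solutions $w_1, w_2 \in C^b_+(\Omega)$ of $w = F(w) + w^\circ$, I would first set $\tilde w_i := w_i - w^\circ$ for $i = 1, 2$. Since $w_i = F(w_i) + w^\circ \ge w^\circ$ and both $w_i, w^\circ$ are continuous, each $\tilde w_i$ lies in $C^b_+(\Omega)$, and substituting $w_i = \tilde w_i + w^\circ$ into $\tilde w_i = F(w_i)$ produces exactly the equation treated in Proposition~\ref{re:unique-prep}.

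In the generic case that neither $\tilde w_1$ nor $\tilde w_2$ is the zero function, Proposition~\ref{re:unique-prep} applies directly and gives $\tilde w_1 = \tilde w_2$, hence $w_1 = w_2$. The main obstacle will be the boundary case where one of them, say $\tilde w_1$, is identically zero; then $w_1 = w^\circ$ and $F(w^\circ) = 0$, so $\int_\Omega f(w^\circ(\xi))\,\kappa(d\xi,x) = 0$ for every $x \in \Omega$. Because $f \circ w^\circ$ is non-negative, this forces $\kappa(\{w^\circ > 0\}, x) = 0$ for every $x \in \Omega$.

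To dispose of this case I would invoke topological irreducibility and then comparability. The set $U := \{w^\circ > 0\}$ is open and, since $w_1 = w^\circ$ is non-zero, non-empty. If $U$ were a proper subset of $\Omega$, topological irreducibility would produce some $x \in \Omega \setminus U$ with $\kappa(U, x) > 0$, contradicting the previous paragraph. Hence $w^\circ$ is strictly positive on $\Omega$, so $f \circ w^\circ$ is a strictly positive continuous function, and the comparability hypothesis then yields $\delta > 0$ with $F(w^\circ)(x) \ge \delta\,\kappa(\Omega, x)$ for all $x$. Together with $F(w^\circ) = 0$ this forces $\kappa(\Omega, \cdot) \equiv 0$, so $F \equiv 0$ and every solution of $w = F(w) + w^\circ$ equals $w^\circ$; in particular $w_1 = w_2 = w^\circ$ here as well. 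I expect the reduction to Proposition~\ref{re:unique-prep} to be essentially bookkeeping, with the degenerate case $F(w^\circ) = 0$ the only delicate point, and it collapses cleanly under the combined hypotheses.
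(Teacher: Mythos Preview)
Your proof is correct and follows essentially the same approach as the paper: reduce to Proposition~\ref{re:unique-prep} via the substitution $\tilde w = w - w^\circ$. The paper's proof is terser—it splits on whether $w^\circ = 0$ and, when $w^\circ \ne 0$, simply invokes Proposition~\ref{re:unique-prep} for $\tilde w$—whereas you make explicit the degenerate possibility $\tilde w_i = 0$ and show it collapses under topological irreducibility and comparability.
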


\begin{proof}
If $w^\circ$ is the zero function, the statement
directly follows from Proposition \ref{re:unique-prep}. If $w^\circ$ is not the
zero function, apply  Proposition \ref{re:unique-prep} to $\tilde w = w - w^\circ \ge 0$.
\end{proof}

\begin{theorem}
\label{re:epino}
Let
$\kappa$ be a  Feller kernel that is
topologically irreducible.
Assume that $\kappa$ is a comparability kernel and $\cR_0 \le 1$.

\begin{itemize}
\item[(a)] Then there exists no nonzero
  solution $\tilde w \in C^b_+(\Omega)$ to
the equation $\tilde w = F(\tilde w)$.

\item[(b)] If, in addition, $\kappa$ is
a strong Feller kernel,
there exists no nonzero
  solution $\tilde w \in M^b_+(\Omega)$ to
the equation $\tilde w = F(\tilde w)$.

\item[(c)] If $\kappa$ is a strong Feller
kernel and $(w^\circ_{\ell})_{\ell \in \N}$ is a  sequence
in $ M^b_+(\Omega)$
and $w^\circ_{\ell} \to 0$ as $\ell \to \infty$
pointwise
on $\Omega$, then $w_\ell \to 0$ as $\ell \to \infty$
pointwise on $\Omega$
for the minimal solutions $w_\ell$ of $w_\ell = F(w_\ell) + w^\circ_{\ell}$.

\end{itemize}
\end{theorem}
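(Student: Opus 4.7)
The plan for part (a) is to adapt the strict subhomogeneity argument from the proof of Proposition \ref{re:unforced-unique} (b), with the comparability kernel property (Proposition \ref{re:kernel-compar-semicon}) playing the role of semi-separability. Suppose, toward contradiction, that $\tilde w \in C^b_+(\Omega)$ is a nonzero fixed point of $F$. Continuity makes $\tilde w$ lower semicontinuous, so topological irreducibility and Theorem \ref{re:irred-pos2} force $\tilde w$ to be strictly positive on $\Omega$.

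Next, fix any $t \in (0,1)$ and consider
\[
h_t(\xi) = f(t \tilde w(\xi)) - t f(\tilde w(\xi)), \qquad \xi \in \Omega.
\]
Strict concavity of $f$ with $f(0)=0$ together with $\tilde w(\xi) > 0$ makes $h_t$ continuous and strictly positive on $\Omega$, so Proposition \ref{re:kernel-compar-semicon} supplies some $\delta_t > 0$ with $\int_\Omega h_t(\xi)\kappa(d\xi,x) \ge \delta_t\, \kappa(\Omega, x)$. Since $f(r) \le r$ yields $\tilde w = F(\tilde w) \le K\tilde w \le \|\tilde w\|_\infty\,\kappa(\Omega,\cdot)$, we may replace $\kappa(\Omega,x)$ on the right by $\tilde w(x)/\|\tilde w\|_\infty$. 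Combining with $F(t\tilde w) = tF(\tilde w) + (F(t\tilde w)-tF(\tilde w))$ gives $F(t\tilde w) \ge (1+\eta_t)\, t\tilde w$ for some $\eta_t > 0$. By (\ref{eq:order-der}), $K(t\tilde w) \ge F(t\tilde w)$, so $K\tilde w \ge (1+\eta_t)\,\tilde w$; the spectral-radius estimate cited in the proof of Proposition \ref{re:unforced-unique} (b) then forces $\cR_0 \ge 1+\eta_t > 1$, contradicting $\cR_0 \le 1$.

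Part (b) follows at once from (a): since $\kappa$ is now a strong Feller kernel, any $\tilde w \in M^b_+(\Omega)$ with $\tilde w = F(\tilde w)$ actually lies in $C^b_+(\Omega)$ by Lemma \ref{re:Feller-strong}, so (a) applies. For (c), Theorem \ref{re:final-size-converge} produces a fixed point $\hat w \in M^b_+(\Omega)$ of $F$ with $\limsup_{\ell\to\infty}w_\ell(x) \le \hat w(x)$ on $\Omega$; by (b) the only such $\hat w$ is the zero function, yielding pointwise convergence $w_\ell \to 0$ (alternatively, apply Corollary \ref{re:final-size-converge-zero} directly). The main obstacle is the strict subhomogeneity step in the second paragraph: unlike in the semi-separable case, Proposition \ref{re:kernel-compar-semicon} requires the test function $h_t$ to be lower semicontinuous and strictly positive on all of $\Omega$, which is precisely why (a) is stated for \emph{continuous} fixed points and why the detour through the strong Feller property in (b) is needed to cover merely measurable candidates.
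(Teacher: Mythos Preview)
Your proof is correct and follows essentially the same approach as the paper's own proof: both argue by contradiction, use Theorem \ref{re:irred-pos2} to obtain strict positivity of a putative nonzero continuous fixed point, exploit the strict concavity of $f$ to show that the continuous function $f(t\tilde w)-tf(\tilde w)$ is strictly positive, invoke the comparability property to bound $F(t\tilde w)-tF(\tilde w)$ below by a multiple of $\kappa(\Omega,\cdot)\ge \tilde w$, and conclude $K\tilde w\ge(1+\eta)\tilde w$, contradicting $\cR_0\le 1$; parts (b) and (c) are then derived exactly as in the paper via Lemma \ref{re:Feller-strong} and Corollary \ref{re:final-size-converge-zero}. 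The only cosmetic differences are that the paper fixes $t=1/2$ and uses the slightly sharper bound $\tilde w=F(\tilde w)\le\kappa(\Omega,\cdot)$ (from $f\le 1$) rather than your $\tilde w\le\|\tilde w\|_\infty\,\kappa(\Omega,\cdot)$.
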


\begin{proof}
(a) Assume that such a solution $\tilde w $ exists.
Since $\kappa$ is topologically irreducible,
$w$ is strictly positive on $\Omega$ by Theorem
\ref{re:irred-pos2}.
Let $t =1/2$. Then $f(t\tilde w(x) ) - t f(\tilde w(x)) > 0$ for all $x \in \Omega$ and this difference is a continuous function of $x \in \Omega$. Since
$\kappa$ is a comparability kernel, there is some
$\delta >0$ such that
\[
\begin{split}
\delta \kappa(\Omega, x)
\le &
\int_\Omega [f(t\tilde w(\xi) ) - t f(\tilde w(\xi))] \kappa(d \xi, x )
\\
= &
F (t \tilde w)(x) - t F(\tilde w)(x), \qquad
x \in \Omega.
\end{split}
\]
So there exists some $\epsilon > 0 $ such that
\[
F(t \tilde w) \ge (1+\epsilon) t F(\tilde w )
= (1+\epsilon) t \tilde w.
\]
By (\ref{eq:order-der}),
$
K \tilde w \ge (1+\epsilon) \tilde w.
$
This implies $\cR_0 = \br(K) \ge (1+\epsilon)$
\cite[Thm.3.1]{Thi17}\cite[Thm5.31]{Thi24},
a contradiction.

(b) If $F(\tilde w) =w \in M^b_+(\Omega)$,
and $\kappa$ is a strong Feller kernel,
then $\tilde w$ is continuous.

(c) This follows from part (b)
and Corollary \ref{re:final-size-converge-zero}.
\end{proof}

\begin{theorem}
\label{re:ultimate}
Let
$\kappa$ be a strong  Feller kernel that is tight and
topologically irreducible.
Assume that $\kappa$ is a comparability kernel and $\cR_0>1$.

\begin{itemize}

\item[(a)] Then there exists a unique
nonzero solution $\tilde w \in  M^b_+(\Omega)$ to
the equation $\tilde w = F(\tilde w)$; $\tilde w$
is continuous, strictly positive and $ \|\tilde w \| \ge \ln \cR_0$.

\item[(b)] For any topologically positive $w^\circ$ in $ M^b(\Omega)$ and the minimal solution $w \in M^b_+(\Omega)$
of $w = F(w) + w^\circ$, we have $w -w^\circ\ge \tilde w$
with the unique $\tilde w$ from (b).

\item[(c)] Finally, if $(w^\circ_{\ell})_{\ell \in \N}$ is a  decreasing sequence
of topologically positive functions in $ M^b_+(\Omega)$ with  $w^\circ_{\ell} \to 0$ as $\ell \to \infty$
uniformly on all compact subsets
of $\Omega$, then $w_\ell -w^\circ_\ell \to \tilde w$ uniformly on $\Omega$
for the solutions $w_\ell$ of $w_\ell = F(w_\ell) + w^\circ_{\ell}$.
\end{itemize}
\end{theorem}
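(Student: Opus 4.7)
The plan is to assemble the earlier results in Section \ref{sec:metric} in order, using uniqueness of a nonzero fixed point to pin down the limit in each of (b) and (c).

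For (a), I would obtain existence from Corollary \ref{re:metric-F-fixedpoint}: under the standing hypotheses (strong Feller, tight, topologically irreducible, $\cR_0>1$), that corollary produces a strictly positive $\tilde w \in C^b_+(\Omega)$ with $\tilde w = F(\tilde w)$ and $\|\tilde w\|_\infty \ge \ln \cR_0$. For uniqueness, let $\tilde w' \in M^b_+(\Omega)$ be any nonzero fixed point of $F$. Because $\kappa$ is a strong Feller kernel, Lemma \ref{re:Feller-strong} forces $\tilde w' = F(\tilde w') \in C^b_+(\Omega)$, so the question reduces to uniqueness of nonzero continuous solutions. I would then apply Theorem \ref{re:unique} with $w^\circ = 0$ (which lies in $C^b_+(\Omega)$), concluding $\tilde w' = \tilde w$.

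For (b), I would apply Corollary \ref{re:metric-sandwich-single} to the given topologically positive $w^\circ$ and the associated minimal solution $w$. That corollary (whose proof invokes Theorem \ref{re:tight-Feller-converge} with $w^\circ_\ell = w^\circ/\ell$) furnishes a strictly positive $\tilde w^* \in C^b_+(\Omega)$ satisfying $\tilde w^* = F(\tilde w^*)$, $\|\tilde w^*\|_\infty \ge \ln \cR_0$, and $\tilde w^* \le w - w^\circ$. Since $\tilde w^*$ is a nonzero fixed point of $F$, uniqueness from part (a) identifies $\tilde w^* = \tilde w$, and the inequality $w - w^\circ \ge \tilde w$ follows.

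For (c), Theorem \ref{re:tight-Feller-converge} applies directly to the decreasing sequence $(w^\circ_\ell)$ of topologically positive functions converging to $0$ uniformly on compact sets: it yields some $\hat w \in C^b_+(\Omega)$ with $\hat w = F(\hat w)$ and $w_\ell - w^\circ_\ell \to \hat w$ uniformly on $\Omega$. The last clause of that theorem, applicable since $\cR_0 > 1$, $\kappa$ is topologically irreducible, and the $w^\circ_\ell$ are topologically positive, gives $\|\hat w\|_\infty \ge \ln \cR_0 > 0$, so $\hat w$ is a nonzero fixed point of $F$. Uniqueness from (a) then forces $\hat w = \tilde w$, yielding the claimed uniform convergence $w_\ell - w^\circ_\ell \to \tilde w$.

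The only subtle step I anticipate is the uniqueness argument in (a): Theorem \ref{re:unique} and Proposition \ref{re:unique-prep} assert uniqueness only among continuous solutions, so the argument must first promote any measurable nonzero fixed point to a continuous one. The strong Feller hypothesis makes this automatic via Lemma \ref{re:Feller-strong}, and thereafter the proof is a clean chaining of Corollary \ref{re:metric-F-fixedpoint}, Corollary \ref{re:metric-sandwich-single}, and Theorem \ref{re:tight-Feller-converge}.
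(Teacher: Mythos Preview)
Your proposal is correct and follows essentially the same route as the paper: existence from Corollary \ref{re:metric-F-fixedpoint}, uniqueness by promoting a measurable fixed point to a continuous one via the strong Feller property and then invoking Theorem \ref{re:unique}, part (b) via Corollary \ref{re:metric-sandwich-single} plus uniqueness, and part (c) via Theorem \ref{re:tight-Feller-converge} plus uniqueness. Your write-up is in fact a bit more explicit than the paper's, spelling out why the limit fixed point in (c) is nonzero and why Corollary \ref{re:metric-sandwich-single} actually produces a fixed point of $F$.
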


\begin{proof}
 (a) Since $\kappa$ is a strong
 Feller kernel, any solution $\tilde w =F(\tilde w)$ in $M^b_+(\Omega)$ is continuous.
 Uniqueness now follows from Theorem \ref{re:unique},
existence from
Corollary \ref{re:metric-F-fixedpoint}.

(b) This follows from part (a) and
Corollary \ref{re:metric-sandwich-single}.

(c) Combine Theorem \ref{re:tight-Feller-converge}
and parts (a) and (b).
\end{proof}


\subsubsection{Examples of comparability kernels}
\label{subsubsec:compar-ker-exp}

\begin{proposition}
\label{re:comparab1}
A Feller  $\kappa$ is a comparability  kernel if
there exist some compact subset $W$ of $\Omega$
and some $\delta > 0$ such that
\[
\kappa(W, x) \ge \delta \,\kappa(\Omega, x),
\qquad
x \in \Omega.
\]
\end{proposition}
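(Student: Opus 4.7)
The plan is direct: exploit compactness of $W$ to get a uniform positive lower bound for any given continuous strictly positive $g$, and then transfer that bound to $\kappa$ via the hypothesis. The key observation is that the requirement (\ref{eq:comparabilator-intro}) in Definition \ref{def:compar-ker} needs only to be checked against integrating $g$ over $\Omega$, so it suffices to lower bound the integral by looking at the piece over $W$.

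Concretely, fix an arbitrary continuous $g: \Omega \to (0,\infty)$. Its restriction $g|_W$ is continuous on the compact set $W$, so it attains a minimum $m := \min_{\xi \in W} g(\xi)$, which is strictly positive because $g > 0$ on $\Omega$. Using nonnegativity of $\kappa(\cdot, x)$, for every $x \in \Omega$,
\[
\int_\Omega g(\xi)\, \kappa(d\xi, x) \;\ge\; \int_W g(\xi)\, \kappa(d\xi, x) \;\ge\; m\,\kappa(W, x).
\]
Applying the hypothesis $\kappa(W, x) \ge \delta\, \kappa(\Omega, x)$ then yields
\[
\int_\Omega g(\xi)\, \kappa(d\xi, x) \;\ge\; m\delta\, \kappa(\Omega, x), \qquad x \in \Omega,
\]
so (\ref{eq:comparabilator-intro}) holds with the positive constant $m\delta$ playing the role of the $\delta$ in the definition. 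Since $g$ was arbitrary, $\kappa$ is a comparability kernel.

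There is essentially no obstacle here; the only subtlety is the choice of symbols ($\delta$ in the hypothesis vs. the $\delta$ produced by Definition \ref{def:compar-ker}), which I would rename to avoid collision. The Feller property itself is not actually used in the argument, it is only inherited as part of the definition of a comparability kernel.
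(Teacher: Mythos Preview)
Your proof is correct and essentially identical to the paper's: both restrict the integral to $W$, use compactness to get $\inf_W g > 0$, and then apply the hypothesis to obtain the constant $(\inf_W g)\,\delta$. Your remarks about the symbol collision and the unused Feller property are accurate side observations.
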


\begin{proof}
Let $g : \Omega \to (0,\infty)$ be continuous.
Let $W$ and $\delta >0$ as in the statement of the proposition. Since $W$ is compact, $\inf_W g> 0$.
Further, for all $x \in \Omega$,
\[
Kg (x) \ge \int_W g(\xi) \kappa(d \xi,x)
\ge
\inf_w g \;\kappa(W,x)
\ge
\inf_W g \;\delta \, \kappa(\Omega,x). \qedhere
\]
\end{proof}

\begin{proposition}
\label{re:comparab2}
Let $\kappa$ be a tight Feller kernel
and
\begin{equation}
\inf_{x \in \Omega} \kappa(\Omega, x) > 0.
\end{equation}
Then $\kappa$ is a comparability kernel.
\end{proposition}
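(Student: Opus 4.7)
The plan is to reduce to Proposition \ref{re:comparab1} by exhibiting a compact subset $W$ of $\Omega$ and a $\delta>0$ such that $\kappa(W,x)\ge\delta\,\kappa(\Omega,x)$ for every $x\in\Omega$. The two hypotheses are tailor-made for this: tightness provides the compact set that carries most of the mass of $\kappa(\cdot,x)$ uniformly in $x$, and the uniform positive lower bound on $\kappa(\Omega,x)$ guarantees that the leftover mass on $W$ is a nontrivial fraction of $\kappa(\Omega,x)$.

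Concretely, I would first set $c:=\inf_{x\in\Omega}\kappa(\Omega,x)>0$ and apply Definition \ref{def:tight} with $\epsilon=c/2$ to obtain a compact $W\subset\Omega$ with $\kappa(\Omega\setminus W,x)<c/2$ for all $x\in\Omega$. Subtracting from $\kappa(\Omega,x)\ge c$ gives
\[
\kappa(W,x)=\kappa(\Omega,x)-\kappa(\Omega\setminus W,x)\ge c-\tfrac{c}{2}=\tfrac{c}{2},\qquad x\in\Omega.
\]
Next, since $\kappa$ is a measure kernel, the function $\kappa(\Omega,\cdot)$ lies in $M^b_+(\Omega)$, so $M:=\sup_{x\in\Omega}\kappa(\Omega,x)<\infty$. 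If $M=0$ the claim is vacuous (any $\delta$ works), so assume $M>0$. Then
\[
\kappa(W,x)\ge\frac{c}{2}\ge\frac{c}{2M}\,\kappa(\Omega,x),\qquad x\in\Omega,
\]
which is exactly the hypothesis of Proposition \ref{re:comparab1} with constant $\delta=c/(2M)$. Applying that proposition finishes the argument.

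No real obstacle is expected; the only subtle point is recognising that one needs both the uniform tightness bound and the uniform positive lower bound on $\kappa(\Omega,\cdot)$ simultaneously in order to convert an additive statement ($\kappa(W,x)\ge c/2$) into the multiplicative form ($\kappa(W,x)\ge\delta\,\kappa(\Omega,x)$) required by Proposition \ref{re:comparab1}. Boundedness of $\kappa(\Omega,\cdot)$, built into the definition of a measure kernel via (\ref{eq:kernel-meas-intro}), is what supplies this conversion.
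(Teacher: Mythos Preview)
Your proposal is correct and follows essentially the same approach as the paper: set $c=\inf_x\kappa(\Omega,x)$, use tightness with $\epsilon=c/2$ to get a compact $W$ with $\kappa(W,x)\ge c/2$, and then invoke the boundedness of $\kappa(\Omega,\cdot)$ to pass to the multiplicative inequality required by Proposition~\ref{re:comparab1}. The paper's version is terser (it leaves the constant $c/(2M)$ implicit), but the argument is the same.
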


\begin{proof}
By assumption,
$\delta = \inf_{x \in \Omega} \kappa(\Omega, x) > 0$.
Since $\kappa$ is tight, there exists a
compact subset $W$ of $\Omega$ such that
$
\kappa(\Omega \setminus W,x ) \le \delta/2$
for $ x \in \Omega$.
Since $\kappa(\Omega, \cdot)$ is bounded on
$\Omega$, the assertion follows from
Proposition \ref{re:comparab1}.
\end{proof}

\begin{proposition}
\label{re:separ-envelop}
Let $n \in \N$ and $\nu_j, \mu_j \in \cM_+(\Omega)$ be nonzero measures and $k_j \in M^b_+(\Omega)$ be nonzero functions,
$j=1, \ldots, n$, such that
\begin{equation}
\label{eq:separ-envelop}
\sum_{i=1}^n \mu_i(\omega) k_i(x)\le \kappa(\omega, x ) \le \sum_{i=1}^n \nu_i(\omega) k_i(x),
\qquad
x \in \Omega, \quad \omega \in \cB.
\end{equation}
Then $\kappa$ is a  comparability kernel.
Further, $\kappa$ is tight if
all $\nu_i$ are tight measures, e.g, if $\Omega$
is a Polish space (complete and separable) .
\end{proposition}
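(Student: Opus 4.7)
The plan is to derive the comparability inequality (\ref{eq:comparabilator-intro}) directly from the sandwich (\ref{eq:separ-envelop}) and then handle tightness by a finite union of compact sets. The first step will be to upgrade (\ref{eq:separ-envelop}) from sets to nonnegative measurable functions. Approximating $g \ge 0$ by an increasing sequence of nonnegative simple functions via Proposition \ref{re:integr-nonneg-measur-integrable} and applying Beppo Levi's theorem on each side yields
\[
\sum_{i=1}^n k_i(x)\int_\Omega g\,d\mu_i \;\le\; \int_\Omega g(\xi)\,\kappa(d\xi,x) \;\le\; \sum_{i=1}^n k_i(x)\int_\Omega g\,d\nu_i, \qquad x\in\Omega.
\]
The choice $g\equiv 1$ in particular gives $\kappa(\Omega,x)\le \sum_i \nu_i(\Omega) k_i(x)$.

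Next, given a continuous $g:\Omega\to(0,\infty)$, I set $\alpha_i := \int_\Omega g\,d\mu_i$ and $\beta_i := \nu_i(\Omega)$. Each $\beta_i>0$ by the nonzero assumption on $\nu_i$. To see that $\alpha_i>0$, I would use that $\Omega = \bigcup_{m\in\N}\{g>1/m\}$ together with the nonzeroness of $\mu_i$; continuity of measure from below supplies some $m$ with $\mu_i(\{g>1/m\}) > 0$, whence $\alpha_i \ge (1/m)\,\mu_i(\{g>1/m\}) > 0$. With $\delta := \min_{i=1,\ldots,n} \alpha_i/\beta_i > 0$, combining the two envelope bounds gives
\[
\int_\Omega g(\xi)\,\kappa(d\xi,x) \;\ge\; \sum_{i=1}^n \alpha_i k_i(x) \;\ge\; \delta \sum_{i=1}^n \beta_i k_i(x) \;\ge\; \delta\,\kappa(\Omega,x),
\]
which is exactly (\ref{eq:comparabilator-intro}) and establishes comparability.

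For the tightness statement, I fix $\epsilon>0$ and use tightness of each $\nu_i$ to pick compact $W_i\subseteq\Omega$ with $\nu_i(\Omega\setminus W_i) < \epsilon/\bigl(n(\|k_i\|_\infty+1)\bigr)$. Then $W := \bigcup_{i=1}^n W_i$ is compact (finite union), and the upper envelope yields
\[
\kappa(\Omega\setminus W, x) \;\le\; \sum_{i=1}^n k_i(x)\,\nu_i(\Omega\setminus W_i) \;<\; \epsilon
\]
uniformly in $x\in\Omega$. The Polish-space case reduces to this one by invoking the standard fact (used in the paper's discussion after Definition \ref{def:tight}) that every finite Borel measure on a Polish space is tight.

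I expect no serious obstacle; the only non-mechanical ingredient is the upgrade from the set-level sandwich to the function-level sandwich, which is a routine monotone-convergence argument because $\mu_i$ and $\nu_i$ are honest finite measures. The conceptual point that makes the argument work is that the finite index $n$ converts the two-sided envelope into a uniform ratio $\min_i\alpha_i/\beta_i$, which is precisely what the definition of a comparability kernel requires.
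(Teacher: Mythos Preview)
Your proposal is correct and follows essentially the same route as the paper: both bound $\kappa(\Omega,x)$ above via $\sum_i \nu_i(\Omega)k_i(x)$, bound $\int g\,d\kappa(\cdot,x)$ below via $\sum_i(\int g\,d\mu_i)k_i(x)$, verify $\int g\,d\mu_i>0$ by the union $\Omega=\bigcup_m\{g>1/m\}$ and continuity of measure, and combine; your constant $\min_i \alpha_i/\beta_i$ is a slight sharpening of the paper's $(\min_i\alpha_i)/(\max_i\beta_i)$, and you additionally supply the short tightness argument that the paper leaves implicit.
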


Compare \cite[Exp.1.3b]{Thi79}. It follows that
every semi-separable Feller kernel, (\ref{eq:separ-intro}), is a
comparability kernel.

\begin{proof}
By (\ref{eq:separ-envelop}),
\[
\kappa(\Omega, x)
\le
\sum_{i=1}^n \nu_i(\Omega) k_i(x)
\le \sup_{i=1}^n \nu_i(\Omega) \sum_{i=1}^n k_i(x),
\qquad
x \in \Omega.
\]
Let $g \in M^b_+(\Omega)$ be strictly positive.
By (\ref{eq:separ-envelop}),
\[
\int_\Omega g(\xi) \kappa(d\xi,x)
\ge
\sum_{i=1}^n \Big (\int_\Omega g \, d \mu_i \Big )
k_i(x).
\]
Since $g$ is strictly positive on $\Omega$,
$
\Omega = \bigcup_{\ell \in \N} \{g \ge 1/\ell\}.
$
Since $\mu_i \in  \cM_+(\Omega)$,
\[
0 < \mu_i(\Omega) = \lim_{\ell \to \infty}
\mu_i \big (\{g \ge 1/\ell\}\big ).
\]
For sufficiently large $\ell \in \N$,
\[
\int_\Omega g d \mu_i \ge \frac{1}{\ell}
\mu_i \big (\{g \ge 1/\ell\}\big ) > 0.
\]
We combine these inequalities: For
all $x\in \Omega$,
\[
\int_\Omega g(\xi) \kappa(d \xi,x)
\ge
\inf_{i=1}^n \int_\Omega g d \mu_i
\sum_{i=1}^n k_i(x)
\ge
\frac{\displaystyle\inf_{i=1}^n \int_\Omega g d \mu_i}
{\displaystyle \sup_{i=1}^n \nu_i(\Omega)}
\kappa ( \Omega,x). \qedhere
\]
\end{proof}

\bigskip

\noindent
{\bf Acknowledgement.}
I thank Odo Diekmann, Hisashi Inaba and Jordi Ripoli for
valuable comments. In particular,  Odo Diekmann pointed out a missing assumption
in Remark \ref{re:kernel-final-susc-intro}.
See an earlier version of this paper that was
uploaded on arXiv.


\bibliography{}


\end{document}